\def\llncs{0}
\def\fullpage{1}
\def\anonymous{0}
\def\authnote{1}
\def\draft{0} 
\def\notxfont{0}
\def\submission{0} 
\def\anonymous{1}
\def\llncs{1}
\def\fullpage{0}
\definecolor{darkblue}{rgb}{0,0,0.6}
\definecolor{darkgreen}{rgb}{0,0.5,0}
\definecolor{maroon}{rgb}{0.5,0.1,0.1}
\definecolor{dpurple}{rgb}{0.2,0,0.65}
\DeclareMathAlphabet{\mathpzc}{OT1}{pzc}{m}{it}
    \definecolor{refkey}{cmyk}{0,0,0,.25}
    \definecolor{labelkey}{cmyk}{0,0,0,.7}
\newtheoremstyle{thicktheorem}%
{\topsep}
{\topsep}
{\itshape}{}%
{\bfseries}%
{.}
{ }%
{\thmname{#1}\thmnumber{ #2}%
		\thmnote{ (#3)}%
}
\newtheoremstyle{remark}
{\topsep}
{\topsep}
	{}
	{}
	{}
	{.}
	{ }
	{\textit{\thmname{#1}}\thmnumber{ #2}
			\thmnote{ (#3)}%
	}
	\theoremstyle{thicktheorem}
	\newtheorem{theorem}{Theorem}[section]
	\newtheorem{lemma}[theorem]{Lemma}
	\newtheorem{corollary}[theorem]{Corollary}
	\newtheorem{definition}[theorem]{Definition}
	\theoremstyle{remark}
	\newtheorem{claim}[theorem]{Claim}
	\newtheorem{remark}[theorem]{Remark}
\Crefname{MyClaim}{Claim}{Claims}
	\crefname{theorem}{Theorem}{Theorems}
	\crefname{assumption}{Assumption}{Assumptions}
	\crefname{construction}{Construction}{Constructions}
	\crefname{corollary}{Corollary}{Corollaries}
	\crefname{conjecture}{Conjecture}{Conjectures}
	\crefname{definition}{Definition}{Definitions}
	\crefname{exmaple}{Example}{Examples}
	\crefname{experiment}{Experiment}{Experiments}
	\crefname{counterexample}{Counterexample}{Counterexamples}
	\crefname{lemma}{Lemma}{Lemmata}
	\crefname{observation}{Observation}{Observations}
	\crefname{proposition}{Proposition}{Propositions}
	\crefname{remark}{Remark}{Remarks}
	\crefname{claim}{Claim}{Claims}
	\crefname{fact}{Fact}{Facts}
	\crefname{note}{Note}{Notes}
 \crefname{appendix}{App.}{Appendices}
 \crefname{section}{Sec.}{Sections}
\renewcommand*{\backref}[1]{}
	\renewcommand*{\backref}[1]{(Cited on page~#1.)}
\newcommand{\mor}[1]{}
\newcommand{\minki}[1]{}
\newcommand{\takashi}[1]{}
\newcommand{\xagawa}[1]{}
\newcommand{\mor}[1]{$\ll$\textsf{\color{red} Tomoyuki: { #1}}$\gg$}
\newcommand{\takashi}[1]{$\ll$\textsf{\color{orange} Takashi: { #1}}$\gg$}
\newcommand{\minki}[1]{$\ll$\textsf{\color{darkgreen} Minki: { #1}}$\gg$}
\newcommand{\xagawa}[1]{$\ll$\textsf{\color{magenta} Keita: { #1}}$\gg$}
\newcommand{\StateGen}{\mathsf{StateGen}}
\newcommand{\cA}{\mathcal{A}}
\newcommand{\cB}{\mathcal{B}}
\newcommand{\cC}{\mathcal{C}}
\newcommand{\cD}{\mathcal{D}}
\newcommand{\cG}{\mathcal{G}}
\newcommand{\cO}{\mathcal{O}}
\newcommand{\cS}{\mathcal{S}}
\def\makeuppercase#1{
\expandafter\newcommand\csname tl#1\endcsname{\widetilde{#1}}
}
\def\makelowercase#1{
\expandafter\newcommand\csname tl#1\endcsname{\widetilde{#1}}
}
\newcommand{\Z}{\mathbb{Z}}
\newcommand{\secp}{\lambda}
\newcommand{\ct}{\keys{ct}}
\newcommand*{\subpr}{\mathrm{pr}}
\newcommand*{\subHaarPR}{\mathrm{HaarPR}}
\newcommand*{\subHDDH}{\mathrm{HaarDDH}}
\newcommand*{\subDDH}{\mathrm{DDH}}
\newcommand*{\subNR}{\mathrm{NR}}
\newcommand*{\subHaar}{\mathrm{Haar}}
\newcommand*{\keys}[1]{\mathsf{#1}}
\newcommand*{\algo}[1]{\ensuremath{\mathsf{#1}}}
\newenvironment{boxfig}[2]{\begin{figure}[#1]\fbox{\begin{minipage}{0.97\linewidth}
                        \vspace{0.2em}
                        \makebox[0.025\linewidth]{}
                        \begin{minipage}{0.95\linewidth}
            {{
                        #2 }}
                        \end{minipage}
                        \vspace{0.2em}
                        \end{minipage}}}{\end{figure}}
\newcommand{\bit}{\{0,1\}}
\newcommand{\KeyGen}{\algo{KeyGen}}
\newcommand{\Enc}{\algo{Enc}}
\newcommand{\Dec}{\algo{Dec}}
\newcommand{\Ver}{\algo{Ver}}
\newcommand{\negl}{{\mathsf{negl}}}
\newcommand{\poly}{{\mathrm{poly}}}
\newcommand{\NR}{NR\xspace}
\DeclareRobustCommand
\title{Quantum Group Actions}
\author{\empty}\institute{\empty}
\author{}
\author{
Tomoyuki Morimae\inst{1} \and Keita Xagawa\inst{2} } 
\institute{
 Yukawa Institute for Theoretical Physics, Kyoto University, Kyoto, Japan \and TII
}
\author[1]{Tomoyuki Morimae}
\author[2]{Keita Xagawa}
\affil[1]{{\small Yukawa Institute for Theoretical Physics, Kyoto University, Kyoto, Japan}\authorcr{\small tomoyuki.morimae@yukawa.kyoto-u.ac.jp} }
\affil[2]{{\small Technology Innovation Institute, Abu Dhabi, UAE}\authorcr{\small keita.xagawa@tii.ae}}
\date{}
\begin{document}

\maketitle

\begin{abstract}
In quantum cryptography, there could be a new world, Microcrypt, where
cryptography is possible but one-way functions (OWFs) do not exist.
Although many fundamental primitives and useful applications have been found in Microcrypt,
they lack ``OWFs-free'' concrete hardness assumptions on which they are based.
In classical cryptography, 
many hardness assumptions on concrete mathematical problems have been introduced,
such as the discrete logarithm (DL) problems or 
the decisional Diffie-Hellman (DDH) problems 
on concrete group structures related to finite fields or elliptic curves.
They are then abstracted to generic hardness assumptions such as the DL and DDH assumptions over group actions.
Finally, based on these generic assumptions, primitives and applications are constructed.
The goal of the present paper is to introduce several 
abstracted generic hardness assumptions in Microcrypt,
which could connect the concrete mathematical hardness assumptions with applications.
Our assumptions are based on a quantum analogue of group actions.
A group action is a tuple $(G,S,\star)$ of a group $G$, a set $S$, and an operation $\star:G\times S\to S$.
We introduce a quantum analogue of group actions, which we call quantum group actions (QGAs), 
where $G$ is a set of unitary operators, $S$ is a set of states, and
$\star$ is
the application of a unitary on a state. 
By endowing QGAs with some reasonable hardness assumptions,
we introduce a natural quantum analogue of the decisional Diffie-Hellman (DDH) assumption and 
pseudorandom group actions. 
Based on these assumptions, we construct classical-query pseudorandom function-like state generators (PRFSGs).
PRFSGs are a quantum analogue of pseudorandom functions (PRFs),
and have many applications such as IND-CPA SKE, EUF-CMA MAC, and private-key quantum money schemes.
Because classical group actions are instantiated with many concrete mathematical hardness assumptions, 
our QGAs could also have some concrete (even OWFs-free) instantiations.

\if0
Our quantum assumptions are based on a ``quantization'' of group actions.
A group action $(\star,G,S)$ is a tuple of a group $G$, a set $S$, and
an operation $\star:G\times S\to S$ 
such that $g_1\star (g_2\star x)=(g_1g_2)\star s$ for any $g_1,g_2\in G$ and $s\in S$.
Cryptographic group actions~\cite{cryptoeprint:2006/291,TCC:JQSY19,AC:ADMP20,C:BraYun90} are group actions endowed with some hardness assumptions. 
For example, a one-way group action~\cite{C:BraYun90} is a group action such that given $s\gets S$\footnote{In this paper, $s\gets S$ means that
an element $s$ is sampled uniformly at random from the set $S$.} and $t\coloneqq g\star s$ 
with $g\gets G$,
it is hard to find a $g'$ such that $g'\star s=t$.
One-way group actions are abstractions of several well-studied cryptographic assumptions such as
the Discrete-Log assumptions~\cite{DifHel76}, 
isogeny-based assumptions~\cite{PQCRYPTO:JaoDeFo11,AC:CLMPR18}, and 
code-based assumptions~\cite{FFTA:DAlDiS24}. 
They have several applications such as identifications, digital signatures, and commitments~\cite{C:BraYun90}. 
\fi

\if0
A group action $(G,S,\star)$ is called one-way if, given $(s,g\star s)$ with random $s,g$, 
no efficient algorithm can find $g'$ such that
$g\star s=g'\star x$.
A group action $(G,S,\star)$ is called pseudorandom if $(s,g\star s)$ and $(s,u)$ with random $s,g,u$ are computationally indistinguishable.
We define their quantum analogue, which we call
one-way QGAs and pseudorandom QGAs, respectively, where
an adversary receives a quantum state $(|s\rangle\otimes g|s\rangle)^{\otimes t}$
instead of classical bit strings $(s,g\star s)$.
We show that
one-way QGAs imply (pure) one-way state generators (OWSGs),
and pseudorandom QGAs imply pseudorandom state generators (PRSGs).
OWSGs and PRSGs can be constructed from one-way functions (OWFs), but it is an open problem to construct them
from an assumption that will not imply OWFs. Our constructions therefore give such examples.
We also construct
pseudorandom function-like state generators (PRFSGs) and
unpredictable state generators (UPSGs) from pseudorandom QGAs.
As a corollary, we also show that 
PRFSGs can be constructed from pseudorandom state generators (PRSGs) with some structures.
\fi
\end{abstract}

\ifnum\submission=0
\newpage
\setcounter{tocdepth}{2}
\tableofcontents
\newpage
\fi

\section{Introduction} \label{sec:introduction}

\paragraph{Background.}
In classical cryptography, the existence of one-way functions (OWFs) is the minimum assumption~\cite{FOCS:ImpLub89}, 
because many primitives (such as pseudorandom generators (PRGs), pseudorandom functions (PRFs),
zero-knowledge, commitments, digital signatures, and secret-key encryptions (SKE)) are equivalent to OWFs in terms of existence, 
and almost all primitives (including public-key encryption (PKE) and multi-party computations) imply OWFs. 

On the other hand, recent active studies have demonstrated that in quantum cryptography, 
OWFs would not necessarily be the minimum assumption.
Many fundamental primitives have been introduced, such as pseudorandom unitaries (PRUs)~\cite{C:JiLiuSon18}, pseudorandom function-like state generators (PRFSGs)~\cite{C:AnaQiaYue22,TCC:AGQY22}, unpredictable state generators (UPSGs)~\cite{EPRINT:MorYamYam24},
pseudorandom state generators (PRSGs)~\cite{C:JiLiuSon18}, one-way state generators (OWSGs)~\cite{C:MorYam22},
EFI pairs~\cite{ITCS:BCQ23}, and one-way puzzles (OWPuzzs)~\cite{STOC:KhuTom24}. 
They seem to be weaker than OWFs~\cite{Kre21,STOC:KQST23,STOC:LomMaWri24},
but still imply many useful applications such as 
commitments~\cite{C:MorYam22,C:AnaQiaYue22,ITCS:BCQ23,AC:Yan22}, multi-party computations~\cite{C:MorYam22,C:AnaQiaYue22}, 
message authentication codes (MAC)~\cite{C:AnaQiaYue22,EPRINT:MorYamYam24},
secret-key encryptions (SKE)~\cite{C:AnaQiaYue22,EPRINT:MorYamYam24}, 
digital signatures~\cite{C:MorYam22}, 
private-key quantum money~\cite{C:JiLiuSon18}, 
etc.

In classical cryptography,
many hardness assumptions on concrete mathematical problems have been introduced,
such as the discrete logarithm (DL) problems or 
the decisional Diffie-Hellman (DDH) problems 
on concrete group structures related to finite fields or elliptic curves.
They are then abstracted to generic hardness assumptions such as the DL and DDH assumptions over group actions.
Finally, based on these generic assumptions, primitives and applications are constructed.

On the other hand, in quantum cryptography, the first step has not yet been studied.
Because 
PRUs can be constructed from OWFs~\cite{HuangMa},
and
PRUs imply PRFSGs, UPSGs, PRSGs, OWSGs, EFI pairs, and OWPuzzs, 
all of them can also be constructed from OWFs. (See \cref{fig:relations} for the relations.) 
However, no ``OWFs-free'' concrete mathematical hardness assumptions on which they are based are known.\footnote{See \cref{sec:related_works}.}

\subsection{Our Results}
The goal of the present paper is to introduce several abstracted generic hardness assumptions, which could
connect the concrete mathematical hardness assumptions with applications. 
As we will explain later, these new assumptions are a quantum analogue of cryptographic group actions~\cite{C:BraYun90,cryptoeprint:2006/291,TCC:JQSY19,AC:ADMP20}.
Because classical group actions have many concrete instantiations~\cite{PQCRYPTO:JaoDeFo11,AC:CLMPR18,FFTA:DAlDiS24},
our quantum versions of group actions could also have concrete (even OWFs-free) instantiations by considering natural quantum analogue
of classical hard problems.

Based on these quantum assumptions, we construct classical-query PRFSGs.
PRFSGs are a quantum analogue of PRFs. A PRFSG is a quantum polynomial-time (QPT) algorithm $\StateGen$ that takes a classical key $k$ and a bit string $x$ as input,
and outputs a quantum state $|\phi_k(x)\rangle$. The security roughly means that no QPT adversary can distinguish whether it is querying to $\StateGen(k,\cdot)$ with a random $k$ or
an oracle that outputs Haar random states, which we call the Haar oracle.\footnote{More precisely, the oracle works as follows. If it gets $x$ as input and $x$ was not queried before,
it samples a Haar random state $\psi_x$ and returns it. If $x$ was queried before, it returns the same state $\psi_x$ that was sampled before when $x$ was queried for the first time.}
PRFSGs imply almost all known primitives such as UPSGs, PRSGs, OWSGs, OWPuzzs, and EFI pairs.
PRFSGs also imply useful applications such as IND-CPA SKE, EUF-CMA MAC, 
private-key quantum money, commitments, multi-party computations, (bounded-poly-time-secure) digital signatures, etc.

Unfortunately, PRFSGs that we construct in this paper are secure only against classical queries.\footnote{IND-CPA SKE and EUF-CMA MAC
constructed from such PRFSGs are also secure against classical queries.}
It is an open problem whether PRFSGs secure against quantum queries 
or even PRUs can be constructed from quantum group actions.

\paragraph{Group actions.}
Our quantum assumptions are based on a ``quantization'' of group actions.
A group action $(\star,G,S)$ is a tuple of a group $G$, a set $S$, and
an operation $\star:G\times S\to S$ 
such that $g_1\star (g_2\star x)=(g_1g_2)\star s$ for any $g_1,g_2\in G$ and $s\in S$.
Cryptographic group actions~\cite{cryptoeprint:2006/291,TCC:JQSY19,AC:ADMP20,C:BraYun90} are group actions endowed with some hardness assumptions. 
For example, a one-way group action~\cite{C:BraYun90} is a group action such that given $s\gets S$\footnote{In this paper, $s\gets S$ means that
an element $s$ is sampled uniformly at random from the set $S$.} and $t\coloneqq g\star s$ 
with $g\gets G$,
it is hard to find a $g'$ such that $g'\star s=t$.
One-way group actions are abstractions of several well-studied cryptographic assumptions such as
the Discrete-Log assumptions~\cite{DifHel76}, 
isogeny-based assumptions~\cite{PQCRYPTO:JaoDeFo11,AC:CLMPR18}, and 
code-based assumptions~\cite{FFTA:DAlDiS24}. 
They have several applications such as identifications, digital signatures, and commitments~\cite{C:BraYun90}. 

A pseudorandom group action~\cite{AC:ADMP20,TCC:JQSY19} is a group action such that $(s,g\star s)$ and $(s,u)$ are computationally indistinguishable,
where $s$ is a (fixed) element in $S$, $u\gets S$, and $g\gets G$. 
Pseudorandom group actions are abstractions of several well-studied cryptographic assumptions such as the Decisional Diffie-Hellman (DDH) assumptions~\cite{DifHel76} 
and isogeny-based assumptions~\cite{AC:CLMPR18}.\footnote{While we can treat some code-based assumptions as group actions, they are unlikely to be weakly pseudorandom and weakly unpredictable with large samples~\cite{FFTA:DAlDiS24,EPRINT:BCDSK24}.} 
They also have attractive applications such as key exchange, smooth projective hashing, dual-mode PKE, two-message statistically sender-private OT, and PRFs~\cite{C:BraYun90,AC:ADMP20,TCC:JQSY19,cryptoeprint:2006/291}.


\paragraph{Quantum group actions.}
In this paper, we introduce a quantum analogue of cryptographic group actions, which we call
\emph{quantum group actions (QGAs)}.
A QGA $(G,S,\star)$ is a tuple of a set $G$, a set $S$, and an operation $\star$.
$G$ is a set of efficiently-implementable unitary operators\footnote{Note that we do not require that $G$ is a group.}
and $S$ is a set of efficiently generable states. 
The action $\star$ is just the application of a unitary in $G$ on a state in $S$.
Then the property $g_1(g_2|s\rangle)=(g_1g_2)|s\rangle$
is trivially satisfied for any $g_1,g_2\in G$ and $|s\rangle\in S$.

We 
endow QGAs with several hardness assumptions.
In particular, we construct PRFSGs from these assumptions.

\paragraph{Naor-Reingold PRFs, DDH, and  (weak) pseudorandomness.}
To give an idea, we briefly review the classical construction of the Naor-Reingold (classical) PRFs~\cite{JACM:NaoRei04} based on some classical assumptions. 
The Naor-Reingold PRFs can be constructed from a group action as follows~\cite{JACM:NaoRei04,AC:BonKogWoo20,AC:ADMP20,AC:MorOnuTak20}. 
The key $k$ of the PRF $f_k$ is $k\coloneqq (g_0,g_1,...,g_\ell)$, where $g_i\gets G$ for $i=0,1,...,\ell$.
For an input $x=(x_1,...,x_\ell)\in\bit^\ell$,
$f_k(x)$ is defined as 
\begin{align}
 f_k(x)\coloneqq (g_\ell^{x_\ell} \cdot \dots \cdot g_1^{x_1} g_0) \star s_0, 
\end{align}
where $s_0$ is a fixed element in $S$.
Roughly speaking, its security is shown by the computational indistinguishability \footnote{Here $\approx_c$ means that the two distributions are computationally indistinguishable.}
\begin{align}
 \{(g_i \star s_0, (\tilde{g}g_i) \star s_0) : \tilde{g}, g_i \gets G\}_{i \in [Q]}
 \approx_c 
 \{(g_i \star s_0, h_i \star s_0) : g_i, h_i \gets G \}_{i \in [Q]}, \label{cind:GA:NR}
\end{align} 
which, for clearness, we call the Naor-Reingold (NR) assumption. 
Here, $Q$ is a polynomial of the security parameter. 
Applying the NR assumption repeatedly, Naor and Reingold showed that $f_k(x)$ is computationally indistinguishable from $f'_k(x) \coloneqq g_x \star s_0$, where $g_x \gets G$ for each $x$~\cite{JACM:NaoRei04}. 

Naor and Reingold~\cite{JACM:NaoRei04} showed that the NR assumption is derived from the DDH  assumption. The DDH assumption says that
\begin{align}
(s_0, \tilde{g} \star s_0, g \star s_0, (\tilde{g}g) \star s_0)
\approx_c
(s_0, \tilde{g} \star s_0, g \star s_0, h \star s_0), \label{cind:GA:DDH}
\end{align}
where $s_0$ is a fixed element in $S$ and $\tilde{g},g,h \gets G$. 
If $G$ is a \emph{commutative ring} with $(\cdot,+)$ and $S$ has a binary operation $\circ$ such that $(g \star s_0) \circ (g' \star s_0) = (g + g') \star s_0$, 
then the DDH assumption tightly implies the NR assumption, 
because we can re-randomizing the samples~\cite{JACM:NaoRei04,CCS:BonMonRag10}.\footnote{\cite{CCS:LewWat09,C:EHKRV13,C:AbdBenPas15} treated some non-commutative cases related to the Matirx DDH assumptions.}
Boneh, Kogan, and Woo~\cite{AC:BonKogWoo20} considered the case that $G$ is a \emph{commutative group} and 
showed that the DDH assumption implies the NR assumption via a hybrid argument. 
Alamati, De~Feo, Montgomery, and Patranabis~\cite{AC:ADMP20} took a different approach; they defined weak pseudorandomness,\footnote{Correctly speaking, they defined it as the assumption that $\pi_{\tilde{g}}: s \mapsto \tilde{g} \star s$ is a weak pseudorandom \emph{permutation}.}
 which is the computational indistinguishability 
\begin{align}
 \{(s_i, \tilde{g} \star s_i) : \tilde{g} \gets G, s_i \gets S\}_{i \in [Q]}
 \approx_c 
 \{(s_i, s_i') : s_i,s_i' \gets S \}_{i \in [Q]}. \label{cind:GA:wPR1}
\end{align} 
If the group action is regular,\footnote{A group action is \emph{regular} if it is 
\emph{transitive}, that is, for every $s_1,s_2 \in S$, there exists $g \in G$ satisfying $s_2 = g \star s_1$, 
 and \emph{free}, that is, for each $g \in G$, $g$ is the identity element if and only if there exists $s \in S$ satisfying $s = g \star s$~\cite{AC:ADMP20}.} 
then the distribution $s_i \gets S$ is equivalent to the distribution of $g_i \star s_0$ with $g_i \gets G$. 
Thus, by replacing $s_i$ and $s'_i$ with $g_i \star s_0$ and $h_i \star s_0$, where $g_i \gets G$ and $h_i \gets G$, respectively, the weak pseudorandomness is tightly equivalent to the NR assumption. 

We note that while Alamati~et~al.~focused only on the case that $G$ is commutative, their approach can be extended to non-commutative groups $G$. We also note that we will not need some properties of $G$ in the proof in Boneh~et~al.~\cite{AC:BonKogWoo20} when we employ pseudorandom group actions. 
For details, see \cref{sec:ClassicNRPRF}. 

\paragraph{Construction of PRFSGs.}

\begin{figure}
\centering
\ifnum\fullpage=0
\scriptsize
\begin{tikzpicture}[scale=0.64,
base/.style={thick,->},]
\else 
\footnotesize
\begin{tikzpicture}[scale=0.7,
base/.style={thick,->},]
\fi 
\node (Base)  at ( 0,0) {\shortstack[c]{Haar-PR (eq.\ref{cind:QGA:Haar-PR}) \\ + Haar-DDH (eq.\ref{cind:QGA:Haar-DDH})}};
\node (comDH) at ( 0,-2) {\shortstack[c]{$G$'s commutativity \\ + DDH}};
\node (wPR)   at ( 5,0) {Weak PR (eq.\ref{cind:QGA:tmp})};
\node (NR)    at (10,0) {NR (eq.\ref{cind:QGA:NR})}; 
\node (PRFSGs) at (15,0) {PRFSGs (eq.\ref{NR-PRFSG})};
\node (PR)    at ( 5,2) {PR (eq.\ref{cind:QGA:PR})};
\draw[base] (Base) to node {} (wPR);
\draw[base] (wPR)  to node {} (NR);
\draw[base] (NR)   to node {} (PRFSGs);
\draw[thick](PR)   to[out=0,in=180] node {} ([xshift=-2cm] NR) ;
\draw[thick](PR)   -- (10,2) to[out=0,in=180] node {} ([xshift=-2cm] PRFSGs) ;
\draw[base](comDH) -- (9,-2) to[out=0,in=270] (NR.south) ;
\ifnum\fullpage=0
\end{tikzpicture}
\else 
\end{tikzpicture}
\fi
\caption{Diagram for our construction.}
\label{fig:PRFSG}
\end{figure}
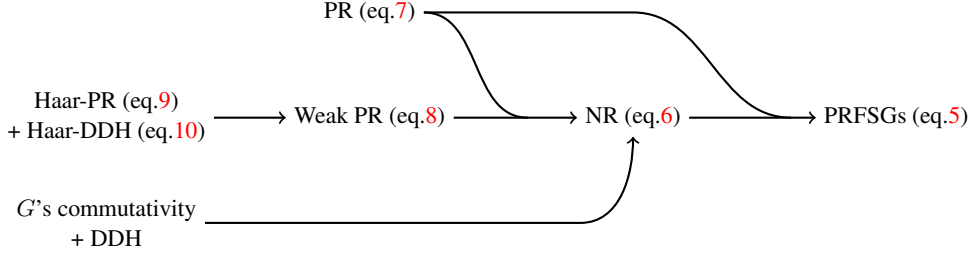

Based on these classical constructions of the Naor-Reingold PRFs, we try to construct PRFSGs.
Jumping ahead, our construction is summarized in~\cref{fig:PRFSG}. 
Let $(G,S)$ be a QGA.\footnote{We omit $\star$, because this is trivial.}
This means that $G$ is a set of efficiently implementable unitary operators and $S$ is a set of efficiently generatable states.
We will construct a PRFSG, $\StateGen(k,x)\to|\phi_k(x)\rangle$, as follows: 
The key $k$ of the PRFSG is $k \coloneqq (g_0,g_1,...,g_\ell, |s_0\rangle)$, 
where $g_i\gets G$ for $i=0,1,...,\ell$ and $|s_0\rangle$ is a (fixed) element in $S$.
For an input $x=(x_1,...,x_\ell)\in\bit^\ell$,
the output of PRFSG $|\phi_k(x)\rangle$ is defined as\footnote{We note that Ananth, Gulati, and Lin~\cite{EPRINT:AnaGulLin24} gave a similar construction of selectively-secure PRFSGs 
in the common Haar state model, which is inspired by GGM~\cite{JACM:GolGolMic86}.}
\begin{align}
 |\phi_k(x)\rangle \coloneqq (g_\ell^{x_\ell} \cdot \dots \cdot g_1^{x_1} g_0) |s_0\rangle.  \label{NR-PRFSG}
\end{align}
%

The question is which hardness assumptions should
we endow the QGA with so that $\StateGen$ satisfies the security of PRFSGs.
In quantum group actions, we cannot expect that $G$ has algebraic structures, and the simple analogue of the DDH assumption or/and weak pseudorandomness would not imply the quantum analogue of the NR assumption that roughly states the computational indistinguishability\footnote{Actually, our security game is such that the adversary receives many copies of the state.
Hence, the assumption should be read as the computational indistinguishability
$
\{(g_i|s_0\rangle, \tilde{g} g_i |s_0\rangle)^{\otimes t} : \tilde{g}, g_i \gets G \}_{i \in [Q]} 
\approx_c 
\{(g_i |s_0\rangle,h_i|s_0\rangle)^{\otimes t} : g_i,h_i \gets G \}_{i \in [Q]}
$
for any polynomial $t$.
However, in this introduction, we ignore the number of copies for ease of notation and use the word ``roughly''.}
\begin{align}
\{(g_i|s_0\rangle, \tilde{g} g_i |s_0\rangle) : \tilde{g}, g_i \gets G \}_{i \in [Q]} \approx_c \{(g_i |s_0\rangle,h_i|s_0\rangle) : g_i,h_i \gets G \}_{i \in [Q]}, \label{cind:QGA:NR}
\end{align}
where $|s_0\rangle$ is a (fixed) element in $S$. 
Thus, we need to put forth simple, plausible assumptions over quantum group actions that imply the quantum analogue of the NR assumption. 

In the quantum case, moreover, \cref{cind:QGA:NR} is not enough to construct PRFSGs unlike the classical case.
In the classical construction of NR PRFs, by applying the classical NR assumption repeatedly, we can show that
$f_k(x)$ is indistinguishable from $g_x\star s_0$ with $g_x\gets G$ for each $x$.
In the classical case, because of the regularity, $g_x\star s_0$ with $g_x\gets G$ is equivalent to sampling $s\gets S$. 
However, in the quantum case,
we do not have regularity in general, and we cannot expect that $g_x|s_0\rangle$ with $g_x\gets G$ is uniformly at random in some efficiently samplable set $S'$.\footnote{$S'$ might differ from $S$.} 
Thus, we will require the additional assumption that $g_x|s_0\rangle$ with $g_x\gets G$ is indistinguishable from Haar random states. 
We call this assumption \emph{pseudorandomness (PR)}, which roughly says the computational indistinguishability
\begin{align}
 (|s_0\rangle, h |s_0\rangle) \approx_c (|s_0\rangle,|s'\rangle), \label{cind:QGA:PR}
\end{align}
where $|s_0\rangle$ is a (fixed) element in $S$, $h \gets G$, and $|s'\rangle \gets \mu$. (Here, $|s'\rangle\gets\mu$ means that a state $|s'\rangle$ is sampled uniformly at random with the Haar measure.)\footnote{Again, here, the computational indistinguishability is that for many copies of states, but for simplicity we omit it.}
By combining the quantum analogue of the NR assumption (\cref{cind:QGA:NR}) and this PR assumption, we get PRFSGs.

Then, the question is how can we get the quantum analogue of the NR assumption?
In the classical case, we get it from the weak pseudorandomness, \cref{cind:GA:wPR1},~\cite{AC:ADMP20}.
We can introduce a quantum analogue of it, 
which is the computational indistinguishability 
\begin{align}
\{(|s_i\rangle, \tilde{g} |s_i\rangle) : \tilde{g} \gets G, |s_i\rangle \gets \mu \}_{i \in [Q]} 
\approx_c \{(|s_i\rangle,|s'_i\rangle) : |s_i\rangle,|s_i'\rangle \gets \mu \}_{i \in [Q]}. \label{cind:QGA:tmp}
\end{align}
In the classical case, the weak pseudorandomness is equivalent to the classical NR assumption,
but in the quantum case, again because of the fact that we do not have regularity in general, \cref{cind:QGA:tmp}
will not imply the quantum analogue of the NR assumption, \cref{cind:QGA:NR}.
However, combining this with the PR assumption (\cref{cind:QGA:PR}), we will recover \cref{cind:QGA:NR}. 


Therefore the goal is to realize the quantum analogue of weak pseudorandomness, \cref{cind:QGA:tmp}.
To achieve it, we put forth two new assumptions, which we believe plausible and reasonable: 
The one is 
\emph{Haar-pseudorandomness (Haar-PR)}, which roughly states the computational indistinguishability 
\begin{align}
(|s\rangle, h |s\rangle) \approx_c (|s\rangle,|s'\rangle), \label{cind:QGA:Haar-PR}
\end{align}
where $|s\rangle, |s'\rangle \gets \mu$ and $h \gets G$.\footnote{Note that we give unbounded-polynomial copies of the sample to the adversary. If the number of copies is constant, then there exists a statistical construction~\cite[Section~4]{EPRINT:AnaGulLin24}.} 
Interestingly,  PR and Haar-PR are not equivalent, because, unlike the classical case with regularity, 
$h |s_0\rangle$ in the LHS of PR (\cref{cind:QGA:PR}) may not be distributed according to the Haar measure. 

The other is a quantum analogue of the DDH assumption with multiple samples and with respect to Haar random states. 
We call it \emph{Haar-DDH}, which roughly states the computational indistinguishability 
\begin{align}
 \{(|s_i\rangle, g |s_i\rangle) : |s_i\rangle \gets \mu, g \gets G \}_{i \in [Q]} \approx_c \{(|s_i\rangle,h_i|s_i\rangle) : |s_i\rangle \gets \mu, h_i \gets G\}_{i \in [Q]}. \label{cind:QGA:Haar-DDH}
\end{align}


The combination of Haar-PR and Haar-DDH assumptions implies the quantum analogue of weak pseudorandomness, \cref{cind:QGA:tmp}. 
Combining it with the PR assumption, we can show a quantum analogue of the NR assumption (\cref{cind:QGA:NR}). 
We can show the security of our NR-style PRFSGs from the quantum analogue of the NR assumption and the PR assumption. 

In general, PRFSGs are defined against quantum-query adversaries~\cite{C:AnaQiaYue22,TCC:AGQY22}. This means that the security holds against any QPT adversary
that can query $x$ in superposition.
Unfortunately, our proof only works for \emph{classical-query} cases, and there are several barriers 
to the construction of PRFSGs secure against quantum queries. For details, see~\cref{sec:PRFSG:quantum-query}. 
It is an open problem to construct PRFSGs secure against quantum queries or even PRUs from
QGAs (or other OWFs-free assumptions).

In the classical case, PRFs can be constructed from pseudorandom group actions~\cite{TCC:JQSY19}.
On the other hand, we do not know how to construct PRFSGs from 
PR or Haar-PR QGAs. One reason is that the construction of PRFs in \cite{TCC:JQSY19} is the GGM one~\cite{FOCS:GolGolMic84}, and
we do not know how to use the GGM technique in the quantum setting. For example, we do not know how to hash quantum states.
Moreover, in the classical case, we can construct PRFs from PRGs~\cite{FOCS:GolGolMic84},
but it is an open problem whether we can construct PRFSGs from PRSGs.\footnote{PRFSGs with $O(\log)$ input length can be constructed from PRSGs~\cite{C:AnaQiaYue22}, but it is open for PRFSGs with $\poly$ input length.}


\paragraph{PRSGs from PR QGAs.}
As we have explained,
PR QGAs is the computational indistinguishability
$
 (|s_0\rangle, h |s_0\rangle)^{\otimes t} \approx_c (|s_0\rangle,|s'\rangle)^{\otimes t} 
$
for any polynomial $t$, 
where $|s_0\rangle$ is a (fixed) element in $S$, $h \gets G$, and $|s'\rangle \gets \mu$. 
As an additional result, we observe that PRSGs can be constructed from PR QGAs.
\begin{lemma}
PR QGAs imply PRSGs. 
\end{lemma}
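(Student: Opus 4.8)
The plan is to build a PRSG whose state-generation algorithm is exactly the map $|s_0\rangle \mapsto h|s_0\rangle$, so that PRSG security becomes a verbatim restatement of the PR assumption. Concretely, I would define $\StateGen$ as follows: the key-generation step samples $h \gets G$ and sets the key $k$ to be the (classical) description of $h$, while the state-generation step $\StateGen(k)$ first prepares the fixed state $|s_0\rangle \in S$ and then applies $h$, outputting $|\phi_k\rangle \coloneqq h|s_0\rangle$. Since $S$ consists of efficiently generable states and $G$ of efficiently implementable unitaries, $\StateGen$ is QPT, and since $|s_0\rangle$ is pure and $h$ is unitary, each output $|\phi_k\rangle$ is a pure state, as required by the PRSG syntax.

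It remains to verify pseudorandomness. The PRSG security condition I must establish is
\begin{align}
\{ (h|s_0\rangle)^{\otimes t} : h \gets G \} \cind \{ |s'\rangle^{\otimes t} : |s'\rangle \gets \mu \}
\end{align}
for every polynomial $t$. This is precisely the PR assumption with the auxiliary copies of $|s_0\rangle$ stripped away. The reduction is immediate: given a PRSG distinguisher $\cA$, I construct a PR distinguisher $\cB$ that, on input a challenge of the form $(|s_0\rangle, |\psi\rangle)^{\otimes t}$ (with $|\psi\rangle$ equal to $h|s_0\rangle$ or to $|s'\rangle$), discards the $|s_0\rangle$ registers, forwards $|\psi\rangle^{\otimes t}$ to $\cA$, and echoes its output. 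Then $\cB$ distinguishes the two PR distributions with exactly the advantage with which $\cA$ breaks the PRSG, so the PR assumption forces this advantage to be negligible for all polynomial $t$.

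The conversion works in the other direction too---a PR distinguisher yields a PRSG distinguisher by freshly preparing the $t$ copies of $|s_0\rangle$---because $|s_0\rangle$ is a fixed, publicly generable state. This observation is the only point requiring any care: the PR assumption hands out extra copies of $|s_0\rangle$ that the plain PRSG experiment does not, but because $|s_0\rangle$ can be generated for free, these copies carry no information beyond what the reduction produces on its own. Consequently I expect no genuine obstacle; the content of the lemma is simply that the PR assumption is, syntactically, a PRSG security statement for the generator $|s_0\rangle \mapsto h|s_0\rangle$.
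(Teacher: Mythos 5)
Your construction and reduction are, in essence, the paper's own: the paper likewise builds the PRSG by outputting $g|s\rangle$ and treats security as immediate from PR, via exactly your observation that the auxiliary copies of the base state can be discarded. There is, however, one genuine gap relative to the paper's definitions. You treat $|s_0\rangle$ as a \emph{fixed, publicly generable} state and put only $h$ in the key, but in the formal definition of a QGA the algorithm $S$ is randomized --- the paper remarks explicitly that each execution of $S$ may output a different $[|s\rangle]$ --- and the PR assumption samples $|s_0\rangle \gets S$ \emph{once} and then hands out $t$ copies of that same state. With a randomized $S$, your $\StateGen(k)$ with $k=[h]$ is not a well-defined state generator: each invocation resamples a fresh $|s^{(i)}\rangle \gets S$, so the challenger's $t$ runs produce $h|s^{(1)}\rangle \otimes \cdots \otimes h|s^{(t)}\rangle$ rather than $(h|s_0\rangle)^{\otimes t}$. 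This is not the distribution appearing in the PR assumption, so your reduction no longer applies; worse, the construction itself can fail to be a PRSG, since independent samples generically fail the SWAP test that $t$ copies of a single Haar random state always pass (whenever $S$ has nontrivial entropy). The same issue undermines your closing ``other direction'' remark: fresh runs of $S$ do not yield additional copies of the particular $|s_0\rangle$ in the challenge, so those copies are \emph{not} free in general.

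The fix is exactly what the paper does: since $S$ outputs a classical description $[|s_0\rangle]$, include it in the key, $k \coloneqq ([|s_0\rangle],[h])$, and let $\StateGen(k)$ regenerate $|s_0\rangle$ from its description before applying $h$. With that one change your discard reduction goes through verbatim and your proof coincides with the paper's. (If you additionally assume $S$ is deterministic, your version is fine as written, but that assumption is not part of the paper's definition.)
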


\if0
PRSGs are a quantum analogue of PRGs introduced in \cite{C:JiLiuSon18}.
PRSGs could exist even if OWFs do not exist, but have many applications such as private-key quantum 
money~\cite{C:JiLiuSon18} and OWSGs~\cite{C:MorYam22}.
PRSGs are constructed from OWFs~\cite{C:JiLiuSon18}, but it is an open problem to capture PRSGs with some cryptographic assumptions
that will not imply OWFs.
The above result therefore shows that PRSGs can be characterised with a natural quantum analogue of pseudorandom group actions that do not seem to imply OWFs.
\fi

\paragraph{OWSGs from one-way QGAs.}
It is also natural to define a quantum analogue of one-way group actions.
In the security game of classical one-way group actions, the adversary receives classical bit strings $s$ and $g\star s$.
In our one-way QGAs, 
the adversary receives $|s\rangle^{\otimes t}$ and $(g|s\rangle)^{\otimes t}$ for any polynomial $t$.
We show the following.
\begin{lemma}
One-way QGAs imply pure one-way state generators (OWSGs). 
\end{lemma}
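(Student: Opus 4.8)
The plan is to view the pair $(|s\rangle, g|s\rangle)$ as the output of a pure OWSG and to reduce inversion of the state generator to inversion of the QGA. Fix a polynomial $m = m(\lambda)$. I would define $\KeyGen(1^\lambda)$ to sample $s \gets S$ and $g \gets G$ and output $k \coloneqq (s,g)$; define $\StateGen(k)$ to output the pure state $|\phi_k\rangle \coloneqq (|s\rangle \otimes g|s\rangle)^{\otimes m}$, which is efficiently generable because $S$ is efficiently generable and $G$ is efficiently implementable; and define $\Ver(k',|\phi_k\rangle)$, which uses only the unitary component $g'$ of the candidate key $k'$, to parse the state into $m$ two-register blocks, apply $g'$ to the first register of each block, run a swap test between the two registers of each block, and accept iff all $m$ swap tests pass. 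Correctness is immediate: on the honest key $\Ver$ uses $g' = g$, which sends each first register $|s\rangle$ to $g|s\rangle$, matching the second register, so every swap test passes and $\Ver$ accepts with probability $1$. Since $|\phi_k\rangle$ is pure, this is a pure OWSG.

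For security I would argue by contraposition. Suppose a QPT $\cA$ breaks the OWSG: for some polynomial $t$ it outputs, from $|\phi_k\rangle^{\otimes t}$, a guess $g'$ for which $\Ver$ accepts with non-negligible probability $\epsilon$. I build a one-way QGA adversary $\cB$: on input $|s\rangle^{\otimes tm}$ and $(g|s\rangle)^{\otimes tm}$, $\cB$ pairs the copies into $tm$ blocks $|s\rangle \otimes g|s\rangle$, groups them into $t$ copies of $|\phi_k\rangle$, runs $\cA$, and outputs the resulting $g'$. The view $\cB$ feeds to $\cA$ is identically distributed to the genuine OWSG challenge, so $\cA$'s success probability is unchanged.

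The crux is to turn ``$\Ver$ accepts'' into ``$g'$ inverts the QGA.'' Conditioned on a fixed guess $g'$, the $m$ blocks are independent identical states, so $\Ver$ accepts with probability $\bigl(\tfrac{1 + F}{2}\bigr)^{m}$, where $F \coloneqq |\langle s| (g')^{\dagger} g |s\rangle|^{2}$ is exactly the squared overlap measuring whether $g'|s\rangle = g|s\rangle$. Since the expectation of this quantity over $\cA$'s output is at least $\epsilon$, a Markov-type argument shows that with probability at least $\epsilon/2$ over $\cA$'s output one has $\bigl(\tfrac{1+F}{2}\bigr)^{m} \ge \epsilon/2$, whence $F \ge 2(\epsilon/2)^{1/m} - 1$. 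Choosing $m$ to be a sufficiently large polynomial makes $(\epsilon/2)^{1/m}$ as close to $1$ as any target inverse-polynomial allows, forcing $F \ge 1 - o(1)$; thus $\cB$ outputs a $g'$ with $g'|s\rangle \approx g|s\rangle$ and wins the one-way QGA game with non-negligible probability, contradicting its security.

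The main obstacle I anticipate is precisely this quantitative step. A single swap test has acceptance floor $1/2$ on orthogonal outputs, so a one-shot verification would be trivially satisfiable and could never meet the OWSG security bound; amplifying inside the state via the $m$-fold tensor structure is what simultaneously drives the floor to $2^{-m}$ (making $\Ver$ sound) and boosts a non-negligible acceptance into a near-perfect overlap (making the extracted $g'$ a genuine inverter). The remaining point to pin down is to match the resulting inverse-polynomial fidelity guarantee to whatever exact winning threshold is fixed in the one-way QGA definition, which is controlled by the choice of $m$.
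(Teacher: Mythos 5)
Your proof is correct, but it takes a genuinely different route from the paper's. The paper uses the single-copy construction $|\phi_k\rangle = |s\rangle \otimes g|s\rangle$ with one SWAP test, which (as you correctly observe) has acceptance floor $1/2$ and hence can only be \emph{weakly} one-way; it then proves weak one-wayness (acceptance $\ge 1-1/p$ forces expected fidelity $\ge 1-2/p$, contradicting OW QGA) and outsources all amplification to Theorem 3.7 of \cite{TQC:MorYam24}, which converts weak OWSGs to pure OWSGs, plus Appendix B of that paper for canonical verification. You instead build the amplification into the construction itself: the $m$-fold tensor $(|s\rangle \otimes g|s\rangle)^{\otimes m}$ with $m$ parallel SWAP tests drives the acceptance floor to $2^{-m}$, so standard (not weak) one-wayness can be attacked directly, and your Markov argument ($\Pr[X \ge \epsilon/2] \ge \epsilon/2$, hence $F \ge 2(\epsilon/2)^{1/m}-1$ on that event) is self-contained, avoiding the external weak-to-strong theorem entirely. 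What each buys: the paper's construction has smaller output states and a two-line security argument at the cost of citing nontrivial machinery; yours is elementary and fully explicit at the cost of $m$-fold larger states and the quantitative analysis you carried out.

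Two small points to tighten. First, your phrasing ``choosing $m$ to be a sufficiently large polynomial'' reads as if $m$ may depend on the adversary's $\epsilon$, but $m$ must be fixed in the construction before the adversary is quantified; the fix is immediate — set $m \coloneqq \secp$ once and for all, and then for any adversary with $\epsilon \ge 1/q(\secp)$ infinitely often, $(\epsilon/2)^{1/m} \ge 1 - \ln(2q(\secp))/\secp = 1 - o(1)$ since $\log q = O(\log \secp) = o(\secp)$ for every polynomial $q$. Second, your closing worry about ``matching the winning threshold'' of the OW QGA game is moot: the challenger's projection of $g|s\rangle$ onto $g'|s\rangle$ succeeds with probability exactly $F$, and the definition only requires this to be negligible, so even a non-negligible \emph{expected} fidelity (your bound gives $\mathbb{E}[F] \ge \epsilon/4$ infinitely often once $m = \secp$) already breaks the assumption — near-unit fidelity, while your argument delivers it, is not needed.
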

\if0
OWSGs are quantum analogue of one-way functions (OWFs) introduced in \cite{C:MorYam22}.
OWSGs could exist even if (quantumly-secure) OWFs do not exist~\cite{Kre21,C:MorYam22}, yet
have several applications such as (bounded-time-secure) digital signatures~\cite{C:MorYam22}, non-interactive commitments~\cite{STOC:KhuTom24},
and multiparty computations~\cite{C:MorYam22,C:BCKM21a,EC:GLSV21}.
OWSGs can be constructed from OWFs~\cite{C:JiLiuSon18,C:MorYam22}, but it is an open problem to capture OWSGs with
some cryptographic assumptions that will not imply OWFs.
The above result shows that OWSGs are
characterised by a natural quantum analogue of one-way group actions, which
do not seem to imply OWFs.
\fi





\paragraph{Candidates of QGAs.}
Finally, we briefly argue about some candidates for QGAs. We expect that
QGAs based on random quantum circuits and random IQP circuits 
are PR, Haar-PR, and DDH QGAs.

\paragraph{Open Problems.}
\cref{fig:relations} is a summary of the new and known relations between cryptographic primitives and QGAs, in which we separate primitives with classical-query and quantum-query securities. 
As is shown in the figure, our results could
open a new avenue to connect quantum cryptographic applications with concrete OWFs-free hardness assumptions.

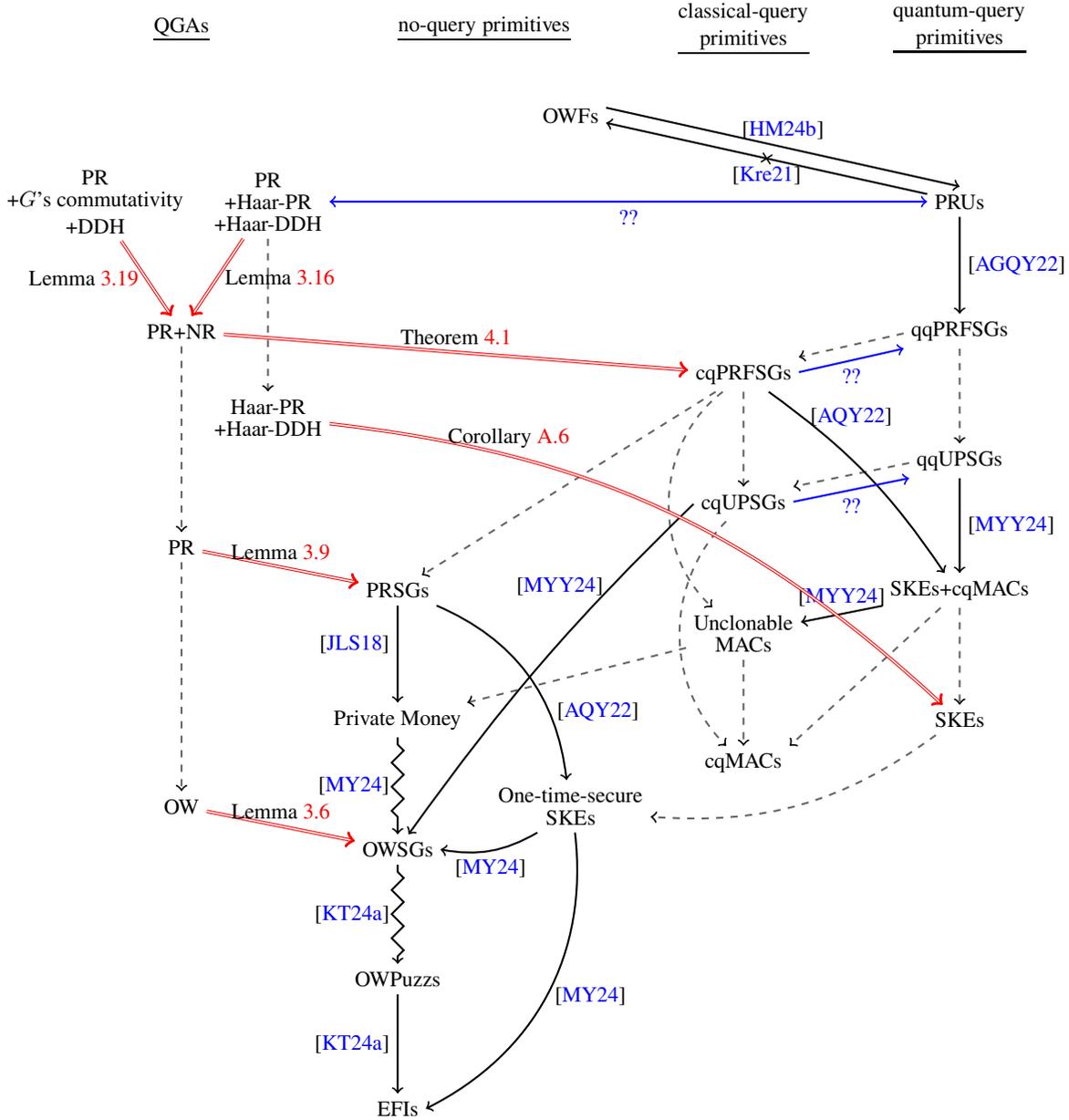
\begin{figure}
\centering
\ifnum\llncs=1
\scriptsize
\begin{tikzpicture}[scale=0.55,%
triv/.style={thick,draw=black!60,->,dashed},
prev/.style={thick,->},
pure/.style={thick,->,decorate,decoration={zigzag,pre length=3pt,post length=3pt}},
sepa/.style={thick,->,postaction={decorate,decoration={markings,mark=at position .5 with {\draw[semithick,-] (-2pt,-2pt) -- (2pt,2pt); \draw[semithick,-] (2pt,-2pt) -- (-2pt,2pt);}}}},
open/.style={thick,blue,->},
ours/.style={thin,draw=red,->,double},
]
\else 
\footnotesize
\begin{tikzpicture}[scale=0.63,%
triv/.style={thick,draw=black!60,->,dashed},
prev/.style={thick,->},
pure/.style={thick,->,decorate,decoration={zigzag,pre length=3pt,post length=3pt}},
sepa/.style={thick,->,postaction={decorate,decoration={markings,mark=at position .5 with {\draw[semithick,-] (-2pt,-2pt) -- (2pt,2pt); \draw[semithick,-] (2pt,-2pt) -- (-2pt,2pt);}}}},
open/.style={thick,blue,->},
ours/.style={thin,draw=red,->,double},
]
\fi 
\node at (  9,17) {\underline{\shortstack[c]{quantum-query \\ primitives}}};
\node at (  4,17) {\underline{\shortstack[c]{classical-query \\ primitives}}};
\node at ( -2,17) {\underline{no-query primitives}};
\node at ( -9,17) {\underline{QGAs}};
%
\node (OWF)     at (  0,15) {OWFs} ;
\node (PRU)     at (  9,13) {PRUs} ;
\node (qqPRFSG) at (  9,10) {qqPRFSGs} ;
\node (cqPRFSG) at (  4, 9) {cqPRFSGs} ;
\node (qqUPSG)  at (  9, 7) {qqUPSGs} ;
\node (cqUPSG)  at (  4, 6) {cqUPSGs} ;
\node (SKE)     at (  9, 1) {SKEs} ;
\node (QPOTP)   at (  0,-1) {\shortstack[c]{One-time-secure \\ SKEs}} ;
\node (SKE+MAC) at (  9, 4) {SKEs+cqMACs} ;
\node (ucMAC)   at (  4, 3) {\shortstack[c]{Unclonable \\ MACs}} ;
\node (MAC)     at (  4, 0) {cqMACs} ;
\node (PRSG)    at ( -4, 4) {PRSGs} ;
\node (PrivMon) at ( -4, 1) {Private Money} ; 
\node (OWSG)    at ( -4,-2) {OWSGs} ; 
\node (OWPuzz)  at ( -4,-5) {OWPuzzs} ;
\node (EFI)     at ( -4,-8) {EFIs} ; 
\node (PRHDHQGA)at ( -7,13) {\shortstack[c]{PR \\ +Haar-PR \\ +Haar-DDH}};
\node (PRcDHQGA)at (-11,13) {\shortstack[c]{PR \\ +$G$'s commutativity \\ +DDH}};
\node (PRNRQGA) at ( -9,10) {PR+NR} ;
\node (HDHQGA)  at ( -7, 8) {\shortstack[c]{Haar-PR \\ +Haar-DDH}} ;
\node (PRQGA)   at ( -9, 5) {PR} ;
\node (OWQGA)   at ( -9,-1) {OW} ;
\draw[triv] ([yshift=10pt] qqPRFSG.south west) to node [above,midway] {} ([yshift=0pt] cqPRFSG.north east);
\draw[triv] ([yshift=10pt]  qqUPSG.south west) to node [above,midway] {} ([yshift=0pt]  cqUPSG.north east);
\draw[triv] (qqPRFSG) to node [above,midway] {} (qqUPSG);
\draw[triv] (cqPRFSG) to node [above,midway] {} (cqUPSG);
\draw[triv] (cqPRFSG) to [bend right=50] node [above,midway] {} (ucMAC);
\draw[triv] (cqPRFSG) to node [above,midway] {} (PRSG); 
\draw[triv] (ucMAC)   to node [right,midway] {} (PrivMon);
\draw[triv] (SKE+MAC) to node [right,midway] {} (SKE);
\draw[triv] (SKE+MAC) to node [right,midway] {} (MAC.north east);
\draw[triv] (SKE)     to [bend left=20] node [right,midway] {} (QPOTP);
\draw[triv] (ucMAC)   to node [right,midway] {} (MAC);
\draw[triv] (cqUPSG)  to [bend right=45] node [left,midway] {} (MAC);
\draw[prev] ([yshift=5pt] OWF.east) to node [above,midway] {\cite{HuangMa}} (PRU.north); 
\draw[sepa] ([yshift=-5pt] PRU.north west) to node [below,midway] {\cite{Kre21}} ([yshift=-5pt] OWF.east);
\draw[prev] (PRU)     to node [right,midway] {\cite{TCC:AGQY22}} (qqPRFSG); 
\draw[prev] (cqPRFSG) to [bend left=10] node [right,pos=.15] {\cite{C:AnaQiaYue22}} ([xshift=-10pt] SKE+MAC.north); 
\draw[prev] (SKE+MAC) to node [above,midway] {\cite{EPRINT:MorYamYam24}} (ucMAC); 
\draw[prev] (qqUPSG)  to node [right,midway] {\cite{EPRINT:MorYamYam24}} (SKE+MAC); 
\draw[prev] (QPOTP) to [bend left=20] node [below,midway] {\cite{TQC:MorYam24}} (OWSG.east); 
\draw[prev] (QPOTP) to [bend left=35] node [right,midway] {\cite{TQC:MorYam24}} (EFI.east);
\draw[pure] (PrivMon)  to node [left ,midway] {\cite{TQC:MorYam24}}  (OWSG); 
\draw[pure] (OWSG)     to node [left ,midway] {\cite{STOC:KhuTom24}}  (OWPuzz); 
\draw[prev] (OWPuzz)   to node [left ,midway] {\cite{STOC:KhuTom24}}  (EFI); 
\draw[prev] (PRSG) to node [left,pos=.4] {\cite{C:JiLiuSon18}}  (PrivMon);
\draw[prev] (PRSG) to [bend left] node [right,pos=.7] {\cite{C:AnaQiaYue22}} (QPOTP); 
\draw[prev] (cqUPSG.west)  to [bend right=5] node [left,pos=.25] {\cite{EPRINT:MorYamYam24}} (OWSG); 
%
\draw[triv] (PRNRQGA) to node [right,midway] {}  (PRQGA); 
\draw[triv] (PRQGA) to node [right,midway] {} (OWQGA);
\draw[triv] (PRHDHQGA) to node [right,midway] {} (HDHQGA);
\draw[ours] (PRHDHQGA) to node [right,midway] {\cref{lem:PRetc->NR}}   (PRNRQGA); 
\draw[ours] (PRcDHQGA) to node [left,midway] {\cref{lem:com+DDH->NR}}   (PRNRQGA); 
\draw[ours] (PRQGA)   to node [above,midway] {\cref{thm:PR->PRSG}}    (PRSG);
\draw[ours] (HDHQGA)   to [bend left=18] node [above,pos=.25] {\cref{cor:HaarPR+Haar-DDH->SKE}} (SKE);
\draw[ours] (PRNRQGA) to node [above,midway] {\cref{thm:PRNR->PRFSG}} (cqPRFSG);
\draw[ours] (OWQGA)   to node [above,midway] {\cref{thm:OW->OWSG}}    (OWSG);
\draw[open] ([yshift=-10pt] cqPRFSG.north east) to node [below,midway] {??} ([yshift=0pt] qqPRFSG.south west) ; 
\draw[open] ([yshift=-10pt]  cqUPSG.north east) to node [below,midway] {??} ([yshift=0pt]  qqUPSG.south west) ;
\draw[open,<->] (PRU) to node [below,midway] {??} (PRHDHQGA);
\ifnum\fullpage=0
\end{tikzpicture}
\else 
\end{tikzpicture}
\fi
\caption{
Relations among primitives and QGAs. 
``cq'' and ``qq'' denote classical-query and quantum-query, respectively. 
An arrow from primitive A to primitive B (except for those with crosses) represents that A implies B. 
An arrow with a cross represents that there exists a black-box separation between A and B.
A gray dashed arrow represents that A trivially implies B. 
A zigzag arrow represents that A with pure outputs implies B. 
A red double arrow represents that the relation is shown in this paper. 
A blue arrow from primitive A to primitive B represents that we do not know the implication from A to B.
cqPRFSGs with the input space $\bit^\ell$ for $\ell = O(\log(\secp))$ can be constructed from PRSGs~\cite{C:AnaQiaYue22}. 
}
\label{fig:relations}
\end{figure}

We leave some interesting open problems: 
\begin{enumerate}
\item Do PR, Haar-PR, and Haar-DDH assumptions over QGAs imply \emph{quantum-query} PRFSGs? Or, can we show the separation between quantum-query/classical-query PRFSGs? 
\item Can we construct PR, Haar-PR, and Haar-DDH QGAs from PRUs? 
\item
Can we construct PRUs from PR, Haar-PR, and Haar-DDH QGAs, or from other ``genuinely quantum'' assumptions?
\end{enumerate}

\subsection{Related Works}
\label{sec:related_works}
As we have explained in Introduction, the important open problem is to base ``Microcrypt'' primitives
on ``OWFs-free'' concrete mathematical hardness assumptions. 
Recently, the following three papers that tackle the problem have been uploaded on arXiv
during the preparation of this manuscript.

Khurana and Tomer~\cite{cryptoeprint:2024/1490} constructed OWPuzzs from some hardness assumptions that imply
sampling-based quantum advantage~\cite{NatPhys:BFNV19,STOC:AarArk11,TD04,BreJozShe10,BreMonShe16,FKMNTT18}
(plus a mild complexity assumption, $\mathbf{P}^{\# \mathbf{P}}\not\subseteq (io)\mathbf{BQP}/\mathbf{qpoly}$).

Hiroka and Morimae~\cite{HM24_meta} and 
Cavalar, Goldin, Gray, and Hall [private communication]
constructed OWPuzzs from quantum-average-hardness of
GapK problem. GapK problem is a promise problem to decide whether a given bit string $x$ has a small Kolmogorov complexity or not.
Its quantum-average-hardness means that the instance $x$ is sampled from a quantum-polynomial-time samplable distribution, and
no quantum-polynomial-time algorithm can solve the problem.

Their assumptions are more concrete and already studied in other contexts than cryptography, namely, quantum advantage and (classical) meta-complexity.
On the other hand, the present paper construct PRFSGs (and therefore UPSGs, PRSGs, OWSGs, private-key quantum money schemes, IND-CPA SKE, EUF-CMA MAC, OWPuzzs, and EFI pairs).
It is an interesting open problem whether our QGAs assumptions can be instantiated with some
concrete assumptions related to quantum advantage or meta-complexity.

\section{Preliminaries} \label{sec:preliminaries}
\subsection{Basic Notations}
We use the standard notations of quantum information and cryptography.
For a finite set $X$, $x\gets X$ means that an element $x$ is sampled from $X$ uniformly at random.
We write $\mu_m$ to denote the Haar measure over $m$-qubits space. We often drop the subscription $m$. 
For an algorithm $\cA$, $y\gets\cA(x)$ means that $\cA$ is run on input $x$ and output $y$ is obtained.
For a non-negative integer $Q$, $[Q]$ is the set $\{1,2,...,Q\}$.
QPT stands for quantum polynomial time.
$\secp$ is the security parameter.
$\negl$ is a negligible function.
For two distributions $D$ and $D'$, we sometimes use $D \approx_c D'$ to denote $D$ and $D'$ are computationally indistinguishable with respect to a quantum adversary. 

\subsection{Quantum Cryptographic Primitives}
We review quantum cryptographic primitives in the literature. 
\begin{definition}[Pseudorandom State Generators (PRSGs)~\cite{C:JiLiuSon18}]
A pseudorandom state generator (PRSG) is a tuple $(\KeyGen,\StateGen)$ of algorithms such that
\begin{itemize}
    \item 
    $\KeyGen(1^\secp)\to k:$
    It is a QPT algorithm that, on input $1^\secp$, outputs a key $k$.
    \item 
    $\StateGen(k)\to |\phi_k\rangle:$
It is a QPT algorithm that, on input $k$,
outputs a quantum state $|\phi_k\rangle$. 
\end{itemize}
We require that for any QPT adversary $\cA$ and any polynomial $t$,
\begin{align}
\left|\Pr_{k\gets \KeyGen(1^\secp)}[1\gets\cA(1^\secp,|\phi_k\rangle^{\otimes t})]    
-\Pr_{|\psi\rangle\gets\mu}[1\gets\cA(1^\secp,|\psi\rangle^{\otimes t})]   \right|\le\negl(\secp), 
\end{align}
where $\mu$ is a Haar measure.
\end{definition}

\if0
We will use the following lemma.\mor{tukawanakunatta}
\begin{lemma}
\label{lem:PRSG_hybrid}
Let $\StateGen$ be a PRSG. Let $Q$ and $t$ be polynomials.
Define the following two distributions.
\begin{itemize}
    \item 
    $D_0^{t,Q}:$
    For each $q\in[Q]$, choose $k_q\gets\bit^{\kappa(\secp)}$ and run $|\phi_{k_q}\rangle\gets\StateGen(1^\secp,k_q)$.
    Output $\bigotimes_{q=1}^Q|\phi_{k_q}\rangle^{\otimes t}$.
    \item 
    $D_1^{t,Q}:$
    For each $q\in[Q]$, choose $|\psi_q\rangle\gets\mu$.
    Output $\bigotimes_{q=1}^Q|\psi_q\rangle^{\otimes t}$.
\end{itemize}    
Then for any QPT algorithm $\cA$ and any polynomials $Q$ and $t$, 
\begin{align}
\left|\Pr[1\gets \cA(D_0^{t,Q})]
-\Pr[1\gets \cA(D_1^{t,Q})]
\right|    \le\negl(\secp).
\end{align}
\end{lemma}
\begin{proof}[Proof of \cref{lem:PRSG_hybrid}]
This is shown from the standard hybrid argument.
Define the distribution $H_j^{t,Q}$ as follows.
\begin{enumerate}
    \item 
    For each $q \in\{1,2,...,j\}$, choose $|\psi_q\rangle\gets\mu$.
    \item 
    For each $q \in\{j+1,j+2,...,Q\}$, choose $k_q\gets\bit^{\kappa(\secp)}$ and run $|\phi_{k_q}\rangle\gets\StateGen(1^\secp,k_q)$.
    \item
    Output $(\bigotimes_{q=1}^j|\psi_q\rangle^{\otimes t})\otimes(\bigotimes_{q=j+1}^Q|\phi_{k_q}\rangle^{\otimes t})$.
\end{enumerate}
It is clear that $H_0^{t,Q}=D_0^{t,Q}$ and $H_Q^{t,Q}=D_1^{t,Q}$.

We can show that for any QPT $\cA$, 
any polynomials $t,Q$, and any $j\in[Q]$, 
\begin{align}
\left|\Pr[1\gets\cA(H_{j-1}^{t,Q})]    
-\Pr[1\gets\cA(H_j^{t,Q})]    \right|\le\negl(\secp).
\label{H_j}
\end{align}
To show it, assume that there exist a QPT $\cA$, 
polynomials $t,Q,p$, and a $j\in[Q]$ 
such that
\begin{align}
\left|\Pr[1\gets\cA(H_{j-1}^{t,Q})]    
-\Pr[1\gets\cA(H_j^{t,Q})]    \right|\ge\frac{1}{p(\secp)}
\end{align}
for infinitely many $\secp$.
Then we can construct a QPT adversary $\cB$ that breaks the security of the PRSG as follows.
\begin{enumerate}
\item 
The challenger $\cC$ of the security game of the PRSG chooses $b\gets\bit$.
    \item 
    If $b=0$, $\cC$ chooses $k\gets\bit^{\kappa(\secp)}$, runs $|\phi_k\rangle\gets\StateGen(1^\secp,k)$,
    and sends $|\xi\rangle^{\otimes t}\coloneqq |\phi_k\rangle^{\otimes t}$ to $\cB$.
    If $b=1$, $\cC$ chooses $|\psi\rangle\gets\mu$,
    and sends $|\xi\rangle^{\otimes t}\coloneqq |\psi\rangle^{\otimes t}$ to $\cB$.
    \item 
    $\cB$ runs $\cA$ on input $(\bigotimes_{q=1}^{j-1}|\psi_q\rangle^{\otimes t})\otimes
    |\xi\rangle^{\otimes t}\otimes(\bigotimes_{q=j+1}^Q|\phi_{k_q}\rangle^{\otimes t})$,
    and outputs its output.
    Here, $k_{j+1},...k_Q\gets\bit^{\kappa(\secp)}$ and $|\psi_1\rangle,...,|\psi_{j-1}\rangle$ are $t$-designs.
\end{enumerate}
We have
\begin{align}
\left|\Pr[1\gets \cB|b=0]-\Pr[1\gets\cA(H_{j-1}^{t,Q})]\right|&\le\negl(\secp),\\
\left|\Pr[1\gets \cB|b=1]-\Pr[1\gets\cA(H_{j}^{t,Q})]\right|&\le\negl(\secp).
\end{align}
Therefore, we have
\begin{align}
\left|\Pr[1\gets\cB|b=0]
-\Pr[1\gets\cB|b=1]
\right|    
&\ge 
\left|\Pr[1\gets\cA(H_{j-1}^{t,Q})]
-\Pr[1\gets\cA(H_{j}^{t,Q})]\right|-\negl(\secp)\\
&\ge\frac{1}{p(\secp)}-\negl(\secp)
\end{align}
for infinitely many $\secp$,
which means that the PRSG is broken.

From \cref{H_j} and the triangle inequality, we have
that for any QPT $\cA$ and any polynomials $Q,t$,
\begin{align}
\left|\Pr[1\gets\cA(D_0^{t,Q})]    
-\Pr[1\gets\cA(D_1^{t,Q})]    \right|\le\negl(\secp),
\end{align}
which completes the proof.
\end{proof}
\fi

\begin{definition}[One-Way State Generators (OWSGs)~\cite{C:MorYam22,TQC:MorYam24}]
A one-way state generator (OWSG) is a tuple $(\KeyGen,\StateGen,\Ver)$ of algorithms such that
\begin{itemize}
    \item 
    $\KeyGen(1^\secp)\to k:$
    It is a QPT algorithm that, on input $1^\secp$, outputs a classical bit string $k\in\bit^{\kappa(\secp)}$,
    where $\kappa$ is a polynomial.
    \item 
    $\StateGen(k)\to|\phi_k\rangle:$
    It is a QPT algorithm that, on input $k$, outputs a quantum state $|\phi_k\rangle$.
    \item 
    $\Ver(k',|\phi_k\rangle)\to\top/\bot:$
    It is a QPT algorithm that, on input $k'$ and $|\phi_k\rangle$, outputs $\top/\bot$.
\end{itemize}
We require the following correctness and one-wayness.
\paragraph{Correctness.}
\begin{align}
\Pr[\top\gets\Ver(k,|\phi_k\rangle):k\gets\KeyGen(1^\secp),|\phi_k\rangle\gets\StateGen(k)]\ge 1-\negl(\secp).    
\end{align}

\paragraph{One-wayness.}
For any QPT adversary $\cA$ and any polynomial $t$,
\ifnum\fullpage=1
\begin{align}
\Pr[\top\gets\Ver(k',|\phi_k\rangle):k\gets\KeyGen(1^\secp),|\phi_k\rangle\gets\StateGen(k),k'\gets\cA(1^\secp,|\phi_k\rangle^{\otimes t})]
\le\negl(\secp).    
\end{align}
\else
\begin{multline}
\Pr[\top\gets\Ver(k',|\phi_k\rangle):k\gets\KeyGen(1^\secp),|\phi_k\rangle\gets\StateGen(k),k'\gets\cA(1^\secp,|\phi_k\rangle^{\otimes t})] \\ 
\le\negl(\secp).    
\end{multline}
\fi

\if0
$\Pr[\top\gets\cC]\le\negl(\secp)$ is satisfied in the following security game.
\begin{enumerate}
    \item 
    The challenger $\cC$ runs $k\gets\KeyGen(1^\secp)$.
    \item 
    $\cC$ runs $|\phi_k\rangle\gets \StateGen(k)$ $t$ times.
    \item
    $\cC$ sends $|\phi_k\rangle^{\otimes t}$ to $\cA$.
    \item 
    $\cA$ returns $k'$.
    \item 
    $\cC$ projects $|\phi_k\rangle$ onto $|\phi_{k'}\rangle$. If the projection is successful,
    $\cC$ outputs $\top$. Otherwise, it outputs $\bot$.
\end{enumerate}
\fi
\end{definition}

\begin{remark}
If all $|\phi_k\rangle$ are pure and $\Pr[\top\gets\Ver(k,|\phi_k\rangle)]\ge1-\negl(\secp)$ is satisfied for all $k$, 
we can replace $\Ver$ with the following canonical verification algorithm:
Project $|\phi_k\rangle$ onto $|\phi_{k'}\rangle$. If the projection is successful, output $\top$.
    Otherwise, output $\bot$.
\end{remark}

\begin{definition}[Weak OWSGs~\cite{TQC:MorYam24}]
The definition of weak OWSGs is the same as that of OWSGs except that the one-wayness is replaced with the
following weak one-wayness:
there exists a polynomial $p$ such that for any QPT $\cA$ and polynomial $t$
\ifnum\fullpage=1
\begin{align}
\Pr[\top\gets\Ver(k',|\phi_k\rangle):k\gets\KeyGen(1^\secp),|\phi_k\rangle\gets\StateGen(k),k'\gets\cA(1^\secp,|\phi_k\rangle^{\otimes t})]\le 1-\frac{1}{p(\secp)}.
\end{align}
\else
\begin{multline}
\Pr[\top\gets\Ver(k',|\phi_k\rangle):k\gets\KeyGen(1^\secp),|\phi_k\rangle\gets\StateGen(k),k'\gets\cA(1^\secp,|\phi_k\rangle^{\otimes t})] \\ \le 1-\frac{1}{p(\secp)}.
\end{multline}
\fi
\end{definition}
\begin{remark}
It is shown in Theorem 3.7 of \cite{TQC:MorYam24} that
OWSGs exist if and only if weak OWSGs exist.
\end{remark}


\begin{definition}[Pseudorandom Function-Like State Generators (PRFSGs)~\cite{C:AnaQiaYue22}] 
\label{def:PRFSGs}
A pseudorandom function-like state generator (PRFSG) is a tuple $(\KeyGen,\StateGen)$    
of algorithms such that
\begin{itemize}
    \item 
    $\KeyGen(1^\secp)\to k:$
    It is a QPT algorithm that, on input $1^\secp$, outputs $k\in\bit^{\kappa(\secp)}$,
    where $\kappa$ is a polynomial.
    \item 
    $\StateGen(k,x)\to |\phi_k(x)\rangle:$
    It is a QPT algorithm that, on input $k$ and $x\in\bit^\ell$, outputs a quantum state $|\phi_k(x)\rangle$,
    where $\ell$ is a polynomial.
\end{itemize}
We require the following security:
For any QPT adversary $\cA$, 
\begin{align}
\left|\Pr_{k\gets\KeyGen(1^\secp)}[1\gets\cA^{\StateGen(k,\cdot)}(1^\secp)]    
-\Pr[1\gets\cA^{\cO_{\subHaar}}(1^\secp)] \right|\le\negl(\secp).   
\end{align}
The oracle $\cO_{\subHaar}$ is the following oracle:
\begin{enumerate}
    \item  
    When $x$ is queried and it is not queried before, sample $|\psi_x\rangle\gets\mu$ and return $|\psi_x\rangle$.
    \item 
    When $x$ is queried and it was queried before, return $|\psi_x\rangle$.
\end{enumerate}
\end{definition}

\begin{definition}[Classical-Query PRFSGs]
A PRFSG is called a classical-query PRFSG if
it is secure against only adversaries that query the oracle classically.
\end{definition}

\begin{remark} \label{remark:PRFSGs:Query}
In \cite{C:AnaQiaYue22,TCC:AGQY22}, general PRFSGs where adversaries can quantumly query the oracle
are defined and constructed from PRUs or OWFs.
In this paper, however, we mainly focus on classical-query ones.
\end{remark}

\subsection{Design and Haar Measure}
We will use the following lemmas to show our results.
\begin{lemma}[Lemma 20 and Lemma 21 of \cite{Kre21}]
\label{lem:design-Haar}
For each $n,t\in \mathbb{N}$ and $\epsilon>0$, 
there exists 
a $\poly(n, t, \log\frac{1}{\epsilon})$-time quantum algorithm $\cS$ 
that outputs an $n$-qubit state such that 
for any quantum algorithm $\cA$
\begin{align}
(1-\epsilon)\Pr_{|\psi\rangle\gets\mu}
[1\gets\cA(|\psi\rangle^{\otimes t})] 
\le 
\Pr_{|\psi\rangle\gets \cS}
[1\gets\cA(|\psi\rangle^{\otimes t})]
\le (1+\epsilon)\Pr_{|\psi\rangle\gets\mu}
[1\gets\cA(|\psi\rangle^{\otimes t})].
\end{align}
\end{lemma}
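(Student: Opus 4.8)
The plan is to reduce the stated multiplicative (two-sided, relative) indistinguishability to a single operator inequality between the $t$-copy density matrices, and then to build $\cS$ as an efficient \emph{relative-error} approximate state $t$-design. Write $\rho_{\mu}^{(t)} \coloneqq \Exp_{|\psi\rangle\gets\mu}|\psi\rangle\langle\psi|^{\otimes t}$ and $\rho_{\cS}^{(t)} \coloneqq \Exp_{|\psi\rangle\gets\cS}|\psi\rangle\langle\psi|^{\otimes t}$, both operators on $(\mathbb{C}^{2^n})^{\otimes t}$. Any quantum algorithm $\cA$ that receives $t$ copies, appends ancillas, applies a unitary, and measures one output qubit has acceptance probability of the form $\Pr[1\gets\cA(|\psi\rangle^{\otimes t})] = \Tr[M|\psi\rangle\langle\psi|^{\otimes t}]$ for a fixed operator $0\preceq M\preceq I$ obtained by tracing out the ancillas. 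Hence $\Pr_{\cD}[1\gets\cA(|\psi\rangle^{\otimes t})] = \Tr[M\rho_{\cD}^{(t)}]$ is linear in $\rho_{\cD}^{(t)}$ and, since $M\succeq 0$, monotone under the positive-semidefinite (PSD) order. Therefore it suffices to produce $\cS$ with $(1-\epsilon)\rho_{\mu}^{(t)}\preceq\rho_{\cS}^{(t)}\preceq(1+\epsilon)\rho_{\mu}^{(t)}$: sandwiching together with $\Tr[MX]\ge 0$ for $X\succeq 0$ then yields both inequalities of the lemma at once.

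Next I would recall the exact moment $\rho_{\mu}^{(t)} = \Pi_{\mathrm{sym}}/D$, the normalized projector onto the symmetric subspace of $(\mathbb{C}^{2^n})^{\otimes t}$, where $D=\binom{2^n+t-1}{t}$. The key structural observation is that $\rho_{\cS}^{(t)}$ is automatically supported inside the symmetric subspace for \emph{any} state ensemble $\cS$, since each $|\psi\rangle^{\otimes t}$ is symmetric. Consequently the required sandwich is equivalent to the single operator-norm estimate $\|\rho_{\cS}^{(t)}-\rho_{\mu}^{(t)}\|_{\infty}\le \epsilon/D$, i.e.\ an additive error measured against the exponentially small scale $1/D=\|\rho_{\mu}^{(t)}\|_{\infty}$. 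This is precisely the relative-error notion of a state design, and it is strictly stronger than the usual trace-distance (additive) notion, which would give only an additive indistinguishability bound and is useless here because $\Pr_{\mu}[1\gets\cA]$ can be as small as $1/D$.

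To construct $\cS$ I would set $|\psi\rangle = U|0^n\rangle$ with $U$ drawn from $k$ independent blocks of a local random quantum circuit, i.e.\ from an iterated approximate unitary $t$-design. Passing to the $t$-fold twirl channels, let $\Phi$ be the moment channel of one block and $\Phi_{\mu}$ the Haar twirl; both act as the identity on the commutant $\cC=\mathrm{span}\{P_{\pi}:\pi\in S_t\}$, and $\Phi_{\mu}$ is the orthogonal projection onto $\cC$, so $E\coloneqq\Phi-\Phi_{\mu}$ satisfies $E\Phi_{\mu}=\Phi_{\mu}E=0$ and hence $\Phi^{\circ k}=\Phi_{\mu}+E^{\circ k}$. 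Using the spectral gap of random circuits — one block contracts the complement of $\cC$ by a factor $g\le 1-1/\poly(n,t)$ — together with the standard conversion from the $2$-norm gap to a trace-norm bound (which costs an extra $O(tn)$ in the exponent), one obtains $\|\rho_{\cS}^{(t)}-\rho_{\mu}^{(t)}\|_{1}\le 2g^{k}$ for $\rho_{\cS}^{(t)}=\Phi^{\circ k}(|0\rangle\langle 0|^{\otimes t})$. Choosing $k=O\big((tn+\log(1/\epsilon))/\log(1/g)\big)=\poly(n,t,\log\tfrac1\epsilon)$ forces $2g^{k}D\le\epsilon$, which is exactly the operator-norm target $\epsilon/D$ above; since each block is a $\poly(n,t)$-size circuit, $\cS$ runs in time $\poly(n,t,\log\tfrac1\epsilon)$.

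I expect the main obstacle to be precisely the relative (rather than additive) error. The naive reading — ``iterate an approximate design until its trace distance to Haar drops below $\epsilon$'' — fails: reaching the scale $\epsilon/D$ in trace distance looks like it needs $\log(D/\epsilon)$ error reduction, and, more subtly, a generic additive design has $t$-th moment with support leaking outside the image of $\Pi_{\mathrm{sym}}$, which would break the upper operator bound for any finite $\epsilon$. Both issues are resolved by the two facts used above: the output $\rho_{\cS}^{(t)}$ lies \emph{exactly} in the symmetric subspace, so no support leaks; and geometric contraction from a $1/\poly$ spectral gap reaches the exponentially small target $\epsilon/D$ at only $\poly$ cost in depth, because the depth scales with the \emph{logarithm} of the target. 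The remaining care is bookkeeping the dimension factors in the $2$-norm-to-$1$-norm conversion and confirming the gap lower bound $1/\log(1/g)=\poly(n,t)$, for which I would invoke the known analyses of random quantum circuits as approximate unitary $t$-designs.
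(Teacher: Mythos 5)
The paper does not actually prove this lemma---it imports it verbatim from \cite{Kre21} (Lemmas 20 and 21 there), whose proof has exactly the two-step structure you give: Lemma~21 of \cite{Kre21} is your monotonicity reduction ($\Pr[1\gets\cA(|\psi\rangle^{\otimes t})]=\Tr[M|\psi\rangle\langle\psi|^{\otimes t}]$ with $0\preceq M\preceq I$, so a sandwich $(1-\epsilon)\rho^{(t)}_{\mu}\preceq\rho^{(t)}_{\cS}\preceq(1+\epsilon)\rho^{(t)}_{\mu}$ suffices), and Lemma~20 of \cite{Kre21} supplies the relative-error approximate $t$-design by invoking the Brand\~{a}o--Harrow--Horodecki spectral-gap analysis of local random circuits, which is precisely your $\Phi^{\circ k}=\Phi_{\mu}+E^{\circ k}$ contraction argument with the $2$-norm-to-$1$-norm dimension factor absorbed into $k=\poly(n,t,\log\frac{1}{\epsilon})$. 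Your reconstruction is correct---including the useful observation that $\rho^{(t)}_{\cS}$ is automatically supported in the symmetric subspace, so the sandwich is equivalent to an operator-norm bound at the scale $\epsilon/D$ with $D=\binom{2^n+t-1}{t}$, which is why trace-distance (additive) designs would not suffice---so it is essentially the cited proof with the design construction spelled out rather than quoted.
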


\begin{lemma}\label{lem:Haar-orthogonal}
\begin{align}
\mathbb{E}_{|\psi\rangle,|\phi\rangle\gets \mu_n}|\langle \psi|\phi\rangle|^2\le \frac{1}{2^n}.    
\end{align}    
\end{lemma}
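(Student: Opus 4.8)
The plan is to show that the expectation is in fact exactly $1/2^n$, so that the claimed inequality holds with equality. Write $d=2^n$ for the dimension of the $n$-qubit space. First I would use unitary invariance of the Haar measure to decouple the two states. Since $|\psi\rangle$ and $|\phi\rangle$ are drawn independently, I can write $\mathbb{E}_{|\psi\rangle,|\phi\rangle}|\langle\psi|\phi\rangle|^2=\mathbb{E}_{|\phi\rangle}\big[\mathbb{E}_{|\psi\rangle}|\langle\psi|\phi\rangle|^2\big]$ and observe that the inner expectation is independent of the particular value of $|\phi\rangle$: for any unitary $U$, replacing $|\psi\rangle$ by $U|\psi\rangle$ leaves the law of $|\psi\rangle$ unchanged, while $|\langle\psi|U^\dagger U|\phi\rangle|$ shows that the value at $|\phi\rangle$ equals the value at $U|\phi\rangle$. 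Hence it suffices to compute $\mathbb{E}_{|\psi\rangle\gets\mu_n}|\langle\psi|\phi_0\rangle|^2$ for an arbitrary fixed unit vector $|\phi_0\rangle$.

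Next I would rewrite the overlap as a quadratic form, $|\langle\psi|\phi_0\rangle|^2=\langle\phi_0|\,|\psi\rangle\langle\psi|\,|\phi_0\rangle$, and push the expectation inside to obtain $\langle\phi_0|\,\rho\,|\phi_0\rangle$, where $\rho\coloneqq\mathbb{E}_{|\psi\rangle\gets\mu_n}|\psi\rangle\langle\psi|$ is the first moment of the Haar measure. The key step is to identify $\rho=I/d$. This follows from Schur's lemma: for every unitary $U$, invariance of $\mu_n$ gives $U\rho U^\dagger=\mathbb{E}_{|\psi\rangle}U|\psi\rangle\langle\psi|U^\dagger=\mathbb{E}_{|\psi'\rangle}|\psi'\rangle\langle\psi'|=\rho$, so $\rho$ commutes with the (irreducible) action of the full unitary group and must be a scalar multiple of the identity; the scalar is pinned down by $\Tr\rho=\mathbb{E}_{|\psi\rangle}\langle\psi|\psi\rangle=1$, forcing $\rho=I/d$.

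Combining the two steps yields $\mathbb{E}_{|\psi\rangle,|\phi\rangle\gets\mu_n}|\langle\psi|\phi\rangle|^2=\langle\phi_0|(I/d)|\phi_0\rangle=1/d=1/2^n$, which is the claimed bound (in fact an equality). There is no serious technical obstacle here; the only point deserving care is the justification that $\rho$ is a scalar matrix, that is, making the unitary-invariance (Schur's lemma) argument precise rather than computing moments in coordinates. As an alternative, one could derive $\rho=I/d$ directly by integrating against the projector onto the symmetric subspace, but the invariance argument is cleaner and self-contained.
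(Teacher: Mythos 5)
Your proposal is correct and takes essentially the same route as the paper: both reduce the quantity to $\mathbb{E}_{|\phi\rangle}\langle\phi|\bigl[\mathbb{E}_{|\psi\rangle}|\psi\rangle\langle\psi|\bigr]|\phi\rangle$ and use the first-moment identity $\mathbb{E}_{|\psi\rangle\gets\mu_n}|\psi\rangle\langle\psi| = I^{\otimes n}/2^n$. The only difference is that the paper cites this identity as known, whereas you derive it from unitary invariance and Schur's lemma (and correctly observe that the bound holds with equality).
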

\begin{proof}
It is known that $\mathbb{E}_{|\psi\rangle\gets\mu_n}|\psi\rangle\langle \psi|=\frac{I^{\otimes n}}{2^n}$,    
where $I\coloneqq|0\rangle\langle0|+|1\rangle\langle1|$ is the two-dimentional identity operator.
Therefore, 
\begin{align}
\mathbb{E}_{|\psi\rangle,|\phi\rangle\gets \mu_n}|\langle \psi|\phi\rangle|^2
=\mathbb{E}_{|\phi\rangle\gets\mu_n}\langle \phi|[
\mathbb{E}_{|\psi\rangle\gets\mu_n}|\psi\rangle\langle\psi|]
|\phi\rangle
= \frac{1}{2^n}.    
\end{align}
\end{proof}

\section{Quantum Group Actions and Hardness Assumptions} \label{sec:QGA}
In this section, we define quantum group actions (QGAs) and
endow them with several hardness assumptions including one-wayness and variants of pseudorandomness.

\subsection{Quantum Group Actions}
We first define quantum group actions (QGAs).
\begin{definition}[Quantum Group Actions (QGAs)]
A quantum group action (QGA) is a pair $(G,S)$ of algorithms such that
\begin{itemize}
\item 
$G(1^\secp)\to [g]:$ 
It is a QPT algorithm that takes $1^\secp$ as input, and outputs an efficient classical description $[g]$ of a unitary
operator $g$. 
\item
$S(1^\secp)\to [|s\rangle]:$
It is a QPT algorithm that takes $1^\secp$ as input, and outputs an efficient classical description $[|s\rangle]$ of a quantum state $|s\rangle$.
\end{itemize}
\end{definition}
\begin{remark}
Note that we do not require that
the set $\{g:[g]\gets G(1^\secp)\}$ is a group of unitary operators. 
However, we call $(G,S)$ a quantum group action, because it is a quantum analogue of a group action.
\end{remark}
\begin{remark}
Note that $G$ and $S$ are not deterministic. This means that
each execution of $G$ or ($S$) can output different $[g]$ (or $[|s\rangle]$).
\end{remark}
\begin{remark}
An efficient classical description $[g]$ of $g$ means, for example, a classical description of a $\poly(\secp)$-size quantum circuit that implements $g$.
An efficient classical description $[|s\rangle]$ of $|s\rangle$ means, for example, a classical description of a $\poly(\secp)$-size quantum circuit that generates $|s\rangle$.
For simplicity, we often write $[g]$ and $[|s\rangle]$ just as $g$ and $|s\rangle$, respectively, if there is no confusion.
\end{remark}

\if0
In order to do analogues discussions with classical group actions, we define
the sets
\begin{align}
S_\secp\coloneqq\{|\psi_j\rangle:j\in\bit^{\sigma(\secp)},|\psi_j\rangle\gets \cS(1^\secp,j)\}
\end{align}
and
\begin{align}
G_\secp\coloneqq\{U_k:k\in\bit^{\sigma(\secp)},U_k|\psi\rangle\gets \cG(1^\secp,k,|\psi\rangle)\}.
\end{align}
Moreover, we often omit the security parameter $\secp$ and just write $S$ and $G$.
\fi

\subsection{One-Way QGAs} \label{sec:QGA:OW}
We next define a quantum analogue of one-way group actions. 

\begin{definition}[One-Way QGAs (OW QGAs)]
A QGA $(G,S)$ is called a one-way QGA (OW QGA) if
for any QPT adversary $\cA$ and any polynomial $t$,
$\Pr[\top\gets\cC]\le\negl(\secp)$ is satisfied in the following security game.
\begin{enumerate}
    \item 
    The challenger $\cC$ runs $[|s\rangle]\gets S(1^\secp)$ and $[g]\gets G(1^\secp)$.
    \item
    $\cC$ sends $|s\rangle^{\otimes t}$ and $(g|s\rangle)^{\otimes t}$ to $\cA$.
    \item 
    $\cA$ returns an efficient classical description $[g']$ of a unitary $g'$.\footnote{$[g']$ could be outside of the support of $G$.} 
    \item 
    $\cC$ projects $g|s\rangle$ onto $g'|s\rangle$.
    If the projection is successful,
    $\cC$ outputs $\top$. Otherwise, it outputs $\bot$.
\end{enumerate}
\end{definition}

\begin{lemma} \label{thm:OW->OWSG}
If OW QGAs exist then OWSGs exist.    
\end{lemma}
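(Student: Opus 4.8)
The plan is to turn an OW QGA into an OWSG whose classical key stores the descriptions of both the state and the unitary, and whose output state is the pair $|s\rangle\otimes g|s\rangle$. The point is that a key $k'$ passing verification is forced to reproduce $|s\rangle$ in the first register \emph{and} to send it to $g|s\rangle$ in the second, so its unitary part already inverts the OW QGA challenge and no additional ``matching'' information is needed. Concretely, let $\KeyGen(1^\secp)$ run $[|s\rangle]\gets S(1^\secp)$ and $[g]\gets G(1^\secp)$ and output $k\coloneqq([|s\rangle],[g])$, and let $\StateGen(k)$ output $|\phi_k\rangle\coloneqq|s\rangle\otimes g|s\rangle$. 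Since $S$ and $G$ produce pure states and unitaries, every $|\phi_k\rangle$ is pure and $\Ver(k,|\phi_k\rangle)$ accepts with probability $1$, so the canonical verification of the preceding remark applies: on input $k'=([|s'\rangle],[g'])$, generate $|\phi_{k'}\rangle=|s'\rangle\otimes g'|s'\rangle$ and project $|\phi_k\rangle$ onto it. Writing $F_1\coloneqq|\langle s'|s\rangle|^2$ and $F_2\coloneqq|\langle s'|g'^{\dagger} g|s\rangle|^2$, the acceptance probability is exactly $F_1F_2$.

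For one-wayness I would reduce to the OW QGA game. Given an OWSG adversary $\cA$, build $\cB$ that, receiving $|s\rangle^{\otimes t}$ and $(g|s\rangle)^{\otimes t}$, forms $(|s\rangle\otimes g|s\rangle)^{\otimes t}$, runs $\cA$ to obtain $k'=([|s'\rangle],[g'])$, and outputs the unitary $g''\coloneqq g'$. The OW QGA success probability is $P\coloneqq|\langle s|g'^{\dagger} g|s\rangle|^2=|\langle g's|gs\rangle|^2$. Fixing the phase of $|s'\rangle$ so that $\langle s'|s\rangle\ge 0$ gives $\||s\rangle-|s'\rangle\|^2=2-2\sqrt{F_1}$, and since $g'$ is unitary and $\||gs\rangle\|=1$,
\begin{align}
|\langle g's|gs\rangle|\ \ge\ |\langle g's'|gs\rangle|-\|g'(|s\rangle-|s'\rangle)\|\ =\ \sqrt{F_2}-\sqrt{2-2\sqrt{F_1}}.
\end{align}
Thus whenever $F_1$ and $F_2$ are both close to $1$, so is $P$.

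To exploit this I would prove \emph{weak} one-wayness and then invoke the equivalence of OWSGs and weak OWSGs (Theorem 3.7 of \cite{TQC:MorYam24}). Suppose $\cA$ wins the OWSG game with probability $\ge 1-\eta$, i.e.\ $\Exp[F_1F_2]\ge 1-\eta$. Since $F_1F_2\le1$, Markov's inequality gives $F_1F_2\ge 1-\sqrt{\eta}$ (hence $F_1,F_2\ge1-\sqrt{\eta}$) with probability at least $1-\sqrt{\eta}$; plugging into the displayed bound yields $P\ge\big(\sqrt{1-\sqrt{\eta}}-\sqrt{2-2\sqrt{1-\sqrt{\eta}}}\big)^2$, a quantity tending to $1$ as $\eta\to0$. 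Hence $\cB$ wins OW QGA with probability at least $(1-\sqrt{\eta})$ times this value, which exceeds any negligible function for a small enough constant $\eta$, contradicting OW QGA security. This rules out OWSG success $\ge1-\eta_0$ for some constant $\eta_0>0$, establishing weak one-wayness with $p=1/\eta_0$, and the equivalence then yields a (pure) OWSG.

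The main obstacle is precisely this last step: the fidelity bound is only conclusive in the high-success regime, because a merely non-negligible value of the \emph{product} $F_1F_2$ need not force either factor close to $1$ (e.g.\ each could be of order $\sqrt{\eta}$, where the triangle-inequality bound becomes vacuous). This is why I route through weak OWSGs, where it suffices to preclude success close to $1$, and amplify via the cited equivalence. A secondary care point is the phase alignment making $\langle s'|s\rangle$ real and nonnegative, which is what lets the triangle inequality with the unitary $g'$ relate the single fidelity $P$ to the two fidelities $F_1,F_2$; note also that it is essential to include the $|s\rangle$ register in $|\phi_k\rangle$, since otherwise an adversary could choose $|s'\rangle\coloneqq g|s\rangle$ and $g'\coloneqq I$ to pass verification while revealing nothing about inverting on $|s\rangle$.
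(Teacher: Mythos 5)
Your proposal is correct, and it shares the paper's skeleton: the same $\StateGen$ outputting $|s\rangle\otimes g|s\rangle$, a proof of \emph{weak} one-wayness, and an appeal to Theorem 3.7 of \cite{TQC:MorYam24} to amplify to a full OWSG. The genuine difference is the verification algorithm, and this changes the character of the reduction. The paper's $\Ver$ ignores the adversary's claimed state $|s'\rangle$ entirely: it applies $g'\otimes I$ to $|\phi_k\rangle$ and runs a SWAP test between the two registers, so the acceptance probability is exactly $\frac{1+|\langle s|(g')^\dagger g|s\rangle|^2}{2}$, an affine function of precisely the quantity $P=|\langle s|(g')^\dagger g|s\rangle|^2$ that the OW QGA challenger measures. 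The paper's reduction is therefore immediate and works for every polynomial $p$: average acceptance $\ge 1-1/p$ gives average $P\ge 1-2/p$, with no Markov step and no fidelity manipulation. Your canonical verification instead projects onto $|s'\rangle\otimes g'|s'\rangle$, so the acceptance probability $F_1F_2$ involves $|s'\rangle$, and you must pass from $(F_1,F_2)$ to $P$ via phase alignment and the triangle inequality, a bound that is only meaningful in the high-fidelity regime; this is exactly why you are forced into the constant-gap weak one-wayness statement. Your diagnosis of that obstacle is accurate and not an artifact: e.g.\ taking $|s'\rangle=(|s\rangle+|s^\perp\rangle)/\sqrt{2}$ and $g'$ with $g'|s^\perp\rangle=g|s\rangle$ and $g'|s\rangle\perp g|s\rangle$ gives $F_1F_2=1/4$ while $P=0$, so the reduction genuinely cannot extract anything outside the high-success regime. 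What your route buys: honest keys are accepted with probability exactly $1$, and the construction has canonical verification from the outset. What the paper's route buys: a two-line quantitative analysis with only a factor-2 loss, at the cost that the SWAP test accepts with probability $\ge 1/2$ always, so its construction is unavoidably only weakly one-way. Both arguments are sound, and your Markov step with a fixed constant $\eta_0$ suffices because weak one-wayness only requires exhibiting a single polynomial $p$.
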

\begin{proof}
\if0
We first show the if part. Let $(\KeyGen,\StateGen)$ be a OWSG.
From it, we construct a OW QGA as follows:
\begin{itemize}
    \item 
    $\cG(1^\secp,k,|\psi\rangle)$: Run $|\phi_k\rangle\gets\StateGen(k)$, and output $|\phi_k\rangle$.
    \item 
    $\cS(1^\secp,j)$: Always output $|0...0\rangle$.
\end{itemize}
Assume that it is not a OW QGA. Then there exist a QPT adversary $\cA$ and polynomials $t,p$ such that
$\Pr[\top\gets\cC]\ge\frac{1}{p(\secp)}$ for infinitely many $\secp$ in the following security game.
\begin{enumerate}
\item 
$\cC$ chooses $k\gets\bit^{\kappa(\secp)}$. \xagawa{Should we choose $k$ according to $\KeyGen$ of OWSG?}
    \item
    $\cC$ sends $|0...0\rangle^{\otimes t}\otimes|\phi_k\rangle^{\otimes t}$ to $\cA$.
    \item 
    $\cA$ returns $k'$.
    \item 
    $\cC$ projects $|\phi_k\rangle$ onto $|\phi_{k'}\rangle$. If the projection is successful,
    $\cC$ outputs $\top$. Otherwise, it outputs $\bot$.
\end{enumerate}
It is easy to see that from this $\cA$ we can construct a QPT adversary that breaks the OWSG.
\fi

Let $(G,S)$ be a OW QGA.    
From it, we construct a weak OWSG $(\KeyGen,\allowbreak \StateGen,\allowbreak \Ver)$ as follows.
\begin{itemize}
    \item 
    $\KeyGen(1^\secp)\to k:$
    Run $[|s\rangle]\gets S(1^\secp)$ and $[g]\gets G(1^\secp)$.
    Output $k\coloneqq([|s\rangle],[g])$.
    \item 
    $\StateGen(k)\to|\phi_k\rangle:$
    Parse $k=([|s\rangle],[g])$.
    Output $|\phi_k\rangle\coloneqq |s\rangle\otimes g|s\rangle$.
    \item 
    $\Ver(k',|\phi_k\rangle)\to\top/\bot:$
    Parse $k'=([|s'\rangle],[g'])$.
    Apply $g'\otimes I$ on $|\phi_k\rangle$ and do the SWAP test between the two registers.
\end{itemize}
Assume that this is not weak one-way. Then, for any polynomial $p$, there exists a QPT $\cA$ and a polynomial $t$ such that
\ifnum\fullpage=1
\begin{align}
&\sum_{[|s\rangle],[g]}\Pr[[|s\rangle]\gets S(1^\secp)]\Pr[[g]\gets G(1^\secp)]
\sum_{[|s'\rangle],[g']}\Pr[([|s'\rangle],[g'])\gets\cA(1^\secp,(|s\rangle \otimes g|s\rangle)^{\otimes t})]
\cdot \frac{1+|\langle s|(g')^\dagger g|s\rangle|^2}{2}\\
&\ge 1-\frac{1}{p(\secp)},
\end{align}
\else
\begin{align}
&\sum_{[|s\rangle],[g]}\Pr[[|s\rangle]\gets S(1^\secp)]\Pr[[g]\gets G(1^\secp)] \nonumber \\ 
&\quad\left(
\sum_{[|s'\rangle],[g']}\Pr[([|s'\rangle],[g'])\gets\cA(1^\secp,(|s\rangle \otimes g|s\rangle)^{\otimes t})]
\cdot \frac{1+|\langle s|(g')^\dagger g|s\rangle|^2}{2} \right) \\ 
&\ge 1-\frac{1}{p(\secp)},
\end{align}
\fi
which means that
\ifnum\fullpage=1
\begin{align}
&\sum_{[|s\rangle],[g]}\Pr[[|s\rangle]\gets S(1^\secp)]\Pr[[g]\gets G(1^\secp)]
\sum_{[|s'\rangle],[g']}\Pr[([|s'\rangle],[g'])\gets\cA(1^\secp,(|s\rangle \otimes g|s\rangle)^{\otimes t})]
\cdot |\langle s|(g')^\dagger g|s\rangle|^2\\
&\ge 1-\frac{2}{p(\secp)}.
\end{align}
\else
\begin{align}
&\sum_{[|s\rangle],[g]}\Pr[[|s\rangle]\gets S(1^\secp)]\Pr[[g]\gets G(1^\secp)] \nonumber \\
&\quad\left(\sum_{[|s'\rangle],[g']}\Pr[([|s'\rangle],[g'])\gets\cA(1^\secp,(|s\rangle \otimes g|s\rangle)^{\otimes t})]
 \cdot |\langle s|(g')^\dagger g|s\rangle|^2 \right) \\
&\ge 1-\frac{2}{p(\secp)}.
\end{align}
\fi
It is clear that we can construct a QPT adversary that breaks the OW QGA from this $\cA$.

From Theorem 3.7 of \cite{TQC:MorYam24}, we obtain a pure OWSG $(\KeyGen',\StateGen',\Ver')$ from this weak OWSG.
Moreover, we can check that
$\Pr[\top\gets\Ver'(k,|\phi_k\rangle)]\ge1-\negl(\secp)$
is satisfied for all $k$.
Then, as is shown in Appendix B of \cite{TQC:MorYam24},
we can construct another OWSG with the canonical verification.
\end{proof}

\subsection{Pseudorandom QGAs}  \label{sec:QGA:PR}
We also introduce quantum analogue of pseudorandom group actions. 
We define three types of pseudorandomness of QGAs,
 which we call pseudorandom (PR), Haar-pseudorandom (Haar-PR), and DDH.  

\if0
We first recall the definition of a (classical) weak pseudorandom group actions introduced
in \cite{AC:ADMP20}\footnote{Actually, it is equivalent to pseudorandom group actions defined in \cite{TCC:JQSY19} by using the hybrid
argument. In the quantum case, we do not know whether they are equivalent or not, because sampling Haar random states is not efficient.}.
A group action $(G,S,\star)$ is weak pseudorandom
if the following two distributions are
computationally indistinguishable:
\begin{itemize}
    \item 
    $D_0:$ 
    $g\gets G$. 
    For $q\in[Q]$, $s_q\gets S$.
    Return $\{s_q,g\star s_q\}_{q\in[Q]}$.
   \item  
   $D_1$: 
   For $q\in[Q]$, $s_q\gets S$ and $s_q'\gets S$.
   Return $\{s_q,s_q'\}_{q\in[Q]}$.
\end{itemize}
As its quantum analogue, we define weak pseudorandom QGAs (wPR-QGAs) as follows.
\begin{definition}[Weak Pseudorandom QGAs (wPR-QGAs) (Haar-DDH QGA)] 
A QGA $(G,S)$ is weak pseudorandom if the following two 
distributions are computationally indistinguishable: 
\begin{align*}
D_0 &: g \gets G, \text{ for $q \in [Q]$ } \ket{s_q} \gets \mu; \text{ return } \{(\ket{s_q}, g \ket{s_q})^{\otimes t}\}_{q \in [Q]} \\
D_1 &: \text{ for $q \in [Q]$ } \ket{s_q} \gets \mu, h_q \gets G; \text{ return } \{(\ket{s_q}, h_q \ket{s_q})^{\otimes t} \}_{q \in [Q]} \\
\end{align*}
\end{definition}
Note that this can also be considered as a quantum version of the decisional Diffie-Hellman (DDH) assumption.
\fi

\if0
There are two possible quantum generalizations of weak pseudorandom group actions.
First, $s_q'\gets S$ is changed to $|s_q'\rangle\gets \mu$.
Namely, we have the following definition, which we call type-I wPR-QGAs.
\begin{definition}[Type-I wPR-QGAs]
A QGA $(G,S)$ is called type-I weak pseudorandom
if the following two distributions are computationally indistinguishable.
\begin{itemize}
    \item 
    $D_0:$ 
    $g\gets G$. 
    For $q\in[Q]$, $|s_q\rangle\gets S$.
    Return $\{|s_q\rangle^{\otimes t},(g|s_q\rangle)^{\otimes t}\}_{q\in[Q]}$.
   \item  
   $D_1$: 
   For $q\in[Q]$, $|s_q\rangle\gets S$ and $|s_q'\rangle\gets \mu$.
   Return $\{|s_q\rangle^{\otimes t},|s_q'\rangle^{\otimes t}\}_{q\in[Q]}$.
\end{itemize}
\end{definition}

A second possible quantum generalization is that both
$s_q\gets S$ and $s_q'\gets S$ are changed to
$|s_q\rangle\gets \mu$ and $|s_q'\rangle\gets \mu$, respectively.
Namely, we have the following definition, which we call
the type-II wPR-QGAs.
\begin{definition}[Type-II wPR-QGAs]
A QGA $(G,S)$ is type-II weakly pseudorandom if
the following two distributions are computationally indistinguishable.
\begin{itemize}
    \item 
    $D_0:$ 
    $g\gets G$. 
    For $q\in[Q]$, $|s_q\rangle\gets \mu$.
    Return $\{|s_q\rangle^{\otimes t},(g|s_q\rangle)^{\otimes t}\}_{q\in[Q]}$.
   \item  
   $D_1$: 
   For $q\in[Q]$, $|s_q\rangle\gets \mu$ and $|s_q'\rangle\gets \mu$.
   Return $\{|s_q\rangle^{\otimes t},|s_q'\rangle^{\otimes t}\}_{q\in[Q]}$.
\end{itemize}
\end{definition}
\fi

\if0
\begin{definition}[Pseudorandom QGAs (PR-QGAs)]
A QGA $(G,S)$ is pseudorandom if
for any QPT adversary $\cA$ and any polynomial $t$,
$\Pr[b=b']\le\frac{1}{2}+\negl(\secp)$ is satisfied in the following security game.
\begin{enumerate}
    \item 
    The challenger $\cC$ chooses $k\gets\bit^\secp$ and $j,r\gets\bit^{p(\secp)}$.
    \item 
    $\cC$ runs $|\psi_j\rangle\gets S(1^\secp,j)$ $t$ times.
    $\cC$ runs $|\psi_r\rangle\gets S(1^\secp,r)$ $t$ times.
    \item 
    $\cC$ runs $g_k|\psi_j\rangle\gets G(1^\secp,k,|\psi_j\rangle)$ $t$ times.
    \item 
    $\cC$ chooses $b\gets\bit$.
    If $b=0$, $\cC$ sends $|\psi_j\rangle^{\otimes t}$ and $(g_k|\psi_j\rangle)^{\otimes t}$ to $\cA$.
    If $b=1$, $\cC$ sends $|\psi_j\rangle^{\otimes t}$ and $|\psi_r\rangle^{\otimes t}$ to $\cA$.
    \item 
    $\cA$ outputs $b'$.
\end{enumerate}
\end{definition}
\fi

\if0
\begin{theorem}
If PRSGs exist, then type-I wPR-QGAs exist. 
\end{theorem}

\begin{theorem}
If type-I wPR-QGAs exist for $Q=1$ then PRSGs exist. 
\end{theorem}

\mor{From pseudorandom QGAs, we can construct keyed-PRSGs like \cite{TCC:JQSY19}.}

\mor{What is the relation between pseudorandom QGA and weak pseudorandom QGA?}
\fi

%
%
%
%

\begin{definition}[PR QGAs]
We say that a QGA $(G,S)$ is \emph{pseudorandom (PR)} if the following two distributions are computationally indistinguishable for any polynomial $t$: 
\begin{align*}
D_{\subpr,0} &: 
\ket{s_0} \gets S, h \gets G;
\text{ return } (|s_0\rangle,h \ket{s_0})^{\otimes t} \\
D_{\subpr,1} &: 
\ket{s_0} \gets S, 
\ket{s} \gets \mu; 
\text{ return } (|s_0\rangle,\ket{s})^{\otimes t}. 
\end{align*}   
\end{definition}

We can show that the multiple samples are also pseudorandom.
\begin{lemma}\label{lem:PR_Q}
Let $(G,S)$ be a PR QGA.
Then the following two distributions are computationally indistinguishable for any polynomials $Q$ and $t$: 
\begin{align*}
D_{\subpr,0}' &: 
\ket{s_0} \gets S, \text{ for $q \in [Q]$ }  h_q \gets G;
\text{ return } \{(h_q \ket{s_0})^{\otimes t}\}_{q \in [Q]} \\
D_{\subpr,1}' &: 
\text{ for $q \in [Q]$ } \ket{s_q} \gets \mu; 
\text{ return } \{\ket{s_q}^{\otimes t} \}_{q \in [Q]}. 
\end{align*}
\end{lemma}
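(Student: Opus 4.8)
The plan is to run a standard hybrid argument over the $Q$ slots, using the single-sample pseudorandomness as the atomic step. For $j \in \{0,1,\dots,Q\}$ define the hybrid distribution $H_j$ as follows: sample $|s_0\rangle \gets S$ once; for $q \in \{1,\dots,j\}$ sample $h_q \gets G$ and output $(h_q|s_0\rangle)^{\otimes t}$; for $q \in \{j+1,\dots,Q\}$ sample $|s_q\rangle \gets \mu$ and output $|s_q\rangle^{\otimes t}$. Then $H_Q = D_{\subpr,0}'$ and $H_0 = D_{\subpr,1}'$, so by the triangle inequality it suffices to show $H_{j-1} \approx_c H_j$ for every $j \in [Q]$, since $Q$ is polynomial and the per-step gap will be negligible.

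To prove $H_{j-1} \approx_c H_j$, I would build a reduction $\cB$ against the single-sample PR property. Note that $H_{j-1}$ and $H_j$ differ only in the $j$-th slot: in $H_{j-1}$ it is a Haar state $|s_j\rangle^{\otimes t}$, whereas in $H_j$ it is $(h_j|s_0\rangle)^{\otimes t}$. The key point is that single-sample PR reveals copies of $|s_0\rangle$ itself (the challenge is $(|s_0\rangle,\xi)^{\otimes T}$, i.e. $|s_0\rangle^{\otimes T}\otimes\xi^{\otimes T}$), which is exactly what lets $\cB$ manufacture the other $G$-action slots. Concretely, $\cB$ requests the challenge with copy parameter $T \coloneqq Q\cdot t$, with $\xi = h|s_0\rangle$ ($D_{\subpr,0}$) or $\xi \gets \mu$ ($D_{\subpr,1}$). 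It then assembles $\cA$'s view by placing $\xi^{\otimes t}$ in slot $j$, and for each earlier slot $q < j$ sampling $h_q \gets G$ itself and applying $h_q$ to $t$ fresh copies of $|s_0\rangle$ drawn from the $|s_0\rangle^{\otimes T}$ register. This uses at most $(Q-1)t \le T$ copies of $|s_0\rangle$ and $t \le T$ copies of $\xi$, which is precisely why the enlarged but still polynomial parameter $T = Qt$ is requested.

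The main obstacle is the later slots $q > j$, which must be Haar random, since $\cB$ cannot sample from $\mu$ efficiently. Here I would invoke \cref{lem:design-Haar}: $\cB$ replaces each such Haar slot by the output of the efficient approximate-design generator $\cS$ with accuracy $\epsilon$, running in time $\poly(\secp, t, \log(1/\epsilon))$. With $\xi = h|s_0\rangle$ the reduction's output equals $H_j$ except that its Haar slots are design slots, and with $\xi \gets \mu$ it equals $H_{j-1}$ up to the same substitution; hence $\cB$'s single-sample PR advantage equals $\cA$'s distinguishing advantage between these two design-substituted hybrids. By \cref{lem:design-Haar}, swapping one Haar slot for a design slot changes $\cA$'s acceptance probability by at most $\epsilon$ (converting the multiplicative $(1\pm\epsilon)$ bound to an additive one), so over the at most $Q$ affected slots each design-substituted hybrid differs from the true hybrid by at most $Q\epsilon$. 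Choosing $\epsilon = \negl(\secp)$ keeps $\cB$ efficient and gives $|\Pr[1\gets\cA(H_{j-1})] - \Pr[1\gets\cA(H_j)]| \le \negl(\secp) + 2Q\epsilon = \negl(\secp)$. Summing over $j \in [Q]$ yields the claim. The only points needing care are the copy budget, i.e. checking that the single parameter $T = Qt$ simultaneously supplies slot $j$'s challenge copies and all earlier slots' $|s_0\rangle$ copies, and the multiplicative-to-additive conversion of the design error; both are routine.
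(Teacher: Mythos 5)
Your proposal is correct and follows essentially the same route as the paper's proof: a hybrid argument over the $Q$ slots (yours indexed in the reverse direction, which is immaterial), a reduction to single-sample PR that requests the challenge with copy parameter $Qt$ precisely so that the $|s_0\rangle^{\otimes Qt}$ register can be used to manufacture the remaining group-action slots, and \cref{lem:design-Haar} to simulate the Haar slots with $t$-designs. Your explicit accounting of the multiplicative-to-additive design error and its accumulation over slots is slightly more careful than the paper, which simply asserts the replacement; otherwise the two arguments coincide.
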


\begin{proof}
Let $(G,S)$ be a PR QGA.
Define the distributions $H_j^{t,Q}$ for $j=0,\dots,Q$ as follows.
\begin{itemize}
\item
$\ket{s_0} \gets S$
    \item 
    For $q\in\{1,2,...,j\}$,
$ \ket{s_q} \gets \mu$.
\item 
For $q\in\{j+1,...,Q\}$,
$h_q \gets G$.
\item
Output
$\{\ket{s_q}^{\otimes t} \}_{q \in\{1,...,j\}}$
and
$\{(h_q \ket{s_0})^{\otimes t}\}_{q \in \{j+1,...,Q\}}$.
\end{itemize}
It is clear that $D_{\subpr,0}'=H_0^{t,Q}$
and $D_{\subpr,1}'=H_Q^{t,Q}$.
We claim that for any QPT adversary $\cA$, any polynomials $Q,t$, and any $j\in[Q]$
\begin{align*}
\left\lvert
\Pr[1\gets\cA(H_{j-1}^{t,Q})]
-\Pr[1\gets\cA(H_j^{t,Q})]
\right\rvert
\le\negl(\secp).
\end{align*}
To show this claim, assume that there exist a QPT $\cA$, polynomials $Q,t,p$, and $j\in[Q]$ such that
\begin{align*}
\left|    
\Pr[1\gets\cA(H_{j-1}^{t,Q})]
-\Pr[1\gets\cA(H_j^{t,Q})]
\right|
\ge\frac{1}{p(\secp)}
\end{align*}
for infinitely many $\secp$.
Then we can construct a QPT adversary $\cB$ that breaks the security of the PR QGA as follows:
\begin{enumerate}
\item 
The challenger $\cC$ chooses $|s_0\rangle\gets S$ and $b\gets\bit$.
    \item 
    If $b=0$, $\cC$ chooses $h\gets G$ and sends $(|s_0\rangle,h|s_0\rangle)^{\otimes Qt}$ to $\cB$.
    If $b=1$, $\cC$ chooses $|s\rangle\gets\mu$ and sends $(|s_0\rangle,|s\rangle)^{\otimes Qt}$ to $\cB$.
    \item 
    $\cB$ runs $\cA$ on input $\{|s_q\rangle^{\otimes t}\}_{q\in\{1,...,j-1\}}$, 
    the received state (i.e., $(h|s_0\rangle)^{\otimes t}$ or $|s\rangle^{\otimes t}$), and $\{(h_q|s_0\rangle)^{\otimes t}\}_{q\in\{j+1,...,Q\}}$,
    and outputs $\cA$'s output.
    Here, all $|s_q\rangle$ are $t$-designs and each $h_q\gets G$. (From \cref{lem:design-Haar}, we can replace Haar random states with $t$-design states.)
    Note that $\cB$ can efficiently generate the $\{(h_q|s_0\rangle)^{\otimes t}\}_{q\in\{j+1,...,Q\}}$
    because $\cB$ receives $|s_0\rangle^{\otimes Qt}$ from $\cC$.
\end{enumerate}
We have
\begin{align*}
\left|\Pr[1\gets \cB \mid b=0]-\Pr[1\gets\cA(H_{j-1}^{t,Q})]\right|&\le\negl(\secp),\\
\left|\Pr[1\gets \cB \mid b=1]-\Pr[1\gets\cA(H_{j}^{t,Q})]\right|&\le\negl(\secp).
\end{align*}
Therefore, we have
\begin{align*}
&\left|\Pr[1\gets\cB \mid b=0]
-\Pr[1\gets\cB \mid b=1]
\right|  \\
&\ge 
\left|\Pr[1\gets\cA(H_{j-1}^{t,Q})]
-\Pr[1\gets\cA(H_{j}^{t,Q})]\right|-\negl(\secp)\\
&\ge\frac{1}{p(\secp)}-\negl(\secp)
\end{align*}
for infinitely many $\secp$,
which means that the PR QGA is broken.
\end{proof}

It is obvious that PR QGAs directly imply PRSGs.
\begin{lemma}
\label{thm:PR->PRSG}
If PR QGAs exist, then PRSGs exist.    
\end{lemma}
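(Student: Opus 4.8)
The plan is to build a PRSG directly from the PR QGA $(G,S)$ by letting the key encode one group element together with one base state, and letting the generated state be the action of the element on the base state. Concretely, I would define $\KeyGen(1^\secp)$ to run $[|s_0\rangle]\gets S(1^\secp)$ and $[h]\gets G(1^\secp)$ and output $k\coloneqq([|s_0\rangle],[h])$, and define $\StateGen(k)$ to parse $k=([|s_0\rangle],[h])$, generate $|s_0\rangle$ from its description, apply $h$, and output $|\phi_k\rangle\coloneqq h|s_0\rangle$. Both algorithms are QPT, since $[|s_0\rangle]$ and $[h]$ are efficient classical descriptions of $\poly(\secp)$-size circuits and $k$ is a classical bit string, as required by the PRSG syntax.

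The key observation is that the state produced by $\StateGen$ is exactly the second register of $D_{\subpr,0}$: over the choice of $k$, the family $\{(h|s_0\rangle)^{\otimes t}\}$ is identical to the second-register marginal of $(|s_0\rangle,h|s_0\rangle)^{\otimes t}$ with $|s_0\rangle\gets S$ and $h\gets G$. Meanwhile the Haar target in the PRSG security game, $|\psi\rangle\gets\mu$, matches the second register of $D_{\subpr,1}$, where $|s\rangle\gets\mu$ by definition of a PR QGA. Thus the two worlds of the PRSG game coincide, register-for-register, with the two worlds of the PR assumption after discarding the first register.

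The reduction is then immediate. Suppose some QPT adversary $\cA$ and polynomial $t$ distinguish $(h|s_0\rangle)^{\otimes t}$ from $|\psi\rangle^{\otimes t}$ with non-negligible advantage. I would construct a distinguisher $\cB$ against the PR QGA that, on input $(|s_0\rangle,X)^{\otimes t}$ (where $X=h|s_0\rangle$ in $D_{\subpr,0}$ and $X=|s\rangle$ in $D_{\subpr,1}$), discards the $|s_0\rangle^{\otimes t}$ copies, runs $\cA$ on $X^{\otimes t}$, and outputs $\cA$'s bit. Then $\cB$'s distinguishing advantage between $D_{\subpr,0}$ and $D_{\subpr,1}$ equals $\cA$'s advantage, contradicting pseudorandomness. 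Since the reduction merely forwards one register and preserves the advantage exactly, there is essentially no obstacle here; the only point worth a sentence is verifying that the second register of $D_{\subpr,1}$ is genuinely Haar-distributed, so that it matches the PRSG target rather than some other distribution, which holds by definition.
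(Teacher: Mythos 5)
Your proposal is correct and is exactly the paper's construction: the key is $([|s_0\rangle],[h])$ with $[|s_0\rangle]\gets S(1^\secp)$, $[h]\gets G(1^\secp)$, and $\StateGen$ outputs $h|s_0\rangle$, with security following because the two worlds of the PRSG game are precisely the second-register marginals of $D_{\mathrm{pr},0}$ and $D_{\mathrm{pr},1}$. The paper dismisses the reduction as ``clear,'' whereas you spell out the advantage-preserving distinguisher that discards the first register; this is the same argument, just made explicit.
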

\begin{proof} 
Let $(G,S)$ be a PR QGA. From it, we construct a PRSG, $(\KeyGen,\StateGen)$, as follows.
\begin{itemize}
\item 
$\KeyGen(1^\secp)\to k:$
Run $|s\rangle\gets S(1^\secp)$ and $g\gets G(1^\secp)$.
Output $k\coloneqq (|s\rangle,g)$.
    \item 
    $\StateGen(1^\secp,k)\to|\phi_k\rangle:$
    Parse $k=(|s\rangle,g)$.
    Output $|\phi_k\rangle\coloneqq g|s\rangle$.
\end{itemize}
It is clear that this satisfies the security of PRSGs.
\end{proof}

Next, we give a variant of pseudorandomness, which we call \emph{Haar-pseudorandom (Haar-PR)} because underlying states are generated according to the Haar measure. 
\begin{definition}[Haar-PR QGAs] \label{def:QGA:Haar-PR}
A QGA $(G,S)$ is called \emph{Haar-pseudorandom (Haar-PR)} if
the following two distributions are computationally indistinguishable for any polynomial $t$: 
\begin{align*}
D_{\subHaarPR,0} &: \ket{s} \gets \mu, h \gets G; \text{ return } (\ket{s}, h \ket{s})^{\otimes t} \\
D_{\subHaarPR,1} &: \ket{s} \gets \mu, \ket{s'} \gets \mu; \text{ return } (\ket{s}, \ket{s'})^{\otimes t}
\end{align*}
\end{definition}
\begin{remark}
Haar-PR QGAs can be considered as a computational version of 
the statistical construction of PRSGs in the common Haar state model~\cite{EPRINT:AnaGulLin24,CCS24}.
\end{remark}

We can show that the multiple samples are also pseudorandom as in the case of PR QGAs~\cref{lem:PR_Q}. 
\begin{lemma}\label{lem:HaarPR_Q}
Let $(G,S)$ be a Haar-PR QGA.
Then the following two distributions are computationally indistinguishable for any polynomials $Q$ and $t$: 
\begin{align*}
D_{\subHaarPR,0}' &: \text{ for $q \in [Q]$ } \ket{s_q} \gets \mu, h_q \gets G; \text{ return } \{(\ket{s_q}, h_q \ket{s_q})^{\otimes t}\}_{q \in [Q]} \\
D_{\subHaarPR,1}' &: \text{ for $q \in [Q]$ } \ket{s_q} \gets \mu, \ket{s'_q} \gets \mu; \text{ return } \{(\ket{s_q}, \ket{s'_q})^{\otimes t}\}_{q \in [Q]}. 
\end{align*}
\end{lemma}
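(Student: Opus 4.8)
The plan is to use a standard hybrid argument, mirroring the proof of \cref{lem:PR_Q} almost verbatim. The only structural difference from the PR case is that each sample now carries its own fresh Haar-random state $\ket{s_q}$ rather than sharing a single fixed $\ket{s_0}$; if anything this makes the reduction more direct, since there is no common state that the reduction must reuse across blocks.

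First I would define, for $j = 0, 1, \dots, Q$, the hybrid distribution $H_j^{t,Q}$ that outputs, for each $q \in \{1,\dots,j\}$, an independent pair $(\ket{s_q}, \ket{s_q'})^{\otimes t}$ with $\ket{s_q}, \ket{s_q'} \gets \mu$, and for each $q \in \{j+1,\dots,Q\}$, a pair $(\ket{s_q}, h_q \ket{s_q})^{\otimes t}$ with $\ket{s_q} \gets \mu$ and $h_q \gets G$. By construction $H_0^{t,Q} = D_{\subHaarPR,0}'$ and $H_Q^{t,Q} = D_{\subHaarPR,1}'$. Next I would show that adjacent hybrids $H_{j-1}^{t,Q}$ and $H_j^{t,Q}$ are computationally indistinguishable for every $j \in [Q]$. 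These two distributions differ only in the $j$-th block, which is $(\ket{s_j}, h_j \ket{s_j})^{\otimes t}$ in $H_{j-1}^{t,Q}$ and $(\ket{s_j}, \ket{s_j'})^{\otimes t}$ in $H_j^{t,Q}$. Assuming toward a contradiction a QPT $\cA$ and polynomials $Q, t, p$ achieving advantage at least $1/p(\secp)$ between these two hybrids for some fixed $j$ and infinitely many $\secp$, I would build a single-sample Haar-PR distinguisher $\cB$: it forwards the received challenge $(\ket{s}, \ket{\cdot})^{\otimes t}$ (whose second register is either $h\ket{s}$ or a fresh Haar state) into the $j$-th block, freshly samples $\ket{s_q}, \ket{s_q'} \gets \mu$ for $q < j$ to fill the first $j-1$ blocks, and for $q > j$ freshly samples $\ket{s_q} \gets \mu$ together with $h_q \gets G$ and prepares $(\ket{s_q}, h_q \ket{s_q})^{\otimes t}$, then runs $\cA$ on the assembled tuple and echoes its output.

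To make $\cB$ efficient I would invoke \cref{lem:design-Haar}, replacing every Haar-random state by a $t$-design state that $\cB$ can prepare in QPT, which perturbs each probability by only a multiplicative $(1 \pm \epsilon)$ factor and hence keeps $\cB$'s advantage within a negligible additive term of $\cA$'s. This contradicts Haar-PR security, establishing the per-step bound. Finally, summing the $Q$ negligible gaps by the triangle inequality gives that $D_{\subHaarPR,0}'$ and $D_{\subHaarPR,1}'$ are computationally indistinguishable for all polynomials $Q, t$. I do not anticipate a genuine obstacle; the only point requiring a little care is the use of \cref{lem:design-Haar} together with the observation that simulating all $Q-1$ non-challenge blocks — in particular applying freshly sampled $h_q \gets G$ to fresh $t$-designs for $q > j$ — is QPT, which is immediate since $G$ and $S$ are efficient and $Q, t$ are polynomial.
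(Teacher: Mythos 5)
Your proposal is correct and follows essentially the same route as the paper's proof: the same hybrid distributions $H_j^{t,Q}$ interpolating block by block, the same reduction that embeds the single-sample Haar-PR challenge into the $j$-th block, and the same use of \cref{lem:design-Haar} to replace the Haar-random states in the non-challenge blocks by $t$-designs so the reduction is QPT. Your observation that this case is simpler than \cref{lem:PR_Q} (no shared $\ket{s_0}$ to reconstruct across blocks, so no need for extra copies from the challenger) is also accurate.
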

\begin{proof}
Let $(G,S)$ be a Haar-PR QGA.    
For each $j\in\{0,1,...,Q\}$, define the following distribution $H_j^{t,Q}$.
\begin{itemize}
    \item 
    For $q\in\{1,2,...,j\}$,
$\ket{s_q} \gets \mu$ and $\ket{s'_q} \gets \mu$.
    \item 
    For $q\in\{j+1,j+2,...,Q\}$,
$\ket{s_q} \gets \mu$ and $h_q \gets G$.
\item 
Output
$\{(\ket{s_q}, \ket{s'_q})^{\otimes t}\}_{q \in \{1,...,j\}}$ and
$\{(\ket{s_q}, h_q \ket{s_q})^{\otimes t}\}_{q \in \{j+1,...,Q\}}$. 
\end{itemize}
It is clear that $D_{\subHaarPR,0}'=H_0^{t,Q}$
and $D_{\subHaarPR,1}'=H_Q^{t,Q}$.
We claim that for any QPT adversary $\cA$, any polynomials $Q,t$, and any $j\in[Q]$
\begin{align*}
\left|    
\Pr[1\gets\cA(H_{j-1}^{t,Q})]
-\Pr[1\gets\cA(H_j^{t,Q})]
\right|
\le\negl(\secp).
\end{align*}
To show it, assume that there exist a QPT $\cA$, polynomials $Q,t,p$, and $j\in[Q]$ such that
\begin{align*}
\left|    
\Pr[1\gets\cA(H_{j-1}^{t,Q})]
-\Pr[1\gets\cA(H_j^{t,Q})]
\right|
\ge\frac{1}{p(\secp)}
\end{align*}
for infinitely many $\secp$.
Then we can construct a QPT adversary $\cB$ that breaks the security of the Haar-PR QGA as follows.
\begin{enumerate}
\item 
The challenger $\cC$ chooses $b\gets\bit$.
    \item 
    If $b=0$, $\cC$ chooses $|s\rangle\gets \mu$ and $h\gets G$.
    and sends $(|s\rangle,h|s\rangle)^{\otimes t}$ to $\cB$.
    If $b=1$, $\cC$ chooses $|s\rangle,|s'\rangle\gets\mu$,
    and sends $(|s\rangle,|s'\rangle)^{\otimes t}$ to $\cB$.
    \item 
    $\cB$ prepares $\{(|s_q\rangle,|s_q'\rangle)^{\otimes t})\}_{q\in\{1,...,j-1\}}$ and $\{|s_q\rangle,h_q|s_q\rangle)^{\otimes t}\}_{q\in\{j+1,...,Q\}}$ by using $t$-designs. 
    It then runs $\cA$ on input $\{(|s_q\rangle,|s_q'\rangle)^{\otimes t})\}_{q\in\{1,...,j-1\}}$, 
    the received state, and $\{|s_q\rangle,h_q|s_q\rangle)^{\otimes t}\}_{q\in\{j+1,...,Q\}}$,
    and outputs its output.
    Here, all $|s_q\rangle,|s_q'\rangle$ are $t$-designs.
\end{enumerate}
We have
\begin{align*}
\left|\Pr[1\gets \cB \mid b=0]-\Pr[1\gets\cA(H_{j-1}^{t,Q})]\right|&\le\negl(\secp),\\
\left|\Pr[1\gets \cB \mid b=1]-\Pr[1\gets\cA(H_{j}^{t,Q})]\right|&\le\negl(\secp).
\end{align*}
Therefore, we have
\begin{align*}
&\left|\Pr[1\gets\cB \mid b=0]
-\Pr[1\gets\cB \mid b=1]
\right| \\
&\ge 
\left|\Pr[1\gets\cA(H_{j-1}^{t,Q})]
-\Pr[1\gets\cA(H_{j}^{t,Q})]\right|-\negl(\secp)\\
&\ge\frac{1}{p(\secp)}-\negl(\secp)
\end{align*}
for infinitely many $\secp$,
which means that the Haar-PR QGA is broken.
\end{proof}

We also define a natural quantum analogue of \emph{Decisional Diffie-Hellman (DDH)}. 
\begin{definition}[DDH QGAs]  \label{def:QGA:qDDH}
A QGA $(G,S)$ is called \emph{Decisional Diffie-Hellman (DDH)} if
the following two distributions are computationally indistinguishable for any polynomials $Q$ and $t$: 
\begin{align*}
D_{\subDDH,0} &: |s_0\rangle \gets S, \tilde{g}, g \gets G; \text{ return } \{(\ket{s_0}, \tilde{g} \ket{s_0}, g \ket{s_0}, \tilde{g}g \ket{s_0})^{\otimes t}\}_{q \in [Q]} \\
D_{\subDDH,1} &: |s_0\rangle \gets S, \tilde{g}, g, h \gets G; \text{ return } \{(\ket{s_0}, \tilde{g} \ket{s_0}, g \ket{s_0}, h \ket{s_0})^{\otimes t}\}_{q \in [Q]}.
\end{align*}
\end{definition}

We next give another variant of DDH, which we call \emph{Haar Decisional Diffie-Hellman (Haar-DDH)}. 
We will use it for the case that $G$ is non-commutative.
\begin{definition}[Haar-DDH QGAs]  \label{def:QGA:Haar-DDH}
A QGA $(G,S)$ is called \emph{Haar-Decisional Diffie-Hellman (Haar-DDH)} if
the following two distributions are computationally indistinguishable for any polynomials $Q$ and $t$: 
\begin{align*}
D_{\subHDDH,0} &: g \gets G, \text{ for $q \in [Q]$ } \ket{s_q} \gets \mu; \text{ return } \{(\ket{s_q}, g \ket{s_q})^{\otimes t}\}_{q \in [Q]} \\
D_{\subHDDH,1} &: \text{ for $q \in [Q]$ } \ket{s_q} \gets \mu, h_q \gets G; \text{ return } \{(\ket{s_q}, h_q \ket{s_q})^{\otimes t} \}_{q \in [Q]}. 
\end{align*}
\end{definition}

\subsection{Naor-Reingold QGAs}  \label{sec:QGA:NR}
We also introduce an assumption that is a key for our construction of
PRFSGs from pseudorandom QGAs. We dub it as \emph{Naor-Reingold (NR)} QGAs because this assumption will be used to show the pseudorandomness of the Naor-Reingold-style PRFSGs. 
\begin{definition}[NR QGAs] \label{def:QGA:NR}
A QGA $(G,S)$ is called \emph{Naor-Reingold (NR)} if
the following two distributions are computationally indistinguishable for any polynomial $Q$ and $t$: 
\begin{align*}
D_{\subNR,0} &: 
\ket{s_0} \gets S, \tilde{g} \gets G, \text{ for $q \in [Q]$ } g_q \gets G; 
\text{ return } \{(g_q \ket{s_0}, \tilde{g} g_q \ket{s_0})^{\otimes t}\}_{q \in [Q]} \\
D_{\subNR,1} &: 
\ket{s_0} \gets S, \text{ for $q \in [Q]$ } g_q \gets G, h_q \gets G; 
\text{ return } \{(g_q \ket{s_0}, h_q \ket{s_0})^{\otimes t}\}_{q \in [Q]}. 
\end{align*}
\end{definition}

If we consider classical group actions with special properties, the \NR-GA follows from the pseudorandomness of group actions.
(See \cref{sec:ClassicNRPRF} for the details.) 
In the case of QGA, we will face several problems since we cannot use algebraic structures; $G$ might not be a group, and $S$ is not closed. 
Fortunately, the NR property of QGA follows from its PR, Haar-PR, and Haar-DDH properties. 
\begin{lemma} \label{lem:PRetc->NR}
If a QGA is PR, Haar-PR, and Haar-DDH, 
then it is NR. 
\end{lemma}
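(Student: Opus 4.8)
The plan is to interpolate between $D_{\subNR,0}$ and $D_{\subNR,1}$ through a short chain of hybrids, using PR at the two ends (to pass between the structured base state $\ket{s_0}$ drawn from $S$ and Haar-random states) and using Haar-DDH and Haar-PR in the middle (to manipulate states that already live in the Haar world). The key observation is that each sample of $D_{\subNR,0}$ can be rewritten as $(\ket{t_q}, \tilde g\ket{t_q})$ with $\ket{t_q} \coloneqq g_q\ket{s_0}$, so the two components of a pair are correlated only through the single unitary $\tilde g$; this is exactly the structure a reduction can recreate after the base states are swapped out.

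Concretely, I would define $\hyb_0 \coloneqq D_{\subNR,0}$, then $\hyb_1$ as the distribution that samples $\tilde g\gets G$ and $\ket{s_q}\gets\mu$ and returns $\{(\ket{s_q}, \tilde g\ket{s_q})^{\otimes t}\}_{q\in[Q]}$, which is literally $D_{\subHDDH,0}$; then $\hyb_2$ that samples $\ket{s_q}\gets\mu$, $h_q\gets G$ and returns $\{(\ket{s_q}, h_q\ket{s_q})^{\otimes t}\}_{q\in[Q]}$, which is simultaneously $D_{\subHDDH,1}$ and $D_{\subHaarPR,0}'$; and $\hyb_3$ that samples $\ket{s_q},\ket{s_q'}\gets\mu$ and returns $\{(\ket{s_q}, \ket{s_q'})^{\otimes t}\}_{q\in[Q]}$, which is $D_{\subHaarPR,1}'$. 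The middle two steps are then immediate: $\hyb_1\approx_c\hyb_2$ is Haar-DDH (\cref{def:QGA:Haar-DDH}) and $\hyb_2\approx_c\hyb_3$ is the multi-sample Haar-PR lemma (\cref{lem:HaarPR_Q}), since in each case the two distributions coincide syntactically.

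The two remaining steps both reduce to the multi-sample PR lemma (\cref{lem:PR_Q}). For $\hyb_0\approx_c\hyb_1$, the reduction $\cB$ receives $\{\ket{u_q}^{\otimes 2t}\}_{q\in[Q]}$ from the PR challenger (so each $\ket{u_q}$ is either $g_q\ket{s_0}$ or Haar-random), samples one $\tilde g\gets G$ itself, and for each $q$ applies $\tilde g$ to $t$ of its copies while retaining the other $t$, forming $\{(\ket{u_q}, \tilde g\ket{u_q})^{\otimes t}\}_{q}$ to feed the NR-distinguisher; the $g_q\ket{s_0}$ case reproduces $\hyb_0$ and the Haar case reproduces $\hyb_1$. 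For $\hyb_3\approx_c D_{\subNR,1}$, I would invoke \cref{lem:PR_Q} with parameter $2Q$: the $2Q$ returned states are either all Haar or all of the form $(\text{random unitary})\ket{s_0}$, and $\cB$ merely pairs them into the two components, giving $\hyb_3$ in the former case and $D_{\subNR,1}$ (after relabeling the $2Q$ unitaries as $g_q,h_q$) in the latter.

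I expect the first PR step to be the only delicate point: one must verify that the two components of an NR sample share their randomness solely through $\tilde g$, so that replacing the base points $g_q\ket{s_0}$ by Haar states and then re-applying a freshly sampled $\tilde g$ yields a faithful sample. Getting the copy count right — requesting $2t$ copies per index so that $t$ of them can be spent producing the $\tilde g$-images — and noting that $\cB$ never needs to prepare Haar states on its own (they always come from the challenger, so no appeal to \cref{lem:design-Haar} is needed here) are the things to check. The other three steps are direct rewritings against the stated definitions.
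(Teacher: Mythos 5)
Your proposal is correct and follows essentially the same route as the paper: your hybrids $\hyb_1$, $\hyb_2$, $\hyb_3$ are exactly the paper's intermediate distributions $D'_{\subNR,0}$, $D'_{\subNR}$, $D'_{\subNR,1}$, with the middle two steps read off syntactically from Haar-DDH and \cref{lem:HaarPR_Q}, and the two end steps handled by the same reductions to \cref{lem:PR_Q} (requesting $2t$ copies per index at the front end, and $2Q$ samples paired up at the back end). The delicate points you flag are precisely the ones the paper's claims address, so nothing is missing.
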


\begin{proof} 
To show the lemma, we introduce an intermediate distribution $D'_{\subNR}$ defined as follows: 
\begin{align*}
D'_{\subNR} &: 
\text{ for $q \in [Q]$ } \ket{s_q} \gets \mu, \tilde{h}_q \gets G; \text{ return } \{(\ket{s_q}, \tilde{h}_q \ket{s_q})^{\otimes t}\}_{q \in [Q]}. 
\end{align*}
We show that $D_{\subNR,0} \approx_c D'_{\subNR} \approx_c D_{\subNR,1}$ under our assumptions.

\begin{proof}[Proof of $D_{\subNR,0} \approx_c D'_{\subNR}$]
We first show that $D_{\subNR,0} \approx_c D'_{\subNR}$ if QGA is PR and Haar-DDH. 
Let us consider the following distribution:
\begin{align*}
D'_{\subNR,0} &:  \tilde{g} \gets G, \text{ for $q \in [Q]$ } \ket{s_q} \gets \mu; \text{ return } \{(\ket{s_q}, \tilde{g} \ket{s_q})^{\otimes t}\}_{q \in [Q]}. 
\end{align*}
As in \cref{claim:D_NR0=D'_NR0} below, we have $D_{\subNR,0} \approx_c D'_{\subNR,0}$ if QGA is PR. 
We also have $D'_{\subNR,0} \approx_c D'_{\subNR}$, which directly follows from the Haar-DDH assumption (\cref{def:QGA:Haar-DDH}). 
Hence, we have $D_{\subNR,0} \approx_c D'_{\subNR}$. 
\end{proof} 
\begin{claim} \label{claim:D_NR0=D'_NR0}
If QGA is PR, then $D_{\subNR,0} \approx_c D'_{\subNR,0}$. 
\end{claim}
\begin{proof}[Proof of \cref{claim:D_NR0=D'_NR0}] 
Assuming that QGA is PR, we know that $D'_{\subpr,0}$ and $D'_{\subpr,1}$ are computationally indistinguishable due to \cref{lem:PR_Q}. 
To show $D_{\subNR,0} \approx_c D'_{\subNR,0}$, 
 it is enough to show that we can efficiently convert samples from $D'_{\subpr,0}$ and $D'_{\subpr,1}$ into $D_{\subNR,0}$ and $D'_{\subNR,0}$, respectively, in an oblivious way. 
Suppose that we are given samples $\{\ket{y_q}^{\otimes 2t}\}_{q \in [Q]}$ from $D'_{\subpr,0}$ or $D'_{\subpr,1}$, 
where $\ket{y_q}=h_q\ket{s_0}$ for $D_{\subpr,0}$ 
and
$\ket{y_q}=\ket{s_q}$ for $D_{\subpr,1}$.
To prepare samples for $D_{\subNR,0}$ or $D'_{\subNR,0}$, we choose $\tilde{g} \gets G$ and 
 compute $(\ket{y_q}, \tilde{g} \ket{y_q})^{\otimes t}$ for $q \in [Q]$. 
If the input distribution is $D_{\subpr,0}$, then 
the output distribution is $D_{\subNR,0}$. 
On the other hand, if the input distribution is $D_{\subpr,1}$, then the output distribution is $D'_{\subNR,0}$. 
\end{proof}

\begin{proof}[Proof of $D'_{\subNR} \approx_c D_{\subNR,1}$]
We then show $D'_{\subNR} \approx_c D_{\subNR,1}$ if QGA is PR and Haar-PR. 
We consider the following distribution: 
\begin{align*}
D'_{\subNR,1} &: \text{ for $q \in [Q]$ } \ket{s_q} \gets \mu, \ket{s'_q} \gets \mu; \text{ return } \{(\ket{s_q}, \ket{s'_q})^{\otimes t}\}_{q \in [Q]}.
\end{align*}
The following claim (\cref{claim:D_NR1=D'_NR1}) shows that $D_{\subNR,1} \approx_c D'_{\subNR,1}$ if QGA is PR. 
\cref{lem:HaarPR_Q} shows that, if QGA is Harr-PR, then we have that $D'_{\subNR}  \approx_c D'_{\subNR,1}$. 
This completes the proof. 
\end{proof}
\begin{claim} \label{claim:D_NR1=D'_NR1}
If QGA is PR, then $D_{\subNR,1} \approx_c D'_{\subNR,1} $. 
\end{claim}
\begin{proof}[Proof of \cref{claim:D_NR1=D'_NR1}]
We construct an efficient quantum algorithm that converts samples from $D'_{\subpr,0}$ and $D'_{\subpr,1}$ in \cref{lem:PR_Q} and into samples from $D_{\subNR,1}$ and $D'_{\subNR,1}$, respectively. 
Suppose that we are given samples $\{\ket{y_q}^{\otimes t}\}_{q \in [2Q]}$ from $D'_{\subpr,0}$ or $D'_{\subpr,1}$, 
where $\ket{y_q}=h_q\ket{s_0}$ for $D_{\subpr,0}$ 
and $\ket{y_q}=\ket{s_q}$ for $D_{\subpr,1}$.
The converter outputs $\{ (\ket{y_{2q-1}}, \ket{y_{2q}})^{\otimes t} \}_{q \in [Q]}$ by rearranging samples. 
If the input distribution is $D_{\subpr,0}$, then 
the output distribution is $D_{\subNR,1}$.
On the other hand, if the input distribution is $D_{\subpr,1}$, then the output distribution is $D'_{\subNR}$ as we wanted. 
\end{proof}

Wrapping up the lemmas and claims, we have shown that $D_{\subNR,0} \approx_c D'_{\subNR} \approx_c D_{\subNR,1}$ in \cref{lem:PRetc->NR}. 
\end{proof}

If $G$ is \emph{commutative}, then we only need the DDH assumption as Boneh~et~al.~\cite{AC:BonKogWoo20}. 
\begin{lemma} \label{lem:com+DDH->NR}
Let $(G,S)$ be a QGA.
If $G$ is commutative and $(G,S)$ is DDH, 
then $(G,S)$ is NR. 
\end{lemma}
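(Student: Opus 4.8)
The plan is to prove $D_{\subNR,0} \approx_c D_{\subNR,1}$ by a hybrid argument over the $Q$ samples, reducing each neighboring pair of hybrids to the DDH assumption (\cref{def:QGA:qDDH}), exactly in the spirit of the hybrid proofs already given for \cref{lem:PR_Q} and \cref{lem:HaarPR_Q}. For $j \in \{0,1,\dots,Q\}$ I would define the hybrid $H_j$ that first samples $\ket{s_0} \gets S$ and a single $\tilde{g} \gets G$, and then, for each $q \in [Q]$, samples $g_q \gets G$ and returns $(g_q\ket{s_0}, \tilde{g} g_q \ket{s_0})^{\otimes t}$ when $q > j$ and $(g_q\ket{s_0}, h_q\ket{s_0})^{\otimes t}$ with a fresh $h_q \gets G$ when $q \le j$. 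By construction $H_0 = D_{\subNR,0}$ and $H_Q = D_{\subNR,1}$, so by the triangle inequality it suffices to show $H_{j-1} \approx_c H_j$ for each $j \in [Q]$. The only difference between these two distributions is at position $j$, whose second component is the ``real'' state $\tilde{g}g_j\ket{s_0}$ in $H_{j-1}$ and the ``random'' state $h_j\ket{s_0}$ in $H_j$.

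For the step $H_{j-1} \approx_c H_j$, I would build a reduction $\cB$ that receives a DDH challenge consisting of polynomially many copies of $(\ket{s_0}, \tilde{g}\ket{s_0}, g\ket{s_0}, c)$, where $c = \tilde{g}g\ket{s_0}$ in $D_{\subDDH,0}$ and $c = h\ket{s_0}$ in $D_{\subDDH,1}$, and assembles an input for $\cA$ as follows. At position $j$ it outputs $(g\ket{s_0}, c)^{\otimes t}$, implicitly setting $g_j = g$. For positions $q < j$ it samples $g_q, h_q \gets G$ itself and applies them to fresh copies of $\ket{s_0}$, producing $(g_q\ket{s_0}, h_q\ket{s_0})^{\otimes t}$. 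For positions $q > j$ it samples $g_q \gets G$ and outputs $(g_q\ket{s_0}, g_q(\tilde{g}\ket{s_0}))^{\otimes t}$, using copies of $\ket{s_0}$ and $\tilde{g}\ket{s_0}$ from the challenge. The crucial point is that $\cB$ never learns a description of $\tilde{g}$; it only holds the state $\tilde{g}\ket{s_0}$, yet it must still produce consistent ``real'' samples $\tilde{g}g_q\ket{s_0}$ for all $q > j$. Here commutativity of $G$ is exactly what is needed: since $g_q \tilde{g} = \tilde{g} g_q$ as unitaries, applying the known $g_q$ to the given state $\tilde{g}\ket{s_0}$ yields $g_q\tilde{g}\ket{s_0} = \tilde{g}g_q\ket{s_0}$.

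A routine check shows the simulation is perfect: if $c = \tilde{g}g\ket{s_0}$ then position $j$ reads $(g_j\ket{s_0}, \tilde{g}g_j\ket{s_0})$ and $\cB$ reproduces $H_{j-1}$, whereas if $c = h\ket{s_0}$ it reads $(g_j\ket{s_0}, h_j\ket{s_0})$ and $\cB$ reproduces $H_j$; in both cases the $\tilde{g}$ at position $j$ matches the one used for $q>j$, and all $g_q$ (including $g_j = g$) and $h_q$ (including $h_j = h$) are fresh and independent. Note that, unlike \cref{lem:PRetc->NR}, there is no Haar sampling anywhere in the NR or DDH distributions, so no $t$-design approximation is needed and $\cB$'s DDH advantage equals $\cA$'s distinguishing gap exactly; since $\cB$ consumes only $O(Qt)$ copies of each challenge component, this is polynomial and the DDH assumption bounds it by $\negl(\secp)$. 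Summing over $j \in [Q]$ yields $D_{\subNR,0} \approx_c D_{\subNR,1}$. The main obstacle — and the single place commutativity is genuinely invoked — is precisely the simulation of the real positions $q>j$: without a binary operation on $S$ we cannot rerandomize a DDH instance as in the classical self-reduction, so instead we sample fresh $g_q$ ourselves and push them through $\tilde{g}\ket{s_0}$, which recombines into $\tilde{g}g_q\ket{s_0}$ only because $g_q$ and $\tilde{g}$ commute.
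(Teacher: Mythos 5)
Your proof is correct and follows essentially the same route as the paper's own argument: the same hybrid distributions over the $Q$ sample positions (random up to the cutoff, real beyond it), and the same reduction that embeds the DDH challenge at the crossover position while using copies of $\tilde{g}\ket{s_0}$ together with commutativity ($g_q\tilde{g}\ket{s_0} = \tilde{g}g_q\ket{s_0}$) to simulate the remaining real samples. Your added remarks — that the simulation is perfect, needs no $t$-design approximation, and consumes only polynomially many challenge copies — are accurate and consistent with the paper's proof.
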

\begin{proof}
For $i = 0,\dots,Q$, we consider the following hybrid distributions $\bar{D}_i$ of $\{(|\phi_q\rangle, |\psi_q\rangle)^{\otimes t}\}_{q \in [Q]}$: 
\begin{itemize}
\item $\tilde{g} \gets G$ and $|s_0\rangle \gets S$. 
\item For $j = 1,\dots,i$, 
$|\phi_q\rangle \coloneqq g_q |s_0\rangle$ and $|\psi_q\rangle \coloneqq h_q |s_0\rangle$, where $g_q, h_q \gets G$. 
\item For $j = i+1,\dots,Q$, 
$|\phi_q\rangle \coloneqq g_q |s_0\rangle$ and $|\psi_q\rangle \coloneqq \tilde{g} h_q |s_0\rangle$, where $g_q, h_q \gets G$. 
\end{itemize}
By using the following claim, we have 
$\bar{D}_0 \approx_c \bar{D}_1 \approx_c \dots \approx_c \bar{D}_Q$ 
if the DDH assumption holds. 
\end{proof}
\begin{claim}
Let $(G,S)$ be a QGA.
If $G$ is commutative and $(G,S)$ is DDH, 
then, for $i = 1,\dots,Q$, 
$\bar{D}_{i-1} \approx_c \bar{D}_i$ holds. 
\end{claim}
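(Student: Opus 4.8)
The plan is to prove the claim by a direct reduction to the DDH assumption (\cref{def:QGA:qDDH}), exploiting the fact that $\bar{D}_{i-1}$ and $\bar{D}_i$ are \emph{identical} in every pair except the $i$-th one: in $\bar{D}_{i-1}$ the index $i$ lies in the high block, so its second register is $\tilde{g} g_i \ket{s_0}$, whereas in $\bar{D}_i$ it lies in the low block, so its second register is $h_i \ket{s_0}$ for a fresh $h_i \gets G$. First I would build a QPT reduction $\cB$ that receives a DDH challenge and embeds it into the $i$-th pair while generating all remaining pairs on its own. Since the two hybrids also share the same $\ket{s_0} \gets S$ and the same $\tilde{g} \gets G$, these will be supplied through the challenger.

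Concretely, $\cB$ asks its DDH challenger for sufficiently many copies of $(\ket{s_0}, \tilde{g}\ket{s_0}, g\ket{s_0}, z\ket{s_0})$, where $z = \tilde{g}g$ in the $0$-case and $z = h$ (a fresh random element) in the $1$-case. Because the DDH assumption holds for all polynomials $Q$ and $t$ and every tuple shares the same $\ket{s_0}, \tilde{g}, g$, this hands $\cB$ an arbitrary polynomial number of copies of each of the four registers. $\cB$ then assembles the $Q$ output pairs. For the $i$-th pair it outputs $(g\ket{s_0}, z\ket{s_0})^{\otimes t}$ taken directly from the challenge, setting $g_i \coloneqq g$: when $z = \tilde{g}g$ this is the high-block form $\tilde{g}g_i\ket{s_0}$, and when $z = h$ it is the low-block form $h_i\ket{s_0}$ with $h_i \coloneqq h$, so the embedding reproduces $\bar{D}_{i-1}$ and $\bar{D}_i$ respectively in that coordinate. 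For $q < i$ (low block) it samples fresh $g_q, h_q \gets G$ and applies them to copies of $\ket{s_0}$ to get $(g_q\ket{s_0}, h_q\ket{s_0})^{\otimes t}$. For $q > i$ (high block) it samples a fresh $g_q \gets G$ and must produce $(g_q\ket{s_0}, \tilde{g}g_q\ket{s_0})^{\otimes t}$.

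The step I expect to be the main obstacle is producing the second register $\tilde{g}g_q\ket{s_0}$ of each high-block pair, since $\cB$ does not hold $\tilde{g}$ as a unitary it can apply; it possesses only copies of the \emph{state} $\tilde{g}\ket{s_0}$. This is exactly where commutativity of $G$ is indispensable: because $\tilde{g}g_q = g_q\tilde{g}$, we have $\tilde{g}g_q\ket{s_0} = g_q(\tilde{g}\ket{s_0})$, so $\cB$ applies the \emph{known} fresh unitary $g_q$ to copies of the challenge register $\tilde{g}\ket{s_0}$. Tallying the copy budget, $\cB$ consumes $O(Qt)$ copies of $\ket{s_0}$, $O(Qt)$ copies of $\tilde{g}\ket{s_0}$, and $t$ copies each of $g\ket{s_0}$ and $z\ket{s_0}$, all polynomial, so $\cB$ is QPT. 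By construction the view $\cB$ feeds to a hybrid distinguisher $\cA$ is distributed exactly as $\bar{D}_{i-1}$ when $z = \tilde{g}g$ and exactly as $\bar{D}_i$ when $z = h$; hence $\cB$'s DDH advantage equals $\cA$'s advantage in telling $\bar{D}_{i-1}$ from $\bar{D}_i$. Any non-negligible gap would break DDH, so $\bar{D}_{i-1} \approx_c \bar{D}_i$, completing the claim.
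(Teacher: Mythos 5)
Your proposal is correct and matches the paper's own proof essentially step for step: the same reduction that embeds the DDH challenge tuple $(\ket{s_0},\tilde{g}\ket{s_0},g\ket{s_0},z\ket{s_0})$ into the $i$-th pair, samples fresh $g_q,h_q$ for the low block, and synthesizes the high-block pairs by applying a fresh $g_q$ to copies of $\tilde{g}\ket{s_0}$, invoking commutativity exactly where the paper does ($g_q\tilde{g}\ket{s_0} = \tilde{g}g_q\ket{s_0}$). Your explicit accounting of the copy budget is a minor refinement the paper leaves implicit, but the argument is the same.
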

\begin{proof}
Suppose that there exists $\cA$ distinguishing $\bar{D}_{i-1}$ from $\bar{D}_i$. 
We construct an adversary $\cB$ against the DDH assumption as follows:
\begin{itemize}
\item Given a sample $(|s_0\rangle, \tilde{g}|s_0\rangle, g|s_0\rangle, h|s_0\rangle)^{\otimes tQ}$, where $h = \tilde{g} g$ or random, $\cB$ prepares a sample $\{(|\phi_q\rangle, |\psi_q\rangle)^{\otimes t}\}_{q \in [Q]}$ as follows:  
\begin{itemize}
\item for $q=1,\dots,i-1$, take random $g_q,h_q \gets G$ and set $(|\phi_q\rangle, |\psi_q\rangle) \coloneqq (g_q |s_0\rangle, h_q |s_0\rangle)$; 
\item for $q=i$, set $(|\phi_q\rangle, |\psi_q\rangle) = (g|s_0\rangle, h|s_0\rangle)$;  
\item for $q=i+1,\dots,Q$, take random $g_q \gets G$ and set $(|\phi_q\rangle, |\psi_q\rangle) = (g_q |s_0\rangle, g_q \tilde{g} |s_0\rangle)$. 
\end{itemize}
\item It runs $\cA$ on input $\{(|\phi_q\rangle, |\psi_q\rangle)^{\otimes t}\}_{j \in [Q]}$ and outputs $\cA$'s decision. 
\end{itemize}
We note that, due to commutativity of $G$, the last $Q-i$ samples are equivalent to $(g_q |s_0\rangle, \tilde{g} g_q |s_0\rangle)$. 
If $h = \tilde{g}g$, then $(|\phi_i\rangle,|\psi_i\rangle) = (g |s_0\rangle, \tilde{g} g |s_0\rangle)$ and $\cB$ perfectly simulates the distribution $\bar{D}_{i-1}$ since $G$ is a group and the distribution of 
If $h$ is random, then $\cB$ perfectly simulates the distribution $\bar{D}_{i}$. 
Thus, $\cB$'s advantage is equivalent to $\cA$'s advantage distinguishing $\bar{D}_{i-1}$ and $\bar{D}_i$. 
\end{proof}




\section{Construction of (Classical-Query) PRFSGs}  \label{sec:PRFSG}


\subsection{Construction}
We construct a Naor-Reingold-style PRFSG from QGA (that is secure against classical queries) in this section. 
\begin{theorem} \label{thm:PRNR->PRFSG}
Let $(G,S)$ be a QGA. 
If $(G,S)$ is PR and NR, then 
the following $(\KeyGen,\StateGen)$ is a PRFSG whose input space is $\{0,1\}^\ell$. 
\begin{itemize}
\item 
$\KeyGen(1^\secp)\to k:$ 
Sample $g_0,g_1,...,g_\ell\gets G$ and $|s_0\rangle\gets S$.
Output $k\coloneqq (g_0,g_1,...,g_\ell,|s_0\rangle)$.
(Note that $|s_0\rangle$ here is not a physical quantum state but its classical description.)
\item
$\StateGen(k,x)\to |\phi_k(x)\rangle:$ 
Parse
$k= (g_0,g_1,...,g_\ell,|s_0\rangle)$
and $x = (x[1],\dots,x[\ell]) \in \bit^\ell$. 
Output
\begin{align}
|\phi_k(x)\rangle\coloneqq g_\ell^{x[\ell]}g_{\ell-1}^{x[\ell-1]} \cdots g_1^{x[1]}g_0 \ket{s_0}. 
\end{align}
\end{itemize}
\end{theorem}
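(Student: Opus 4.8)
The plan is to prove security by a Naor--Reingold/GGM-style hybrid argument over the $\ell$ input bits, reducing each step to the NR assumption, and then collapsing the final hybrid to the Haar oracle using the PR assumption. Since \cref{def:PRFSGs} imposes only a security requirement (there is no correctness condition to check), it suffices to bound, for an arbitrary QPT classical-query adversary $\cA$, its advantage in distinguishing $\StateGen(k,\cdot)$ from $\cO_{\subHaar}$. Let $T=T(\secp)=\poly(\secp)$ upper bound the total number of oracle calls $\cA$ makes (counting repeats of the same input), and set $Q=t=T$ in the NR and PR assumptions.

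For $i\in\{0,1,\dots,\ell\}$ I would define a hybrid oracle $H_i$ that first samples $\ket{s_0}\gets S$ and $g_{i+1},\dots,g_\ell\gets G$ once, and then answers a classical query $x=(x[1],\dots,x[\ell])$ with a fresh copy of
\begin{align*}
g_\ell^{x[\ell]}\cdots g_{i+1}^{x[i+1]}\,\ket{R_{x[1..i]}},
\end{align*}
where the prefix states $\{\ket{R_p}\}_{p\in\bit^{i}}$ are sampled lazily: the first time a length-$i$ prefix $p$ is needed, draw $h_p\gets G$ and set $\ket{R_p}\coloneqq h_p\ket{s_0}$, keeping $T$ copies. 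Then $H_0$ is exactly the real game (one prefix $\epsilon$ with $\ket{R_\epsilon}=g_0\ket{s_0}$, so the answer is $\ket{\phi_k(x)}$), while $H_\ell$ hands each distinct input $x$ an independent sample $h_x\ket{s_0}$.

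The core step is $H_{i-1}\approx_c H_i$ from the NR assumption (\cref{def:QGA:NR}). In $H_{i-1}$, for each touched length-$(i-1)$ prefix $p$ the two children obtained by appending $x[i]\in\bit$ are $\ket{R_p}$ and $g_i\ket{R_p}=g_i h_p\ket{s_0}$, where the \emph{same} unitary $g_i$ is shared across all prefixes --- precisely the shape of $D_{\subNR,0}=\{(g_q\ket{s_0},\tilde g g_q\ket{s_0})^{\otimes t}\}_q$ under the identification $\tilde g\leftrightarrow g_i$. A reduction $\cB$ therefore samples $g_{i+1},\dots,g_\ell$ itself, receives NR samples $\{(\ket{\alpha_q},\ket{\beta_q})^{\otimes t}\}_{q\in[Q]}$, and assigns the $q$-th pair to the $q$-th newly seen prefix $p$, setting $\ket{R_{p0}}\coloneqq\ket{\alpha_q}$ and $\ket{R_{p1}}\coloneqq\ket{\beta_q}$; it answers $x$ by applying $g_\ell^{x[\ell]}\cdots g_{i+1}^{x[i+1]}$ to a fresh copy of $\ket{R_{x[1..i]}}$. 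If the samples come from $D_{\subNR,0}$ then $\ket{\beta_q}=\tilde g\ket{\alpha_q}$ reconstructs the shared-$g_i$ structure of $H_{i-1}$; if they come from $D_{\subNR,1}$ the two children are independent fresh $h\ket{s_0}$ states, which is exactly $H_i$. Lazy assignment together with the $T$ copies per sample suffices because at most $T$ prefixes are touched and any single prefix state is consumed at most $T$ times.

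It remains to show $H_\ell\approx_c\cO_{\subHaar}$. In $H_\ell$ each distinct query receives a fresh $h_x\ket{s_0}$ ($T$ copies), i.e.\ a draw from $D_{\subpr,0}'$ indexed by the distinct queries, whereas $\cO_{\subHaar}$ returns a fresh Haar state per distinct query, i.e.\ $D_{\subpr,1}'$; indistinguishability is immediate from \cref{lem:PR_Q}. Chaining the $\ell+1$ negligible steps $H_0\approx_c\cdots\approx_c H_\ell\approx_c\cO_{\subHaar}$ (with $\ell=\poly$) bounds $\cA$'s advantage by a negligible function. I expect the main obstacle to be setting up the prefix-indexed hybrid so that the single shared $g_i$ at level $i$ matches the single $\tilde g$ of $D_{\subNR,0}$, together with the bookkeeping that lets only polynomially many NR samples (each with enough copies) answer an adaptive, multi-copy query pattern; this lazy prefix sampling is exactly where classical queries are essential, which is why the argument does not extend to quantum queries.
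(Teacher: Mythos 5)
Your proposal is correct and follows essentially the same route as the paper: the paper's proof also interpolates through per-prefix lazy-sampled hybrids ($\mathsf{Game}_0,\dots,\mathsf{Game}_\ell$, with $\mathsf{Game}_\ell$ being exactly your $H_\ell$, called $\mathsf{Hybrid}$ there), proves each adjacent pair indistinguishable by a reduction that assigns NR sample pairs to newly seen $j$-bit prefixes and applies the remaining $g_\ell^{x[\ell]}\cdots g_{j+2}^{x[j+2]}$ on top, and finally collapses $\mathsf{Hybrid}$ to $\cO_{\subHaar}$ via the multi-sample PR lemma (\cref{lem:PR_Q}). Your additional bookkeeping (taking $Q=t=T$ so each NR sample carries enough copies for repeated queries, and identifying the shared $\tilde{g}$ with the level unitary $g_i$) matches the paper's reduction in \cref{claim:PRFS-hybrid}, as does your closing observation that lazy prefix sampling is what restricts the argument to classical queries.
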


From \cref{lem:PRetc->NR} and \cref{lem:com+DDH->NR}, we obtain the following corollaries.
\begin{corollary}
If a QGA is PR, Haar-PR, and Haar-DDH, then
the above construction is a PRFSG.
\end{corollary}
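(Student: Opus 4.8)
The plan is to transplant the classical Naor--Reingold hybrid argument into the state setting, crucially exploiting that the adversary queries \emph{classically}. I would establish PRFSG security through a sequence of hybrid oracles $\cO_0, \cO_1, \ldots, \cO_\ell, \cO_{\subHaar}$, where $\cO_0$ is the real oracle $\StateGen(k,\cdot)$ and $\cO_{\subHaar}$ is the Haar oracle. For $1 \le i \le \ell$ I define the \emph{prefix-absorbed} oracle $\cO_i$ that samples $|s_0\rangle \gets S$ and $g_{i+1}, \ldots, g_\ell \gets G$ once at the start, and then, on a classical query $x$, lazily samples a fresh $h_p \gets G$ for each distinct length-$i$ prefix $p = (x[1], \ldots, x[i])$ and returns a freshly prepared copy of $g_\ell^{x[\ell]} \cdots g_{i+1}^{x[i+1]} h_p |s_0\rangle$. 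With the convention that the empty prefix gives $h_\emptyset = g_0$, the oracle $\cO_0$ coincides with the real one, and $\cO_\ell$ simply returns a fresh copy of $h_x |s_0\rangle$ for an independent $h_x \gets G$ per distinct input $x$.

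The heart of the argument is to show $\cO_{i-1} \approx_c \cO_i$ for each $i \in [\ell]$ from the NR assumption. Given a distinguisher $\cA$, I would build a reduction $\cB$ that receives NR samples $\{(u_q, v_q)^{\otimes t}\}_{q \in [Q]}$ and maintains a table mapping each distinct length-$(i{-}1)$ prefix $p$ encountered in $\cA$'s queries to a fresh index $q$. On a query $x$ with prefix $p$ and $i$-th bit $b = x[i]$, the reduction $\cB$ consumes one unused copy of $u_q$ if $b=0$ and of $v_q$ if $b=1$, applies the self-sampled outer unitaries $g_\ell^{x[\ell]} \cdots g_{i+1}^{x[i+1]}$, and hands the result to $\cA$. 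When $(u_q, v_q) = (g_q|s_0\rangle, \tilde g\, g_q|s_0\rangle)$, i.e.\ the sample is from $D_{\subNR,0}$, identifying $\tilde g$ with the shared key component $g_i$ and $g_q$ with $h_p$ reproduces exactly the state $g_\ell^{x[\ell]} \cdots g_{i+1}^{x[i+1]} g_i^{b} h_p |s_0\rangle$ of $\cO_{i-1}$; when $(u_q,v_q)=(g_q|s_0\rangle, h_q|s_0\rangle)$, i.e.\ from $D_{\subNR,1}$, the two extensions $p0$ and $p1$ receive independent fresh group elements, which is precisely $\cO_i$. Hence $\cB$ inherits $\cA$'s distinguishing advantage. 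Since $\cA$ is QPT it makes at most a polynomial number $Q_{\mathrm{total}}$ of queries, so both the number of distinct prefixes and the number of copies consumed per sample are bounded by $Q_{\mathrm{total}}$; choosing $Q = t = Q_{\mathrm{total}}$ makes the NR assumption applicable.

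It remains to show $\cO_\ell \approx_c \cO_{\subHaar}$, which follows directly from \cref{lem:PR_Q}: a reduction assigns a fresh sample index to each distinct query $x$ and returns copies of the corresponding sample, so that $D'_{\subpr,0}$ (where each index yields $h_x|s_0\rangle$) simulates $\cO_\ell$ and $D'_{\subpr,1}$ (where each index yields a Haar-random state) simulates $\cO_{\subHaar}$. Combining the $\ell$ NR steps with this single PR step via the triangle inequality, and using that $\ell = \poly(\secp)$ so that a polynomial sum of negligible terms is negligible, yields the claimed security.

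The main obstacle --- and the reason the result is confined to classical queries --- is precisely the lazy, table-based simulation in the NR step: the reduction must coherently associate each queried prefix with a fixed sample index and route the right number of copies to the matching $u_q$ or $v_q$. This is routine for classical queries but breaks for superposition queries, since $\cB$ cannot read a prefix out of a query register without disturbing it, nor consistently re-use one physical sample across a superposition of inputs sharing a prefix. A secondary point demanding care is the copy bookkeeping: inputs sharing the same length-$(i{-}1)$ prefix and the same $i$-th bit, but differing in later bits, all draw copies of the \emph{same} $u_q$ (resp.\ $v_q$), so one must verify that $t = Q_{\mathrm{total}}$ copies per sample genuinely suffice.
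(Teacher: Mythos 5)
Your hybrid argument is correct as far as it goes, and it is essentially the paper's own proof of \cref{thm:PRNR->PRFSG}: your oracles $\cO_0,\dots,\cO_\ell$ are the paper's games $\mathsf{Game}_0,\dots,\mathsf{Game}_\ell$ interpolating between $\mathsf{Real}$ and $\mathsf{Hybrid}$, your table-based reduction consuming $u_q$ or $v_q$ according to the $i$-th bit is exactly the reduction in \cref{claim:PRFS-hybrid}, and your final step $\cO_\ell \approx_c \cO_{\subHaar}$ via \cref{lem:PR_Q} is \cref{lem:PRNR->PRFSG:PR->Hybrid-Ideal}. The problem is that what you have proved is ``PR $+$ NR implies a classical-query PRFSG,'' whereas the corollary's hypotheses are PR, Haar-PR, and Haar-DDH. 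You invoke ``the NR assumption'' in the heart of the argument but never derive it from the stated assumptions, and Haar-PR and Haar-DDH appear nowhere in your proof. The missing content is precisely the paper's \cref{lem:PRetc->NR}.

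This gap is substantive, not bookkeeping. In the classical setting one gets NR from weak pseudorandomness essentially for free because regularity makes $g_q \star s_0$ with $g_q \gets G$ exactly uniform over $S$; in a QGA there is no regularity, so $g_q\ket{s_0}$ with $g_q \gets G$ need not look anything like a Haar state, and the two distributions in the NR assumption cannot be directly rewritten in terms of any single pseudorandomness property. The paper bridges this with an intermediate distribution $D'_{\subNR}$ whose samples are $(\ket{s_q}, \tilde h_q \ket{s_q})^{\otimes t}$ with Haar $\ket{s_q}$ and independent $\tilde h_q \gets G$, and needs all three assumptions: the multi-sample PR property (\cref{lem:PR_Q}) is applied twice through oblivious sample converters --- once choosing a fresh $\tilde g \gets G$ and mapping $\ket{y_q}^{\otimes 2t}$ to $(\ket{y_q}, \tilde g\ket{y_q})^{\otimes t}$ to get $D_{\subNR,0} \approx_c D'_{\subNR,0}$, and once rearranging $2Q$ samples into pairs to get $D_{\subNR,1} \approx_c D'_{\subNR,1}$ --- while Haar-DDH (\cref{def:QGA:Haar-DDH}) converts the shared-$\tilde g$ distribution over Haar base states into independent $h_q$'s, and Haar-PR (via \cref{lem:HaarPR_Q}) replaces the correlated Haar pairs by independent ones. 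Without this chain, your argument silently assumes the one property that the corollary's hypotheses were introduced to replace, so it does not establish the stated corollary.
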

\begin{corollary}
If a QGA is DDH and $G$ is commutative, then 
the above construction is a PRFSG.
\end{corollary}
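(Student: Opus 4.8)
The plan is to obtain the corollary as a direct composition of \cref{lem:com+DDH->NR} and \cref{thm:PRNR->PRFSG}, chaining ``commutativity of $G$ $+$ DDH $\Rightarrow$ NR $\Rightarrow$ PRFSG.'' First I would invoke \cref{lem:com+DDH->NR}: since $G$ is commutative and $(G,S)$ is DDH, the QGA is NR, that is, $D_{\subNR,0} \approx_c D_{\subNR,1}$. This is the substantive step and the only place the algebraic hypothesis is consumed---commutativity is precisely what lets the hybrid argument identify the tail samples $(g_q|s_0\rangle, g_q\tilde{g}|s_0\rangle)$ with the $D_{\subNR,0}$-form $(g_q|s_0\rangle, \tilde{g}g_q|s_0\rangle)$, so that a single DDH challenge $(|s_0\rangle, \tilde{g}|s_0\rangle, g|s_0\rangle, h|s_0\rangle)$ can be embedded at the $i$-th coordinate while the remaining coordinates are simulated honestly.

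Having established the NR property, I would then apply \cref{thm:PRNR->PRFSG}, which guarantees that any PR and NR QGA yields a PRFSG via exactly the $(\KeyGen,\StateGen)$ construction named in the corollary. The conclusion is then immediate, with no additional calculation required.

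The one point requiring care---and the only genuine obstacle---is that \cref{thm:PRNR->PRFSG} consumes PR \emph{in addition} to NR, whereas the corollary as phrased lists only DDH and commutativity. I would therefore read the corollary's hypotheses as also including the PR assumption, consistent with \cref{fig:relations}, where the relevant source node is ``PR $+$ $G$'s commutativity $+$ DDH.'' This reading is needed because DDH together with commutativity does not appear to yield PR on its own: the DDH indistinguishability compares states all lying in the $G$-orbit of $|s_0\rangle$, whereas PR asserts that such orbit states are indistinguishable from genuinely Haar-random states---a strictly stronger guarantee that orbit-internal indistinguishability cannot recover. With PR included among the assumptions, the two-step composition goes through verbatim.
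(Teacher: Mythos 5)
Your proposal coincides with the paper's own derivation: the paper states this corollary as an immediate consequence of composing \cref{lem:com+DDH->NR} (commutativity $+$ DDH $\Rightarrow$ NR) with \cref{thm:PRNR->PRFSG}, exactly as you do, with no further argument. Your flag about the missing PR hypothesis is also well taken and correctly resolved: \cref{thm:PRNR->PRFSG} genuinely consumes PR in the $\mathsf{Hybrid}$-to-$\mathsf{Ideal}$ step (\cref{lem:PRNR->PRFSG:PR->Hybrid-Ideal}), DDH alone cannot supply it since it only compares states within the $G$-orbit of $\ket{s_0}$, and both \cref{fig:relations} (whose source node reads ``PR $+$ $G$'s commutativity $+$ DDH'') and the classical analogue in \cref{sec:ClassicNRPRF} (which explicitly assumes PR and DDH) confirm that PR is intended as an additional hypothesis of the corollary.
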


To show \cref{thm:PRNR->PRFSG}, we define three games $\mathsf{Real}$, $\mathsf{Hybrid}$, and $\mathsf{Ideal}$ defined as follows: 
\begin{itemize}
\item $\mathsf{Real}$: This is the PRFSG security game whose oracle is $\StateGen(k,\cdot)$. That is, 
the challenger $\cC$ chooses $k \gets \KeyGen(1^\secp)$ and runs $\cA$ with the oracle $\StateGen(k,\cdot)$, which takes $x \in \{0,1\}^\ell$ as input and returns $\ket{\phi_k(x)}$.
$\cC$ outputs $\cA$'s decision. 
\item $\mathsf{Hybrid}$: This is the PRFSG security game whose oracle is defined as follows: On query $x$, if it is not queried before, then the oracle samples $h_{x} \gets G$ and returns $h_{x} \ket{s_0}$; otherwise, it returns stored $h_{x} \ket{s_0}$. 
\item $\mathsf{Ideal}$: This is the PRFSG security game whose oracle is $\cO_{\subHaar}$ defined as follows: On query $x$, 
if it is not queried before, then $\cO_{\subHaar}$ samples $\ket{s_{x}} \gets \mu$ and returns $\ket{s_{x}}$; otherwise, it returns $\ket{s_{x}}$. 
\end{itemize}
\cref{lem:PRNR->PRFSG:NR->Real-Hybrid} below shows that $\mathsf{Real}$ is computationally indistinguishable from $\mathsf{Hybrid}$ if QGA is NR. The proof is obtained by following the proofs in Naor and Reingold~\cite{JACM:NaoRei04} and Alamati~et~al.~\cite{AC:ADMP20}. 
It is easy to show that $\mathsf{Hybrid}$ and $\mathsf{Ideal}$ are computationally indistinguishable if QGA is PR (\cref{lem:PRNR->PRFSG:PR->Hybrid-Ideal}). 
Thus, we obtain \cref{thm:PRNR->PRFSG}. The lemmas follow. 

\begin{lemma} \label{lem:PRNR->PRFSG:NR->Real-Hybrid} 
$\mathsf{Real}$ and $\mathsf{Hybrid}$ are computationally indistinguishable if QGA is \NR.
\end{lemma}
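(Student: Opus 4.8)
The plan is to interpolate between $\mathsf{Real}$ and $\mathsf{Hybrid}$ by a standard Naor--Reingold-style hybrid argument over the $\ell$ bit positions of the input, reducing each consecutive pair of hybrids to the NR assumption. Concretely, for $i \in \{0,1,\dots,\ell\}$ I would define a game $\mathsf{Hyb}_i$ that samples $g_{i+1},\dots,g_\ell \gets G$ and $\ket{s_0} \gets S$ once at the start, and whose oracle answers a query $x=(x[1],\dots,x[\ell])$ as follows: letting $y=(x[1],\dots,x[i])$ be the length-$i$ prefix, it lazily samples a fresh $h_y \gets G$ the first time the prefix $y$ occurs (reusing it on later queries with the same prefix) and returns a fresh copy of $g_\ell^{x[\ell]}\cdots g_{i+1}^{x[i+1]}\, h_y \ket{s_0}$. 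Then $\mathsf{Hyb}_0 = \mathsf{Real}$, since $h_\epsilon\ket{s_0}$ with $h_\epsilon\gets G$ has the same distribution as $g_0\ket{s_0}$; and $\mathsf{Hyb}_\ell = \mathsf{Hybrid}$, since the oracle then returns $h_x\ket{s_0}$ indexed by the full input $x$. It therefore suffices to show $\mathsf{Hyb}_{i-1}\approx_c \mathsf{Hyb}_i$ for every $i\in[\ell]$, and conclude by the triangle inequality over the $\ell$ (polynomially many) steps.

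For the step $i-1 \to i$, observe that the only difference between the two games is at level $i$: in $\mathsf{Hyb}_{i-1}$ the contribution of the $i$-th bit is $g_i^{x[i]} h_{y'}\ket{s_0}$, where $y'=(x[1],\dots,x[i-1])$ and $g_i$ is a single element shared across all prefixes, whereas in $\mathsf{Hyb}_i$ the level-$i$ seed is an independent $h_{(y',x[i])}\ket{s_0}$. This is exactly the gap between $D_{\subNR,0}$ and $D_{\subNR,1}$, with $g_i$ playing the role of the shared secret $\tilde g$: for each length-$(i-1)$ prefix $y'$, the NR sample $(\ket{u_{y'}},\ket{v_{y'}})^{\otimes t}$ equals $(h_{y'}\ket{s_0}, \tilde g\, h_{y'}\ket{s_0})^{\otimes t}$ under $D_{\subNR,0}$ and a pair of independent $(h_{y',0}\ket{s_0}, h_{y',1}\ket{s_0})^{\otimes t}$ under $D_{\subNR,1}$. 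I would thus build a distinguisher $\cB$ against NR that samples $g_{i+1},\dots,g_\ell\gets G$ itself and simulates $\cA$'s oracle by lazy sampling: on a query $x$ with length-$(i-1)$ prefix $y'$ and $i$-th bit $b$, it assigns a fresh NR sample to $y'$ if $y'$ is new, takes $\ket{u_{y'}}$ if $b=0$ and $\ket{v_{y'}}$ if $b=1$ as the level-$i$ seed, applies $g_\ell^{x[\ell]}\cdots g_{i+1}^{x[i+1]}$ to a fresh copy of it, and returns the result. When the NR challenge is $D_{\subNR,0}$ the view of $\cA$ is exactly $\mathsf{Hyb}_{i-1}$, and when it is $D_{\subNR,1}$ it is exactly $\mathsf{Hyb}_i$, so $\cB$'s advantage equals $\cA$'s distinguishing advantage, which is therefore negligible.

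The key to making this reduction efficient is lazy sampling together with a careful accounting of the NR parameters $Q$ and $t$. Although there are up to $2^{i-1}$ possible prefixes $y'$, an adversary making at most $Q=\poly(\secp)$ classical queries touches at most $Q$ distinct prefixes, so only $Q$ NR samples are ever requested; and since each query consumes a single fresh copy of a level-$i$ seed and at most $Q$ queries can share the same $(y',b)$, taking the number of copies $t=Q$ suffices (the NR assumption provides $(\cdot)^{\otimes t}$ for every polynomial $t$, and because $\ket{s_0}$ is pure, $(\ket{u_{y'}},\ket{v_{y'}})^{\otimes t}$ furnishes $t$ independent copies of each component). As both $Q$ and $\ell$ are polynomial, the accumulated advantage over all $\ell$ hybrid steps remains negligible in $\secp$.

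The main obstacle, and the reason the statement is restricted to classical queries, is precisely this lazy, prefix-indexed simulation: to answer a query, $\cB$ must read off the length-$(i-1)$ prefix of $x$ in order to decide which NR sample to use, which is well defined only when the queries are classical. With quantum queries the prefix is not a fixed classical value, so the reduction cannot coherently route a superposition of inputs to the correct per-prefix NR samples; this is the same difficulty that prevents our construction from being proven secure against quantum queries (cf.\ \cref{sec:PRFSG:quantum-query}). Everything else is routine Naor--Reingold hybrid bookkeeping, following \cite{JACM:NaoRei04,AC:ADMP20}.
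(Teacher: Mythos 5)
Your proposal is correct and takes essentially the same approach as the paper's proof: the same prefix-indexed hybrids $\mathsf{Hyb}_0,\dots,\mathsf{Hyb}_\ell$ (the paper's $\mathsf{Game}_0,\dots,\mathsf{Game}_\ell$) with lazy per-prefix sampling, the same reduction in which the NR element $\tilde{g}$ plays the role of the level-$i$ key $g_i$ and fresh NR samples are assigned to new prefixes via a counter, and the same accounting that $Q$ samples with $t=Q$ copies suffice for a $Q$-query classical adversary. The only cosmetic difference is that your $\mathsf{Hyb}_0$ lazily samples $h_\epsilon$ for the empty prefix where the paper fixes $g_0$ directly, which is identically distributed.
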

\begin{proof}
We define the following hybrid games $\mathsf{Game}_j$ for $j = 0,\dots,\ell$: 
The challenger samples $\ket{s_0} \gets S$, $g_0 \gets G$, and $g_i \gets G$ for $i \in [j+1,\ell]$. 
Let $q$-th adversary's query be $x_q= (x_q[1],\dots,x_q[\ell]) \in \{0,1\}^\ell$. 
The challenger returns a quantum state $\ket{f_{j,q}}$ defined as follows: 
\begin{enumerate}
\item If $j = 0$, then the challenger let $\ket{y_q} = g_0 \ket{s_0}$. 
\item Otherwise
\begin{itemize}
    \item if there exists $q' < q$ satisfying $(x_q[1],\dots,x_q[j]) = (x_{q'}[1],\dots,x_{q'}[j])$, then $\ket{y_q} = \ket{y_{q'}}$; 
    \item otherwise, it samples $g_q \gets G$ and $\ket{y_q} \coloneqq g_q \ket{s_0}$. 
\end{itemize}
\item Return $\ket{f_{j,q}} \coloneqq g_\ell^{x_q[\ell]} \cdots g_{j+1}^{x_q[j+1]} \ket{y_q}$. 
\end{enumerate}
It is easy to verify $\mathsf{Game}_0$ and $\mathsf{Game}_\ell$ are $\mathsf{Real}$ and $\mathsf{Hybrid}$, respectively. 
The following claim shows $\mathsf{Game}_0 \approx_c \mathsf{Game}_1 \approx_c \dots \approx_c \mathsf{Game}_\ell$ if the QGA is \NR. 
\end{proof}
\begin{claim} \label{claim:PRFS-hybrid}
For $j = 0,\dots,\ell-1$, 
$\mathsf{Game}_j$ and $\mathsf{Game}_{j+1}$ are computationally indistinguishable 
if the QGA is \NR. 
\end{claim}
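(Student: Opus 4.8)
The plan is to prove the claim by a direct reduction to the \NR assumption, exploiting the fact that $\mathsf{Game}_j$ and $\mathsf{Game}_{j+1}$ differ in exactly the way that $D_{\subNR,0}$ and $D_{\subNR,1}$ differ. First I would isolate the difference between the two games. Fix a query $x$ with $j$-prefix $p=(x[1],\dots,x[j])$ and let $b=x[j+1]$. In $\mathsf{Game}_j$ the base state attached to $p$ is $g_p\ket{s_0}$ (fresh per distinct $p$), and the global unitary $g_{j+1}$ is applied iff $b=1$, so the two possible base states feeding the suffix are the \emph{correlated} pair $(g_p\ket{s_0},\,g_{j+1}g_p\ket{s_0})$. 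In $\mathsf{Game}_{j+1}$ the two children $(p,0)$ and $(p,1)$ instead receive \emph{independent} fresh base states $(g_{(p,0)}\ket{s_0},\,g_{(p,1)}\ket{s_0})$. Identifying $\tilde g$ with $g_{j+1}$, the first form is exactly an $\subNR$-$0$ sample $(g_q\ket{s_0},\tilde g g_q\ket{s_0})$ and the second is exactly an $\subNR$-$1$ sample $(g_q\ket{s_0},h_q\ket{s_0})$, with $q$ indexing the distinct $j$-prefix.

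Next I would build a distinguisher $\cB$ against \NR from a distinguisher $\cA$ for the two games. Let $T=T(\secp)$ be a polynomial bound on $\cA$'s number of oracle queries and set $Q=t=T$. On receiving the challenge $\{(\ket{\alpha_q},\ket{\beta_q})^{\otimes t}\}_{q\in[Q]}$, the reduction $\cB$ samples the suffix keys $g_{j+2},\dots,g_\ell\gets G$ itself (an empty set when $j=\ell-1$) and maintains a table mapping each distinct $j$-prefix seen so far to a fresh index $q$. On a query $x$ with prefix $p$ and bit $b=x[j+1]$, $\cB$ looks up or allocates the index $q$ for $p$, takes one unused copy of $\ket{\alpha_q}$ if $b=0$ and of $\ket{\beta_q}$ if $b=1$, applies the suffix unitary $g_\ell^{x[\ell]}\cdots g_{j+2}^{x[j+2]}$ to it, and returns the result; finally $\cB$ outputs $\cA$'s guess. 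Since $\cB$ only applies efficiently implementable unitaries that it sampled, it is QPT.

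Then I would verify correctness of the simulation. When the challenge is drawn from $D_{\subNR,0}$ we have $\ket{\alpha_q}=g_q\ket{s_0}$ and $\ket{\beta_q}=\tilde g g_q\ket{s_0}$ for a single global $\tilde g$, so $\cB$'s answer to $x$ is $g_\ell^{x[\ell]}\cdots g_{j+2}^{x[j+2]}\,g_{j+1}^{x[j+1]}\,g_q\ket{s_0}$ with $g_{j+1}:=\tilde g$ and $g_q$ the base for prefix $p$; this is distributed exactly as $\mathsf{Game}_j$. When the challenge is from $D_{\subNR,1}$, the two children of each prefix get independent $g_q,h_q$, reproducing $\mathsf{Game}_{j+1}$. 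Because the NR pairs are i.i.d.\ across $q$ within each distribution, the on-the-fly assignment of indices to prefixes does not change the joint distribution, so $\cB$'s advantage equals $\cA$'s, which is therefore negligible by the \NR assumption (\cref{def:QGA:NR}).

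Finally, the main obstacle I anticipate is bookkeeping rather than a deep idea: I must justify the copy budget and the consistency of classical repeated queries. A single NR sample supplies $t=T$ copies each of $\ket{\alpha_q}$ and $\ket{\beta_q}$, and since every query consumes exactly one base-state copy and at most $T$ queries share a given prefix, the budget suffices; repeated queries to the same $x$ return identical states because $\cB$ applies the same deterministic suffix unitary to fresh but identical copies of the same base state, matching the games' requirement that equal queries yield equal outputs. The conceptual point to state carefully is that the transition from $\mathsf{Game}_j$ to $\mathsf{Game}_{j+1}$ is precisely the replacement of the shared multiplier $g_{j+1}$ by per-prefix fresh randomness, which is what the \NR assumption asserts to be indistinguishable; chaining the $\ell$ hybrids then yields \cref{lem:PRNR->PRFSG:NR->Real-Hybrid}.
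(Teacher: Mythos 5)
Your proposal is correct and follows essentially the same route as the paper's proof: a direct reduction to the \NR{} assumption that allocates one NR sample pair per distinct $j$-prefix (with on-the-fly indexing), uses $x[j+1]$ to select the first or second component, applies the self-sampled suffix unitaries $g_\ell^{x[\ell]}\cdots g_{j+2}^{x[j+2]}$, and identifies $\tilde{g}$ with $g_{j+1}$ to match $\mathsf{Game}_j$ versus $\mathsf{Game}_{j+1}$. Your explicit accounting of the copy budget ($Q=t=T$) and of consistency under repeated classical queries makes precise a detail the paper leaves implicit, but the argument is the same.
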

\begin{proof}
The definition of \NR QGAs implies that $D_{\subNR,0} \approx_c D_{\subNR,1}$ under our hypothesis. 
Thus, it is enough to construct a reduction algorithm $\mathcal{B}$ distinguishing $D_{\subNR,0}$ and $D_{\subNR,1}$ by using an adversary $\mathcal{A}$ distinguishing $\mathsf{Game}_j$ and $\mathsf{Game}_{j+1}$. 
Our reduction algorithm is defined as follows: 
\begin{enumerate}
\item $\mathcal{B}$ is given $\{(\ket{y_q}, \ket{z_q})^{\otimes Q}\}_{q \in [Q]}$, where $(\ket{y_q},\ket{z_q}) = (g_q \ket{s_0}, \tilde{g} g_q \ket{s_0})$ in $D_{\subNR,0}$ or $(g_q \ket{s_0}, h_q \ket{s_0})$ in $D_{\subNR,1}$. It prepares $g_{j+2},\dots,g_\ell \gets G$. It initializes $c = 1$. 
\item Receiving a $q$-th query $x_q$ from $\mathcal{A}$, it checks if there exists $q' < q$ satisfying $(x_q[1],\dots,x_q[j]) = (x_{q'}[1],\dots,x_{q'}[j])$. If so, it uses previously-defined consistent quantum states, that is, sets $\ket{\tilde{y}_q} \coloneqq \ket{\tilde{y}_{q'}}$ and $\ket{\tilde{z}_q} \coloneqq \ket{\tilde{z}_{q'}}$. Otherwise, it picks new quantum states from the pool, that is, sets  $\ket{\tilde{y}_q} \coloneqq \ket{y_{c}}$ and $\ket{\tilde{z}_q} = \ket{z_{c}}$ and increments $c$. It then answers 
\[
\ket{f_{j,q}} \coloneqq 
\begin{cases}
g_\ell^{x_q[\ell]} \cdots g_{j+2}^{x_q[j+2]} \ket{\tilde{y}_q} & \text{if $x_q[j+1] = 0$} \\ 
g_\ell^{x_q[\ell]} \cdots g_{j+2}^{x_q[j+2]} \ket{\tilde{z}_q} & \text{if $x_q[j+1] = 1$}. 
\end{cases}
\]
\item $\mathcal{B}$ outputs $\mathcal{A}$'s decision. 
\end{enumerate}
If the given samples follow $D_{\subNR,0}$, then the above simulation perfectly simulates $\mathsf{Game}_j$ by considering $\tilde{g}$ as $g_{j+1}$. 
If the given samples follow $D_{\subNR,1}$, then the above simulation perfectly simulates $\mathsf{Game}_{j+1}$ since $g_q$'s and $h_q$'s in $D_{\subNR,1}$ are chosen independently. 
Thus, the claim follows. 
\end{proof}

\begin{lemma} \label{lem:PRNR->PRFSG:PR->Hybrid-Ideal}
$\mathsf{Hybrid}$ and $\mathsf{Ideal}$ are computationally indistinguishable if QGA is PR. 
\end{lemma}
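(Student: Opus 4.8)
The plan is to reduce the indistinguishability of $\mathsf{Hybrid}$ and $\mathsf{Ideal}$ directly to the multi-sample pseudorandomness already established in \cref{lem:PR_Q}. The key observation is that, because we are in the classical-query setting, any QPT adversary $\cA$ makes at most some polynomial number $p(\secp)$ of queries, and hence touches at most $Q \coloneqq p(\secp)$ distinct inputs $x$, each queried at most $t \coloneqq p(\secp)$ times. In $\mathsf{Hybrid}$, the answer to the first query on a fresh $x$ is $h_x \ket{s_0}$ with $h_x \gets G$ and a single global $\ket{s_0} \gets S$, and every repeated query on the same $x$ returns another copy of that same state; thus the joint state handed to $\cA$ over the whole execution is exactly a sample from $D_{\subpr,0}'$, where the $q$-th block corresponds to the $q$-th distinct input. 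Likewise, in $\mathsf{Ideal}$ the joint state is exactly a sample from $D_{\subpr,1}'$, since each fresh $x$ is assigned an independent Haar state $\ket{s_x} \gets \mu$ and repeated queries return further copies of it.

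First I would construct a reduction $\cB$ that, on receiving a challenge $\{\ket{y_q}^{\otimes t}\}_{q \in [Q]}$ drawn from either $D_{\subpr,0}'$ or $D_{\subpr,1}'$, simulates the oracle for $\cA$ as follows. $\cB$ keeps a counter $c$ initialized to $1$ together with a table mapping previously seen inputs to block indices. On a query $x$: if $x$ is new, $\cB$ assigns it block index $c$, increments $c$, and returns one fresh copy of $\ket{y_c}$; if $x$ was seen before with block index $q$, $\cB$ returns an as-yet-unused copy of $\ket{y_q}$. Since each distinct input is queried at most $t$ times and each of the $Q$ blocks contains $t$ copies, $\cB$ never runs out of copies. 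When $\cA$ halts, $\cB$ outputs $\cA$'s decision.

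The simulation is perfect in both worlds. If the challenge comes from $D_{\subpr,0}'$, then $\ket{y_q} = h_q \ket{s_0}$ with a fresh $h_q \gets G$ per block and a single shared $\ket{s_0}$, which is precisely the distribution of the oracle answers in $\mathsf{Hybrid}$; if it comes from $D_{\subpr,1}'$, then $\ket{y_q} = \ket{s_q} \gets \mu$ independently per block, matching $\mathsf{Ideal}$. Hence $\cB$'s distinguishing advantage equals $\cA$'s advantage in telling $\mathsf{Hybrid}$ from $\mathsf{Ideal}$, and \cref{lem:PR_Q} bounds the former by $\negl(\secp)$.

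I do not expect a genuine obstacle here: the content is a routine reduction with no internal hybrids beyond those already absorbed into \cref{lem:PR_Q}. The only point requiring a little care is the bookkeeping that maps the adaptive, possibly-repeated classical queries onto the $Q$ pre-sampled blocks while respecting the copy budget $t$. This is exactly where the restriction to classical queries is used: with quantum (superposition) queries one could not assign a fresh classical block index to each queried input, and this reduction would break down.
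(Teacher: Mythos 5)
Your proposal is correct and matches the paper's approach: the paper's proof of \cref{lem:PRNR->PRFSG:PR->Hybrid-Ideal} is the single line that it ``immediately follows from \cref{lem:PR_Q},'' and your reduction---mapping the adaptive classical queries onto the pre-sampled blocks $\{\ket{y_q}^{\otimes t}\}_{q\in[Q]}$ with a counter and a lookup table---is exactly the routine argument the paper leaves implicit. Your closing remark correctly identifies why the classical-query restriction is essential, which is consistent with the paper's discussion in \cref{sec:PRFSG:quantum-query}.
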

\begin{proof}
This lemma immediately follows from \cref{lem:PR_Q}. 
\end{proof}

\subsection{On Quantum-Query PRFSGs} \label{sec:PRFSG:quantum-query}

As remarked in~\cref{remark:PRFSGs:Query}, we only consider classical-query PRFSGs (\cref{def:PRFSGs}). 
Currently, we fail to show either positive results that our NR-type PRFSG is secure against quantum queries 
or negative results, that is, the separation of classical-query and quantum-query PRFSGs. 
We discuss barriers for positive or negative results in the following. 

\paragraph{Barriers for positive results.} 
We tried to show our NR-type PRFSG is secure against \emph{quantum} queries, but we faced some problems. 
For example, we can consider the quantum-query version of $\mathsf{Hybrid}$ and $\mathsf{Ideal}$. Can we show the computational indistinguishability between them from PR QGAs?

For simplicity, we consider the case of small input space $\bit^\ell$ with $\ell = O(\log(\lambda))$.\footnote{If $\ell = \omega(\log(\lambda))$, we then invoke the small-range distribution argument in Zhandry~\cite{C:Zhandry12}.}
We identify $\bit^\ell$ with $[2^\ell]$. 
If we consider PRFs, then classical-query PRFs against a quantum adversary is also quantum-query PRFs. 
This is because a classical-query adversary can ask all inputs $1,\dots,{2^\ell}$ to its oracle and simulate answers on quantum queries. 
However, in the case of PRFSGs, we do not have such implications. 
A classical-query adversary can obtain all states $\ket{s_x}$, which is $\ket{\phi_k(x)}$ or $\ket{\psi_x}$, for all $x \in [2^\ell]$. 
How can we simulate the answers to quantum queries? 

For example, if we implement a mapping $\ket{x} \mapsto \ket{x}\ket{s_x}$ by using a controlled swap, then the results will be as follows: 
\begin{align*}
&\ket{x} \ket{0^n} \otimes (\ket{s_1}^{\otimes Q} \otimes \dots \otimes \ket{s_{2^\ell}}^{\otimes Q}) \\
&\quad\mapsto_{\text{controlled swap}} \ket{x} \ket{s_{x}} \otimes (\ket{s_1}^{\otimes Q} \otimes \dots \otimes 
\ket{0^n} \ket{s_{x}}^{\otimes (Q-1)}
\otimes \dots 
\otimes \ket{s_{2^\ell}}^{\otimes Q}). 
\end{align*}
Unfortunately, these operations produce an entanglement between the states for the adversary and the reduction algorithm $\cB$
 and our attempt fails.

\paragraph{Barriers for negative results.} 
Zhandry showed that if there exists a classical-query PRF, then there is a classical-query PRF insecure against quantum-query attacks~\cite[Theorem 3.1]{FOCS:Zhandry12}. 
One would consider we can show its analogy for PRFSGs by mimicking his proof. 
Unfortunately, this strategy does not work because of the following reasons. 

We briefly review Zhandry's strategy: 
Suppose that there exists a PRF $\mathsf{PRF}$ whose input space is $[N]$, where $N = 2^{\omega(\log(\lambda))}$. (Otherwise, there is no separation as we explained in the above.)
We then define a new PRF whose input space is $[N']$, where $N'$ is a power of $2$ larger than $4 N^2$. 
The new key is a pair of the original key $k$ and a random prime $a \in (N/2,N]$. 
The new PRF is defined as $\mathsf{PRF}' \colon ((k,a) , x) \mapsto \mathsf{PRF}(k, x \bmod{a})$, which has a secret period $a$. 
Zhandry then showed that 1) if $\mathsf{PRF}$ is classical-query secure, then $\mathsf{PRF}'$ is also and 
2) if $\mathsf{PRF}$ is quantum-query secure, then $\mathsf{PRF}'$ is \emph{not}, which implies there exists a PRF that is classical-query secure but quantum-query insecure. 
To show 2), Zhandry constructed a quantum-query adversary breaking the security of $\mathsf{PRF}'$. 
Roughly speaking, using the period-finding algorithm in Boneh and Lipton~\cite{C:BonLip95}, the adversary can find a period $a$ in polynomial time with a probability of at least $1/2$.
The Boneh-Lipton period-finding algorithm uses a sufficiently large number $W$ and prepare a quantum state $\frac{1}{\sqrt{W}} \sum_{x,s \in \Z_W} \exp(2\pi i xs/W) \ket{s}\ket{\mathsf{PRF}'_{k,a}(x)} = \frac{1}{\sqrt{W}} \sum_{x,s \in \Z_W} \exp(2\pi i xs/W) \ket{s}\ket{\mathsf{PRF}_{k}(x \bmod{a})}$. 
We note that the success probability of the algorithm strongly depends on the distinctness of $\mathsf{PRF}(k,0),\dots,\mathsf{PRF}(k,a-1)$ and good measurements. The distinctness follows from the quantum-query security of $\mathsf{PRF}$. The measurement is done by using the computational basis. 

In the case of PRFSGs, we can construct an analogue of a new PRF in the same way.
While the distinctness follows from the quantum-query security of $\mathsf{PRFSG}$, we fail to give the measurement to distinguish $a$ independent samples $|\psi_1\rangle,\dots,|\psi_{a}\rangle$ from the Haar measure. 
\section{Candidates of QGA} \label{sec:candidates}
In this section, we provide some examples of candidate constructions of QGAs.
The one is taken from random quantum circuits. The other candidates are inspired by Instantaneous quantum polynomial (IQP) circuits~\cite{BreMonShe16,BreJozShe10}.
Those ones feature a commutativity of two unitaries taken by $G$. 

\begin{definition}[Candidate 1: Random circuit QGA]
We define 
\begin{itemize}
\item 
$G(1^\secp)\to [g]:$ 
Output a random quantum circuit $g$. 
\item
$S(1^\secp)\to [\ket{s}]:$
Output $\ket{s} \coloneqq \ket{0^\secp}$.  
\end{itemize}
\end{definition}
Random quantum circuits are conjectured to be PRUs~\cite{C:AnaQiaYue22},
while PR, Haar-PR, and Haar-DDH QGA seem to be incomparable with PRUs. 

Inspired by IQP, we conjecture that the following two QGAs are PR, Haar-PR, and Haar-DDH. 
\begin{definition}[Candidate 2: IQP QGA with random $Z$-diagonal circuit]
We define an IQP QGA $(G, S)$ as follows: 
\begin{itemize}
\item 
$G(1^\secp)\to [g]:$ 
Take a random $Z$-diagonal circuit $D$~\footnote{
E.g., a random circuit with gates $\{T,\mathit{CS}\}$~\cite{BreMonShe16}.
}  and output a description $[D]$. 
This defines $g = H^{\otimes \secp} \cdot D \cdot H^{\otimes \secp}$. 
\item
$S(1^\secp)\to [\ket{s}]:$
Output $\ket{s} \coloneqq \ket{0^\secp}$. 
\end{itemize}
\end{definition}
\begin{definition}[Candidate 3: IQP QGA with random sparse polynomials]
Let $d \in [1,\secp]$ and polynomial $w = w(\secp)$. 
Let $\cD_{d,w}$ be 
\[
 \cD_{d,w} \coloneqq \left\{
 \begin{matrix}
 D \colon \ket{x_1,\dots,x_\secp} \to (-1)^{f(x_1,\dots,x_\secp)}\ket{x_1,\dots,x_\secp} \\
 \mid f \in \mathbb{F}_2[x_1,\dots,x_\secp], \deg(f) \leq d, \mathrm{term}(f) \leq w 
 \end{matrix}
 \right\},
\]
where $\deg$ is a total degree of $f$ and $\mathrm{term}$ is a number of terms of $f$. 
We then define a set of IQP circuits with respect to $\cD$: 
\[
 \cG_{d,w} \coloneqq \{ H^{\otimes \secp} \cdot D \cdot H^{\otimes \secp} \mid D \in \cD_{d,w} \}. 
\]
We define an IQP QGA $(G, S)$ as follows: 
\begin{itemize}
\item 
$G(1^\secp)\to [g]:$ 
Take a random sample $D \gets \cD_{d,w}$ and output a description $[D]$. 
This defines $g = H^{\otimes \secp} \cdot D \cdot H^{\otimes \secp}$. 
\item
$S(1^\secp)\to [\ket{s}]:$
Output $\ket{s} \coloneqq \ket{0^\secp}$. 
\end{itemize}
\end{definition} 
Candidates 2 and 3 differ the way to choose the $Z$-digaonal circuit $D$. 
We note that  $g h = h g$ holds for any $g = H^{\otimes \secp} D_g H^{\otimes 
\secp}$ and $h = H^{\otimes \secp} D_h H^{\otimes \secp}$ chosen by $G$ in both candidates. 

IQP random circuits are not PRUs. In fact, if the adversary queries $H^{\otimes \secp}|0^\secp\rangle$ to the oracle,
the oracle always returns 
$H^{\otimes \secp}|0^\secp\rangle$ if the oracle is the IQP oracle, but such probability is exponentially small if the oracle is the Haar random unitary oracle.\footnote{We thank Shogo Yamada for pointing out it.} 
Currently, we do not know that IQP random circuits are PR QGAs; 
It is shown that the state $\sum_x (-1)^{f(x)}|x\rangle$ with random $f$ is Haar random~\cite{C:BraShm20}.
We do not find any evidence that IQP random circuits are not Haar-PR or DDH QGAs.

\if0
We finally give an artificial candidate where the commutativity is not satisfied
and that will not imply PRUs.
\begin{definition}[Candidate 4: IQP-like QGA with random shallow circuits having fixed point $|0^\secp\rangle$]
Let $d$ be a polynomial of $\secp$. 
We define an IQP QGA $(G, S)$ as follows: 
\begin{itemize}
\item 
$G(1^\secp)\to [g]:$ 
Take a random $d$-depth circuit $D'$.
Generate a circuit $D$ which outputs $D'|x\rangle$ if $|x\rangle \neq |0^\secp\rangle$ and $|0^\secp\rangle$ if $|x\rangle = |0^\secp\rangle$. 
\mor{I do not understand this.}
Output a description $[D]$. 
This defines $g = H^{\otimes \secp} \cdot D \cdot H^{\otimes \secp}$. 
(Or, this defines $g = F^{-1} \cdot D \cdot F$.)
\item
$S(1^\secp)\to [\ket{s}]:$
Output $\ket{s} \coloneqq \ket{0^\secp}$. 
\end{itemize}
\end{definition}
We note that this $G$ does not support the commutativity and the random circuits chosen by $G$ are still not PRUs, since $|0^\secp\rangle$ is the fixed point of $D$ and $g \cdot H^{\otimes n} |0^\secp\rangle = H^{\otimes} |0^\secp\rangle$ (or $g F^{-1}|0^\secp\rangle = F^{-1}|0^\secp\rangle$).  
\xagawa{If we want to make it non-PRU, then we require the predictability of $D |0^\secp\rangle$.}
\xagawa{Can we give a more natural guessable $D$?}
\fi

\ifnum\anonymous=1
\else
\vspace{1cm}
\textbf{Acknowledgements.}
TM is supported by
JST CREST JPMJCR23I3,
JST Moonshot R\verb|&|D JPMJMS2061-5-1-1, 
JST FOREST, 
MEXT QLEAP, 
the Grant-in Aid for Transformative Research Areas (A) 21H05183,
and 
the Grant-in-Aid for Scientific Research (A) No.22H00522.
\fi

\ifnum\submission=0
\bibliographystyle{alpha} 
\else
\bibliographystyle{splncs04}
\fi
\bibliography{abbrev3,crypto,reference,text}

\newcommand{\etalchar}[1]{$^{#1}$}
\begin{thebibliography}{BCDD{\etalchar{+}}24}

\bibitem[AA11]{STOC:AarArk11}
Scott Aaronson and Alex Arkhipov.
\newblock The computational complexity of linear optics.
\newblock In Lance Fortnow and Salil~P. Vadhan, editors, {\em 43rd ACM STOC}, pages 333--342. {ACM} Press, June 2011.

\bibitem[ABF{\etalchar{+}}16]{ICITS:ABFGSJ16}
Gorjan Alagic, Anne Broadbent, Bill Fefferman, Tommaso Gagliardoni, Christian Schaffner, and Michael~St. Jules.
\newblock Computational security of quantum encryption.
\newblock In Anderson C.~A. Nascimento and Paulo Barreto, editors, {\em ICITS 16}, volume 10015 of {\em {LNCS}}, pages 47--71. Springer, Heidelberg, August 2016.

\bibitem[ABP15]{C:AbdBenPas15}
Michel Abdalla, Fabrice Benhamouda, and Alain Passel{\`e}gue.
\newblock An algebraic framework for pseudorandom functions and applications to related-key security.
\newblock In Rosario Gennaro and Matthew J.~B. Robshaw, editors, {\em CRYPTO~2015, Part~I}, volume 9215 of {\em {LNCS}}, pages 388--409. Springer, Heidelberg, August 2015.

\bibitem[ADMP20]{AC:ADMP20}
Navid Alamati, Luca {De Feo}, Hart Montgomery, and Sikhar Patranabis.
\newblock Cryptographic group actions and applications.
\newblock In Shiho Moriai and Huaxiong Wang, editors, {\em ASIACRYPT~2020, Part~II}, volume 12492 of {\em {LNCS}}, pages 411--439. Springer, Heidelberg, December 2020.

\bibitem[AGL24]{EPRINT:AnaGulLin24}
Prabhanjan Ananth, Aditya Gulati, and Yao-Ting Lin.
\newblock Cryptography in the common {Haar} state model: {Feasibility} results and separations.
\newblock Cryptology ePrint Archive, Paper 2024/1043, 2024.
\newblock To appear TCC 2024.

\bibitem[AGQY22]{TCC:AGQY22}
Prabhanjan Ananth, Aditya Gulati, Luowen Qian, and Henry Yuen.
\newblock Pseudorandom (function-like) quantum state generators: New definitions and applications.
\newblock In Eike Kiltz and Vinod Vaikuntanathan, editors, {\em TCC~2022, Part~I}, volume 13747 of {\em {LNCS}}, pages 237--265. Springer, Heidelberg, November 2022.

\bibitem[AQY22]{C:AnaQiaYue22}
Prabhanjan Ananth, Luowen Qian, and Henry Yuen.
\newblock Cryptography from pseudorandom quantum states.
\newblock In Yevgeniy Dodis and Thomas Shrimpton, editors, {\em CRYPTO~2022, Part~I}, volume 13507 of {\em {LNCS}}, pages 208--236. Springer, Heidelberg, August 2022.

\bibitem[BCDD{\etalchar{+}}24]{EPRINT:BCDSK24}
Alessandro Budroni, Jesús-Javier Chi-Domínguez, Giuseppe D'Alconzo, Antonio~J. Di~Scala, and Mukul Kulkarni.
\newblock Don’t use it twice! {Solving} relaxed linear code equivalence problems.
\newblock Cryptology ePrint Archive, Paper 2024/244, 2024.
\newblock To appear ASIACRYPT 2024.

\bibitem[BCQ23]{ITCS:BCQ23}
Zvika Brakerski, Ran Canetti, and Luowen Qian.
\newblock On the computational hardness needed for quantum cryptography.
\newblock ITCS 2023, 2023.

\bibitem[BDJR97]{FOCS:BDJR97}
Mihir Bellare, Anand Desai, Eric Jokipii, and Phillip Rogaway.
\newblock A concrete security treatment of symmetric encryption.
\newblock In {\em 38th FOCS}, pages 394--403. {IEEE} Computer Society Press, October 1997.

\bibitem[BFNV19]{NatPhys:BFNV19}
Adam Bouland, Bill Fefferman, Chinmay Nirkhe, and Umesh Vazirani.
\newblock On the complexity and verification of quantum random circuit sampling.
\newblock {\em Nature Physics}, 15:159--163, 2019.

\bibitem[BJ15]{C:BroJef15}
Anne Broadbent and Stacey Jeffery.
\newblock Quantum homomorphic encryption for circuits of low {T}-gate complexity.
\newblock In Rosario Gennaro and Matthew J.~B. Robshaw, editors, {\em CRYPTO~2015, Part~II}, volume 9216 of {\em {LNCS}}, pages 609--629. Springer, Heidelberg, August 2015.

\bibitem[BJS11]{BreJozShe10}
Michael~J. Bremner, Richard Jozsa, and Dan~J. Shepherd.
\newblock Classical simulation of commuting quantum computations implies collapse of the polynomial hierarchy.
\newblock {\em Proceedings of the Royal Society A: Mathematical, Physical and Engineering Sciences}, 467:459--472, 2011.

\bibitem[BKW20]{AC:BonKogWoo20}
Dan Boneh, Dmitry Kogan, and Katharine Woo.
\newblock Oblivious pseudorandom functions from isogenies.
\newblock In Shiho Moriai and Huaxiong Wang, editors, {\em ASIACRYPT~2020, Part~II}, volume 12492 of {\em {LNCS}}, pages 520--550. Springer, Heidelberg, December 2020.

\bibitem[BL95]{C:BonLip95}
Dan Boneh and Richard~J. Lipton.
\newblock Quantum cryptanalysis of hidden linear functions (extended abstract).
\newblock In Don Coppersmith, editor, {\em CRYPTO'95}, volume 963 of {\em {LNCS}}, pages 424--437. Springer, Heidelberg, August 1995.

\bibitem[BMR10]{CCS:BonMonRag10}
Dan Boneh, Hart~William Montgomery, and Ananth Raghunathan.
\newblock Algebraic pseudorandom functions with improved efficiency from the augmented cascade.
\newblock In Ehab {Al-Shaer}, Angelos~D. Keromytis, and Vitaly Shmatikov, editors, {\em ACM CCS 2010}, pages 131--140. {ACM} Press, October 2010.

\bibitem[BMS16]{BreMonShe16}
Michael~J. Bremner, Ashley Montanaro, and Dan~J. Shepherd.
\newblock Average-case complexity versus approximate simulation of commuting quantum computations.
\newblock {\em Physical Review Letters}, 117:080501, 2016.

\bibitem[BS20]{C:BraShm20}
Zvika Brakerski and Omri Shmueli.
\newblock Scalable pseudorandom quantum states.
\newblock In Daniele Micciancio and Thomas Ristenpart, editors, {\em CRYPTO~2020, Part~II}, volume 12171 of {\em {LNCS}}, pages 417--440. Springer, Heidelberg, August 2020.

\bibitem[BY91]{C:BraYun90}
Gilles Brassard and Moti Yung.
\newblock One-way group actions.
\newblock In Alfred~J. Menezes and Scott~A. Vanstone, editors, {\em CRYPTO'90}, volume 537 of {\em {LNCS}}, pages 94--107. Springer, Heidelberg, August 1991.

\bibitem[CCS24]{CCS24}
Boyang Chen, Andrea Coladangelo, and Or~Sattath.
\newblock The power of a single haar random state: constructing and separating quantum pseudorandomness, 2024.

\bibitem[CLM{\etalchar{+}}18]{AC:CLMPR18}
Wouter Castryck, Tanja Lange, Chloe Martindale, Lorenz Panny, and Joost Renes.
\newblock {CSIDH}: An efficient post-quantum commutative group action.
\newblock In Thomas Peyrin and Steven Galbraith, editors, {\em ASIACRYPT~2018, Part~III}, volume 11274 of {\em {LNCS}}, pages 395--427. Springer, Heidelberg, December 2018.

\bibitem[Cou06]{cryptoeprint:2006/291}
Jean-Marc Couveignes.
\newblock Hard homogeneous spaces.
\newblock Cryptology ePrint Archive, Paper 2006/291, 2006.
\newblock \url{https://eprint.iacr.org/2006/291}.

\bibitem[DD24]{FFTA:DAlDiS24}
Giuseppe D'Alconzo and Antonio~J. {Di Scala}.
\newblock Representations of group actions and their applications in cryptography.
\newblock {\em Finite Fields and Their Applications}, 99:102476, 2024.

\bibitem[DH76]{DifHel76}
Whitfield Diffie and Martin~E. Hellman.
\newblock New directions in cryptography.
\newblock {\em {IEEE} Transactions on Information Theory}, 22(6):644--654, 1976.

\bibitem[EHK{\etalchar{+}}13]{C:EHKRV13}
Alex Escala, Gottfried Herold, Eike Kiltz, Carla R{\`a}fols, and Jorge Villar.
\newblock An algebraic framework for {Diffie}-{Hellman} assumptions.
\newblock In Ran Canetti and Juan~A. Garay, editors, {\em CRYPTO~2013, Part~II}, volume 8043 of {\em {LNCS}}, pages 129--147. Springer, Heidelberg, August 2013.

\bibitem[FKM{\etalchar{+}}18]{FKMNTT18}
Keisuke Fujii, Hirotada Kobayashi, Tomoyuki Morimae, Harumichi Nishimura, Seiichiro Tani, and Shuhei Tamate.
\newblock Impossibility of classically simulating one-clean-qubit model with multiplicative error.
\newblock {\em Physical Review Letters}, 120:200502, 2018.

\bibitem[GGM84]{FOCS:GolGolMic84}
Oded Goldreich, Shafi Goldwasser, and Silvio Micali.
\newblock How to construct random functions (extended abstract).
\newblock In {\em 25th FOCS}, pages 464--479. {IEEE} Computer Society Press, October 1984.

\bibitem[GGM86]{JACM:GolGolMic86}
Oded Goldreich, Shafi Goldwasser, and Silvio Micali.
\newblock How to construct random functions.
\newblock {\em Journal of the {ACM}}, 33(4):792--807, 1986.

\bibitem[HM24a]{HM24_meta}
Taiga Hiroka and Tomoyuki Morimae.
\newblock Quantum cryptography from meta-complexity, 2024.

\bibitem[HM24b]{HuangMa}
Hsin-Yuan Huang and Fermi Ma.
\newblock Talk at simons institute, 2024.

\bibitem[IL89]{FOCS:ImpLub89}
Russell Impagliazzo and Michael Luby.
\newblock One-way functions are essential for complexity based cryptography (extended abstract).
\newblock In {\em 30th FOCS}, pages 230--235. {IEEE} Computer Society Press, October~/~November 1989.

\bibitem[JD11]{PQCRYPTO:JaoDeFo11}
David Jao and Luca {De Feo}.
\newblock Towards quantum-resistant cryptosystems from supersingular elliptic curve isogenies.
\newblock In Bo-Yin Yang, editor, {\em Post-Quantum Cryptography - 4th International Workshop, PQCrypto 2011}, pages 19--34. Springer, Heidelberg, November~/~December 2011.

\bibitem[JLS18]{C:JiLiuSon18}
Zhengfeng Ji, Yi-Kai Liu, and Fang Song.
\newblock Pseudorandom quantum states.
\newblock In Hovav Shacham and Alexandra Boldyreva, editors, {\em CRYPTO~2018, Part~III}, volume 10993 of {\em {LNCS}}, pages 126--152. Springer, Heidelberg, August 2018.

\bibitem[JQSY19]{TCC:JQSY19}
Zhengfeng Ji, Youming Qiao, Fang Song, and Aaram Yun.
\newblock General linear group action on tensors: {A} candidate for post-quantum cryptography.
\newblock In Dennis Hofheinz and Alon Rosen, editors, {\em TCC~2019, Part~I}, volume 11891 of {\em {LNCS}}, pages 251--281. Springer, Heidelberg, December 2019.

\bibitem[KQST23]{STOC:KQST23}
William Kretschmer, Luowen Qian, Makrand Sinha, and Avishay Tal.
\newblock Quantum cryptography in algorithmica.
\newblock In Barna Saha and Rocco~A. Servedio, editors, {\em 55th ACM STOC}, pages 1589--1602. {ACM} Press, June 2023.

\bibitem[Kre21]{Kre21}
W.~Kretschmer.
\newblock Quantum pseudorandomness and classical complexity.
\newblock {\em TQC 2021}, 2021.

\bibitem[KT24a]{STOC:KhuTom24}
Dakshita Khurana and Kabir Tomer.
\newblock Commitments from quantum one-wayness.
\newblock In Bojan Mohar, Igor Shinkar, and Ryan O'Donnell, editors, {\em 56th STOC}, pages 968--978. {ACM}, 2024.

\bibitem[KT24b]{cryptoeprint:2024/1490}
Dakshita Khurana and Kabir Tomer.
\newblock Founding quantum cryptography on quantum advantage, or, towards cryptography from $\#\mathsf{P}$-hardness.
\newblock Cryptology {ePrint} Archive, Paper 2024/1490, 2024.

\bibitem[LMW24]{STOC:LomMaWri24}
Alex Lombardi, Fermi Ma, and John Wright.
\newblock A one-query lower bound for unitary synthesis and breaking quantum cryptography.
\newblock In Bojan Mohar, Igor Shinkar, and Ryan O'Donnell, editors, {\em 56th STOC}, pages 979--990. {ACM}, 2024.

\bibitem[LW09]{CCS:LewWat09}
Allison~B. Lewko and Brent Waters.
\newblock Efficient pseudorandom functions from the decisional linear assumption and weaker variants.
\newblock In Ehab {Al-Shaer}, Somesh Jha, and Angelos~D. Keromytis, editors, {\em ACM CCS 2009}, pages 112--120. {ACM} Press, November 2009.

\bibitem[MOT20]{AC:MorOnuTak20}
Tomoki Moriya, Hiroshi Onuki, and Tsuyoshi Takagi.
\newblock {SiGamal}: {A} supersingular isogeny-based {PKE} and its application to a {PRF}.
\newblock In Shiho Moriai and Huaxiong Wang, editors, {\em ASIACRYPT~2020, Part~II}, volume 12492 of {\em {LNCS}}, pages 551--580. Springer, Heidelberg, December 2020.

\bibitem[MY22]{C:MorYam22}
Tomoyuki Morimae and Takashi Yamakawa.
\newblock Quantum commitments and signatures without one-way functions.
\newblock In Yevgeniy Dodis and Thomas Shrimpton, editors, {\em CRYPTO~2022, Part~I}, volume 13507 of {\em {LNCS}}, pages 269--295. Springer, Heidelberg, August 2022.

\bibitem[MY24]{TQC:MorYam24}
Tomoyuki Morimae and Takashi Yamakawa.
\newblock One-wayness in quantum cryptography.
\newblock In Fr{\'{e}}d{\'{e}}ric Magniez and Alex~Bredariol Grilo, editors, {\em TQC 2024}, volume 310 of {\em LIPIcs}, pages 4:1--4:21. Schloss Dagstuhl - Leibniz-Zentrum f{\"{u}}r Informatik, 2024.
\newblock See also \url{https://eprint.iacr.org/2022/1336}.

\bibitem[MYY24]{EPRINT:MorYamYam24}
Tomoyuki Morimae, Shogo Yamada, and Takashi Yamakawa.
\newblock Quantum unpredictability.
\newblock Cryptology ePrint Archive, Paper 2024/701, 2024.
\newblock To appear ASIACRYPT 2024.

\bibitem[NR04]{JACM:NaoRei04}
Moni Naor and Omer Reingold.
\newblock Number-theoretic constructions of efficient pseudo-random functions.
\newblock {\em J. {ACM}}, 51(2):231--262, 2004.

\bibitem[TD04]{TD04}
B.~M. Terhal and D.~P. DiVincenzo.
\newblock Adaptive quantum computation, constant-depth circuits and arthur-merlin games.
\newblock {\em Quant. Inf. Comput.}, 4(2):134--145, 2004.

\bibitem[Yan22]{AC:Yan22}
Jun Yan.
\newblock General properties of quantum bit commitments (extended abstract).
\newblock In Shweta Agrawal and Dongdai Lin, editors, {\em ASIACRYPT~2022, Part~IV}, volume 13794 of {\em {LNCS}}, pages 628--657. Springer, Heidelberg, December 2022.

\bibitem[Zha12a]{FOCS:Zhandry12}
Mark Zhandry.
\newblock How to construct quantum random functions.
\newblock In {\em 53rd FOCS}, pages 679--687. {IEEE} Computer Society Press, October 2012.

\bibitem[Zha12b]{C:Zhandry12}
Mark Zhandry.
\newblock Secure identity-based encryption in the quantum random oracle model.
\newblock In Reihaneh Safavi-Naini and Ran Canetti, editors, {\em CRYPTO~2012}, volume 7417 of {\em {LNCS}}, pages 758--775. Springer, Heidelberg, August 2012.

\end{thebibliography}

\appendix

\clearpage

\begin{center}
{\Huge
\textbf{Appendix}}
\end{center}






\section{Haar-PR and Haar-Haar-DDH imply SKE} \label{sec:HaarPR+Haar-DDH->SKE} 

\paragraph{Preliminaries.}

We first review the definitions of SKE. 
Our IND-CPA definition is ``real-or-fixed'' style\footnote{An adversary has access to the oracle that takes an input $x$ and returns an encryption of $x$ or a fixed element $x'$ depending on fixed $b \in \bit$ and guesses $b$.}.
It is easy to show that this security implies left-or-right IND-CPA security\footnote{An adversary has access to the oracle that takes two inputs $x_0,x_1$ and returns an encryption of $x_b$ with fixed $b \in \bit$ and guesses $b$.}.
via a hybrid argument. 
It is also easy to show that left-or-right IND-CPA security implies find-then-guess IND-CPA securities\footnote{An adversary has access to the oracle that takes an input $x$ and returns an encryption of $x$ and distinguish an encryption of $x_0^*$ or $x_1^*$.}
 defined in \cite{C:BroJef15,ICITS:ABFGSJ16,EPRINT:MorYamYam24}
 by following the proof in~\cite{FOCS:BDJR97}.

\begin{definition}[Classical-message SKE]
A symmetric-key encryption (SKE) scheme for $\ell$-bit classical messages consists of three algorithms $(\KeyGen, \Enc,\Dec)$ such that 
\begin{itemize}
\item $\KeyGen(1^\secp) \to K:$
This is a QPT algorithm that takes the security parameter $1^\secp$ as input and outputs a \emph{classical} secret key $K$.
\item $\Enc(K,b) \to \ct:$
This is a QPT algorithm that takes $K$ and a message $b \in \bit^\ell$ and outputs a \emph{quantum} ciphertext $\ct$. 
\item $\Dec(K,\ct) \to b' / \bot:$
This is a QPT algorithm that takes $K$ and a quantum ciphertext $\ct$ and outputs a classical message $b' \in \bit^\ell$ or a rejection symbol $\bot$. 
\end{itemize}
\end{definition}

\begin{definition}[IND-CPA-secure $1$-bit SKE]
We say that a SKE scheme $(\KeyGen,\Enc,\Dec)$ is \emph{IND-CPA-secure $1$-bit SKE} 
if it satisfies the following two properties: 
\begin{itemize}
\item Correctness: 
We have 
\begin{align*}
&\Pr_{K \gets \KeyGen(1^\secp)}[\Dec(K,\Enc(K,0)) = 0] = 1, \\
&\Pr_{K \gets \KeyGen(1^\secp)}[\Dec(K,\Enc(K,1)) = 1] \geq 1/5.
\end{align*}
\item IND-CPA security: We call SKE \emph{indistinguishable against chosen-plaintext attacks (IND-CPA secure)} if the following holds: 
For any QPT adversary $\cA$, 
\[
\left\lvert 
\Pr_{K \gets \KeyGen(1^\secp)}[1 \gets \cA^{\Enc(K,\cdot)}(1^\secp)] 
- 
\Pr_{K \gets \KeyGen(1^\secp)}[1 \gets \cA^{\Enc(K,1)}(1^\secp)] 
\right\rvert
\leq \negl(\secp),
\]
where $\cA$ queries the encryption oracles $\Enc(K,\cdot)$ or $\Enc(K,1)$ only classically. 
\end{itemize}
\end{definition}

\begin{definition}[IND-CPA-secure multi-bit SKE]
We say that a SKE scheme $(\KeyGen,\Enc,\Dec)$ is \emph{IND-CPA-secure multi-bit SKE} 
if it satisfies the following two properties: 
\begin{itemize}
\item Correctness: for any $m \in \bit^\ell$, 
\[
\Pr_{K \gets \KeyGen(1^\secp)}[\Dec(K,\Enc(K,m)) = m] \geq 1 - \negl(\secp).
\]
\item IND-CPA security: We call SKE \emph{indistinguishable against chosen-plaintext attacks (IND-CPA secure)} if the following holds: 
For any QPT adversary $\cA$, 
\[
\left\lvert 
\Pr_{K \gets \KeyGen(1^\secp)}[1 \gets \cA^{\Enc(K,\cdot)}(1^\secp)] 
- 
\Pr_{K \gets \KeyGen(1^\secp)}[1 \gets \cA^{\Enc(K,1^\ell)}(1^\secp)] 
\right\rvert
\leq \negl(\secp),
\]
where $\cA$ queries the encryption oracles $\Enc(K,\cdot)$ or $\Enc(K,1^\ell)$ only classically. 
\end{itemize}
\end{definition}
\begin{lemma}
Suppose that there exists an IND-CPA-secure $1$-bit SKE $(\KeyGen,\allowbreak \Enc,\allowbreak \Dec)$. 
Let $t = t(\secp) = \omega(\log(\secp))$ and  $\ell = \ell(\secp)$ be polynomials. 
We define the following new SKE scheme: 
\begin{itemize}
\item $\KeyGen':$
Run $\KeyGen(1^\secp)$ $t\ell$-times and obtain $K_1,\dots,K_{t \ell}$. Output $K' = (K_1,\dots,K_{t \ell})$.
\item $\Enc'(K',m):$ 
Let $m = (m_1,\dots,m_\ell) \in \bit^\ell$. 
For $i \in [\ell]$ and $j \in [t]$, generate $\ct_{(i-1)t + j} \gets \Enc(K_{(i-1)t + j},m_i)$. 
Output $\ct' = (\ct_1,\dots,\ct_{t \ell})$. 
\item $\Dec'(K',\ct'):$
For $i \in [\ell]$ and $j \in [t]$, let $m_{(i-1)t + j} \gets \Dec(K_{(i-1)t+j},\ct_{(i-1)t+j})$. 
For $i \in [\ell]$, if $m_{(i-1)t+j}=0$ for all $j\in [t]$, then set $m'_i = 0$; otherwise, set $m'_i = 1$. 
Output $m' = (m'_1,\dots,m'_\ell)$. 
\end{itemize}
This new SKE $(\KeyGen',\Enc',\Dec')$ is IND-CPA-secure multi-bit SKE with plaintext space $\bit^\ell$. 
\end{lemma}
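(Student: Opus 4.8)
The plan is to establish correctness by a Chernoff-type amplification argument and IND-CPA security by a component-by-component hybrid that reduces each step directly to the one-bit scheme.

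For correctness, fix $m=(m_1,\dots,m_\ell)$ and a position $i\in[\ell]$. If $m_i=0$, each of the $t$ components encrypting position $i$ is an encryption of $0$, which decrypts to $0$ with probability $1$ by the first correctness condition; hence all $t$ decryptions are $0$ and $\Dec'$ sets $m_i'=0$ with certainty. If $m_i=1$, each such component is an independent encryption of $1$ under an independently sampled key, and decrypts to $1$ with probability at least $1/5$, so it decrypts to $0$ with probability at most $4/5$. Because these $t$ components use independent keys and fresh encryption randomness, the events are independent, and $\Dec'$ errs on position $i$ (outputting $m_i'=0$) only if all $t$ of them decrypt to $0$, which occurs with probability at most $(4/5)^t$. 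Since $t=\omega(\log(\secp))$, this is $\negl(\secp)$, and a union bound over the $\ell$ positions bounds the total decryption error by $\ell\cdot(4/5)^t=\negl(\secp)$.

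For IND-CPA security, I would index the $t\ell$ ciphertext components by $r\in[t\ell]$, where component $r$ encrypts bit $x_{\lceil r/t\rceil}$ of the queried message under the independent key $K_r$. Define hybrids $H_0,\dots,H_{t\ell}$ so that, on each query $x$, hybrid $H_r$ answers components $1,\dots,r$ with encryptions of the real bit and components $r+1,\dots,t\ell$ with encryptions of the fixed bit $1$. Then $H_0$ is the fixed-oracle world $\Enc'(K',1^\ell)$ (every component encrypts $1$, matching $1^\ell$) and $H_{t\ell}$ is the real-oracle world $\Enc'(K',\cdot)$, while consecutive hybrids $H_{r-1}$ and $H_r$ differ only in component $r$, which encrypts $1$ in the former and the real bit in the latter. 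To bound $\lvert\Pr[1\gets\cA(H_{r-1})]-\Pr[1\gets\cA(H_r)]\rvert$ I would build a QPT reduction $\cB$ to the one-bit IND-CPA game for key $K_r$: $\cB$ samples all keys $K_{r'}$ with $r'\neq r$ itself, and on each classical query $x$ it answers the components $r'\neq r$ as prescribed by the hybrid (real for $r'<r$, fixed $1$ for $r'>r$) and forwards the single bit $x_{\lceil r/t\rceil}$ to its own classical oracle for component $r$. When $\cB$'s oracle is $\Enc(K_r,\cdot)$ it perfectly simulates $H_r$, and when it is $\Enc(K_r,1)$ it perfectly simulates $H_{r-1}$; thus $\cB$'s advantage equals $\cA$'s hybrid advantage, which is $\negl(\secp)$ by one-bit IND-CPA security. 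Summing over the $t\ell=\poly(\secp)$ hybrids by the triangle inequality yields the multi-bit bound.

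The argument is essentially routine, and the only point requiring care is the alignment of the fixed messages: because both the one-bit and the multi-bit games are in the ``real-or-fixed'' style with the same fixed value ($1$, respectively $1^\ell$), each hybrid step toggles exactly one component between its real bit and the fixed bit $1$, which is precisely what the one-bit oracle switches. This is what lets the natural double hybrid over positions and copies collapse into a single clean hybrid over the $t\ell$ components, so no left-or-right reformulation is needed.
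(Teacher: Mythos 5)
Your proof is correct and follows essentially the same route as the paper: the correctness analysis (perfect decryption on $0$, failure probability at most $(4/5)^t = \negl(\secp)$ on $1$ by independence of the $t$ keys, plus a union bound over the $\ell$ positions) matches the paper's exactly, and your component-by-component hybrid is precisely the hybrid argument the paper invokes in a single sentence for IND-CPA security. You merely spell out the details the paper omits, including the apt observation that the real-or-fixed formulation lets one hybrid over the $t\ell$ components reduce each step directly to the one-bit oracle without any left-or-right detour.
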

\begin{proof}
The correctness of the new multi-bit SKE follows from that of the underlying $1$-bit SKE: 
If $m_i = 0$, then the new decryption algorithm always outputs $m'_i = 0$. 
If $m_i = 1$, then the new decryption algorithm outputs $m'_i = 1$ with probability at least 
$1 - (1 - 1/5)^t = 1 - (4/5)^{\omega(\log(\secp))} = 1 - \delta(\secp)$
for some negligible function $\delta(\secp)$. 
Thus, for any $m \in \bit^\ell$, we have
\begin{align*}
\Pr_{K' \gets \KeyGen'(1^\secp)}[\Dec'(K',\Enc'(K',m)) = m] 
&\geq 1 - \Pr_{K' \gets \KeyGen'(1^\secp)}[\exists i \in [\ell]: 1 = m_i \neq m_i' = 0] \\
&\geq 1 - \ell \cdot \delta(\secp) \\
&= 1 - \negl(\secp)
\end{align*}
as we wanted. 

The IND-CPA security of the new multi-bit SKE immediately follows from that of the underlying $1$-bit SKE via a hybrid argument. 
\end{proof}

\paragraph{Construction.}
We can construct a simple IND-CPA-secure $1$-bit SKE from the HaarPR and Haar-DDH assumptions: 
\begin{theorem} \label{thm:HaarPR+Haar-DDH->1-bitSKE}
Let $(G,S)$ be a HaarPR and Haar-DDH QGA. 
Then, the following SKE is IND-CPA-secure $1$-bit SKE. 
\begin{itemize}
\item $\KeyGen(1^\secp):$ Generate $g \gets G(1^\secp)$ and output $K = g$.
\item $\Enc(K,b):$ If $b=0$, then generate $|s\rangle \gets \mu$ by using a $1$-design and output a ciphertext $\ct = (|s\rangle, g |s\rangle)$. If $b=1$, then generate $|s\rangle, |s'\rangle \gets \mu$ by using a $1$-design and output $\ct = (|s\rangle, |s'\rangle)$
\item $\Dec(K,\ct):$ Let $\ct = (|\phi\rangle, |\psi\rangle)$. 
Compute $g \otimes I$ on $\ct$ and run the SWAP test between registers. 
Output the result of the SWAP test. 
\end{itemize}
\begin{proof}[Proof of correctness]
If $b = 0$, then $|\psi\rangle = g|s\rangle$, then the registers after applying $g \otimes I$ is 
$(g|s\rangle, g|s\rangle)$. Thus, the SWAP test always outputs $0$. 
On the other hand, if $b=1$, then $|\psi\rangle = |s'\rangle$ is independent of $|\phi\rangle = |s\rangle$. Thus, 
\begin{align}
\Pr[\algo{Dec}(K,\algo{Enc}(K,1)) = 0]
&\leq \mathbb{E}_{g\gets G, |s\rangle,|s'\rangle \gets \mu}\left[\frac{1+|\langle s'|g|s\rangle|^2}{2}\right]
+ \negl(\secp) \label{eq:1-bitSKE:correct1} \\
&\leq \mathbb{E}_{|s\rangle,|s'\rangle \gets \mu}\left[\frac{1+|\langle s'|s\rangle|^2}{2}\right]
+ \negl(\secp) \label{eq:1-bitSKE:correct2} \\
&\leq (1+1/2)/2 + \negl(\secp) \leq 4/5, \label{eq:1-bitSKE:correct3}
\end{align}
where we used 
\cref{lem:design-Haar} for \cref{eq:1-bitSKE:correct1}, 
the fact that, for any $g$, the distribution of $g|s\rangle \gets \mu$ is equivalent to 
 that of $g|s\rangle \gets \mu$ for \cref{eq:1-bitSKE:correct2}, 
and \cref{lem:Haar-orthogonal} for \cref{eq:1-bitSKE:correct3}.
\end{proof}

\begin{proof}[Proof of IND-CPA security]
If the QGA is Haar-PR, then $D'_{\subHaarPR,0} \approx_c D'_{\subHaarPR,1}$ holds in \cref{lem:HaarPR_Q}. 
Notice that $D'_{\subHaarPR,0} = D_{\subHDDH,1}$ in \cref{def:QGA:Haar-DDH}. 
Thus, if the underlying QGA is Haar-PR and Haar-DDH, then $D'_{\subHaarPR,1} \approx_c D_{\subHDDH,0}$, where 
\begin{align*}
D'_{\subHaarPR,1} &:
\text{for } q \in [Q] ~ |s_q\rangle \gets \mu, |s'_q\rangle \gets \mu; 
\text{ return } \{(|s_q\rangle, |s'_q\rangle)^{\otimes t}\}_{q \in [Q]}, \\
D_{\subHDDH,0}     &:
g \gets G, \text{ for } q \in [Q] ~ |s_q\rangle \gets \mu; 
\text{ return } \{(|s_q\rangle, g|s_q\rangle)^{\otimes t}\}_{q \in [Q]}. 
\end{align*}

Let $\cA$ be an adversary against the IND-CPA security
 and let $Q$ be the number of queries $\cA$ making. 
We construct a reduction algorithm $\cB$ distinguishing $D'_{\subHaarPR,1}$ and $D_{\subHDDH,0}$ with $t = 1$ as follows: 
\begin{enumerate}
\item Receive samples $\{(|s_q\rangle, |s'_q\rangle)\}_{q \in [Q]}$ as input, where $|s'_q\rangle$ is $g |s_q\rangle$ with $g \gets G$ or chosen from $\mu$. 
\item Run $\cA$ and simulate the oracle as follows:
\begin{itemize}
\item If the $i$-th query is $0$, then return $\ct_i = (|s_q\rangle, |s'_q\rangle)$.
\item If the $i$-th query is $1$, then generate two independent samples $|\phi\rangle, |\psi\rangle$ by using $1$-design and return $\ct_i = (|\phi\rangle, |\psi\rangle)$. 
\end{itemize}
\item Output $\cA$'s decision. 
\end{enumerate}
If the input samples are chosen from $D_{\subHDDH,0}$, 
 then $\cB$ perfectly simulates the encryption oracle $\Enc(K,\cdot)$, where $K = g \gets G$. 
On the other hand, if the input samples are chosen from $D'_{\subHaarPR,1}$, 
 then $\cB$ statistically simulates the encryption oracle $\Enc(K,1)$. 
Thus, $\cB$'s advantage is statistically close to that of $\cA$ against IND-CPA security. 
This completes the proof. 
\end{proof}

\end{theorem}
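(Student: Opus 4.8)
The plan is to establish the two required properties of a $1$-bit SKE separately: correctness (with the asymmetric $1$ versus $1/5$ acceptance thresholds) and IND-CPA security, the latter by a reduction to the computational indistinguishability of the sample distributions underlying the Haar-PR and Haar-DDH assumptions.

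For correctness I would first handle $b=0$. Here $\ct = (\ket{s}, g\ket{s})$, so applying $g \otimes I$ produces the product state $g\ket{s}\otimes g\ket{s}$, i.e.\ two identical registers; the SWAP test on two identical pure states accepts (outputs $0$) with certainty, giving $\Pr[\Dec(K,\Enc(K,0)) = 0] = 1$. For $b=1$ the two registers after $g \otimes I$ are $g\ket{s}$ and $\ket{s'}$ with $\ket{s},\ket{s'}$ independent, so the SWAP test outputs $0$ with probability $(1 + |\langle s'| g |s\rangle|^2)/2$. I would bound the expectation of this quantity: for any fixed $g$, if $\ket{s}$ is Haar random then so is $g\ket{s}$, so by \cref{lem:Haar-orthogonal} we have $\mathbb{E}[|\langle s'| g|s\rangle|^2] \le 1/2^n \le 1/2$. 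Since the states come from a $1$-design rather than exactly Haar, I invoke \cref{lem:design-Haar} to absorb a negligible error, concluding $\Pr[\Dec(K,\Enc(K,1)) = 0] \le (1 + 1/2)/2 + \negl(\secp) \le 4/5$, hence the scheme decrypts $1$ correctly with probability at least $1/5$.

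For IND-CPA security the key observation is that an $\Enc(K,0)$ ciphertext is exactly one $(\ket{s}, g\ket{s})$ sample as in $D_{\subHDDH,0}$ of \cref{def:QGA:Haar-DDH}, while an $\Enc(K,1)$ ciphertext is a pair of independent Haar states as in $D'_{\subHaarPR,1}$ of \cref{lem:HaarPR_Q}, and crucially the latter does \emph{not} depend on $g$. I would first chain the two assumptions to obtain $D_{\subHDDH,0} \approx_c D'_{\subHaarPR,1}$: Haar-DDH gives $D_{\subHDDH,0} \approx_c D_{\subHDDH,1}$, the distribution $D_{\subHDDH,1}$ is syntactically identical to $D'_{\subHaarPR,0}$, and \cref{lem:HaarPR_Q} gives $D'_{\subHaarPR,0} \approx_c D'_{\subHaarPR,1}$. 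Then, given an IND-CPA adversary $\cA$ making $Q$ classical queries, I build a distinguisher $\cB$ that receives $Q$ challenge samples and answers each query $0$ with the next fresh challenge sample and each query $1$ with an independently $1$-design-generated pair $(\ket{\phi},\ket{\psi})$. When the input is drawn from $D_{\subHDDH,0}$ this perfectly realizes the real oracle $\Enc(K,\cdot)$ with $K=g$; when drawn from $D'_{\subHaarPR,1}$ it realizes (statistically, up to the $1$-design error) the fixed oracle $\Enc(K,1)$, because encryptions of $1$ ignore $g$. Thus $\cB$'s distinguishing advantage equals $\cA$'s IND-CPA advantage up to negligible terms, which is negligible by the chained indistinguishability.

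The main obstacle I anticipate is not a single hard estimate but correctly matching the ``real-or-fixed'' game structure to the available assumptions: one must notice that the fixed oracle $\Enc(K,1)$ is statistically independent of the secret $g$, so that $\cB$ can simulate it without knowing $g$, and that only the responses to queries of the bit $0$ actually differ between the two oracles. A secondary point requiring care is bookkeeping the $1$-design approximation errors (via \cref{lem:design-Haar}) so they stay negligible in both the correctness bound and the simulation of query-$1$ responses, together with spotting the exact syntactic coincidence $D_{\subHDDH,1} = D'_{\subHaarPR,0}$ that allows the two assumptions to be composed.
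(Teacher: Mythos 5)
Your proposal is correct and follows essentially the same route as the paper's own proof: the identical correctness bound via the SWAP test, Haar unitary invariance, \cref{lem:design-Haar}, and \cref{lem:Haar-orthogonal}, and the identical security argument chaining $D_{\subHDDH,0} \approx_c D_{\subHDDH,1} = D'_{\subHaarPR,0} \approx_c D'_{\subHaarPR,1}$ with the same per-query reduction (fresh challenge sample on queries of $0$, self-generated $1$-design pairs on queries of $1$). Your closing observations—that the fixed oracle $\Enc(K,1)$ is independent of $g$ and that the syntactic identity $D_{\subHDDH,1} = D'_{\subHaarPR,0}$ enables composing the two assumptions—are exactly the points the paper's proof relies on.
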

Since an IND-CPA-secure multi-bit SKE implies an IND-CPA-secure \emph{quantum-message} SKE~\cite{C:BroJef15} (and the formal proof in \cite[Appendix A]{EPRINT:MorYamYam24}), 
 we have the following corollary. 
\begin{corollary} \label{cor:HaarPR+Haar-DDH->SKE}
Let $(G,S)$ be a HaarPR and Haar-DDH QGA. 
Then, an IND-CPA-secure quantum-message SKE exists. 
\end{corollary}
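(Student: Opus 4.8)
The plan is to obtain the quantum-message SKE by composing three results, two of which are already available in this section. First I would invoke \cref{thm:HaarPR+Haar-DDH->1-bitSKE}: since $(G,S)$ is HaarPR and Haar-DDH, it yields an IND-CPA-secure $1$-bit SKE $(\KeyGen,\Enc,\Dec)$ whose correctness has only a constant one-sided gap (decryption of $0$ is perfect, while decryption of $1$ succeeds with probability at least $1/5$), as established there via the SWAP-test decryption and the reductions to $D'_{\subHaarPR}$ and $D_{\subHDDH}$.

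Second, I would apply the preceding amplification lemma to this $1$-bit scheme with parameters $t = \omega(\log\secp)$ and the desired message length $\ell$, producing a multi-bit scheme $(\KeyGen',\Enc',\Dec')$ with plaintext space $\bit^\ell$. The idea is that encrypting each message bit $t$ independent times and decoding by the ``OR rule'' (output $1$ unless all $t$ component-decryptions are $0$) amplifies the one-sided gap: a true $0$ is always recovered, while a true $1$ is misdecoded only if all $t$ repetitions fail, which happens with probability at most $(4/5)^t = \negl(\secp)$; a union bound over the $\ell$ blocks gives overall correctness $1-\negl(\secp)$. IND-CPA security of the multi-bit scheme follows from that of the $1$-bit scheme by a standard hybrid argument over the $t\ell$ independently keyed ciphertext slots.

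Third and finally, I would invoke the generic lifting from classical-message SKE to quantum-message SKE. Here one encrypts a quantum plaintext by applying a quantum one-time pad to the state and then encrypting the classical description of the Pauli keys under the multi-bit classical-message SKE; correctness is immediate and IND-CPA security against classical queries reduces to that of the classical-message scheme. This construction and its analysis are exactly the content of~\cite{C:BroJef15}, with the detailed proof in~\cite[Appendix A]{EPRINT:MorYamYam24}, so I would simply cite it, instantiating the required classical-message SKE with $(\KeyGen',\Enc',\Dec')$ from the previous step.

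Since the corollary is essentially a composition, there is no genuinely new obstacle here; the substantive work lives in \cref{thm:HaarPR+Haar-DDH->1-bitSKE}. If I had to single out the most delicate point, it would be checking that the parameter choice $t=\omega(\log\secp)$ in the amplification step stays compatible with the many-sample indistinguishability guarantees (\cref{lem:HaarPR_Q}) underlying the $1$-bit scheme — that is, that all copies handed to the reduction remain within the polynomial-sample regime for which Haar-PR and Haar-DDH are assumed — but since $t\ell$ is polynomial this causes no difficulty.
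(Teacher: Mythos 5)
Your proposal is correct and follows exactly the paper's own route: the paper likewise derives the corollary by composing \cref{thm:HaarPR+Haar-DDH->1-bitSKE}, the $t=\omega(\log\secp)$ OR-decoding amplification lemma to multi-bit SKE, and the generic classical-to-quantum-message lifting of~\cite{C:BroJef15} (with the formal proof in \cite[Appendix A]{EPRINT:MorYamYam24}). Your added sanity check that $t\ell$ remains polynomial, so the reductions stay within the polynomial-sample regime of \cref{lem:HaarPR_Q}, is accurate and unproblematic.
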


\section{Discussion on Naor-Reingold-style PRFs from Group Actions} \label{sec:ClassicNRPRF}

Here, we discuss how to weaken algebraic structures of group actions in the existing proofs~\cite{AC:BonKogWoo20,AC:ADMP20}. 
We first briefly review group actions and their notions. 
We then discuss the existing proofs by Boneh~et~al.~\cite{AC:BonKogWoo20} and Alamati~et~al.~\cite{AC:ADMP20}. 

\subsection{Preliminaries}

We first review the definition of group actions. 
\begin{definition}[Group action]
Let $G$ be a group with an identity element $1_G$ and let $S$ be a set. 
Let $\star \colon G \times S \to S$ be a map. 
We say that $(G,S,\star)$ is a group action if the map satisfies the following two properties: 
\begin{enumerate}
\item Identity: For any $s \in S$, we have $1_G \star s = s$. 
\item Compatibility: For any $g,h \in G$ and any $s \in S$, it holds that $(gh) \star s = g \star (h \star s)$. 
\end{enumerate}
\end{definition}

We next review the standard notions of group actions. 
\begin{definition}[Properties of group actions]
\mbox{}
\begin{enumerate}
\item Transitive: $(G,S,\star)$ is said to be \emph{transitive} if for arbitrary $s_1,s_2 \in S$, there exists a group element $g \in G$ satisfying $s_2 = g \star s_1$. 
\item Faithful: $(G,S,\star)$ is said to be \emph{faithful} if for each group element $g \in G$, either $g = 1_G$ or there exists an element $s \in S$ satisfying $s \neq g \star s$. In other words, a group action is \emph{faithful} if $g  = 1_G$ if and only if $s = g \star s$ for all $s \in S$. 
\item Free: $(G,S,\star)$ is called \emph{free} if for each group element $g \in G$, if there exists some element $s \in S$ satisfying $s = g \star s$ then $g = 1_G$. Note that if group action is free, then it is also faithful. 
\item Regular: $(G,S,\star)$ is said to be \emph{regular} if it is transitive and free. 
\end{enumerate}
\end{definition}
For an element $s \in S$, we consider a mapping $f_s \colon g \in G \mapsto g \star s \in S$. 
We also consider, for an element $g \in G$, a mapping $L_g \colon s \in S \mapsto g \star s \in S$. 
We note that, for any $s \in S$, if a group action is \emph{transitive} (or \emph{free}, resp.),  then $f_s$ is subjective (or injective, resp.). 
We also note that, if a group action is \emph{faithful}, then for any $g \neq h \in G$, $L_g \neq L_h$.
\begin{lemma} \label{lem:GA:PerfectRandom}
Suppose that $G$ is finite and a group action $(G,S,\star)$ is \emph{transitive and faithful}. 
Then, 
 for any $s_0 \in S$, 
 then the distribution of $s_i \gets S$ is equivalent to that of $g_i \star s_0$ with $g_i \gets G$. 
\end{lemma}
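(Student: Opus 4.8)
The plan is to reduce everything to the orbit--stabilizer theorem. Fix an arbitrary $s_0 \in S$ and consider the evaluation map $f_{s_0} \colon G \to S$ defined by $f_{s_0}(g) = g \star s_0$. Since the action is transitive, for every $s \in S$ there is some $g \in G$ with $g \star s_0 = s$, so $f_{s_0}$ is surjective; because $G$ is finite this also shows $S$ is finite, which is what lets me speak of the uniform distribution on $S$ at all. My goal is to show that pushing the uniform distribution on $G$ through $f_{s_0}$ yields the uniform distribution on $S$, and for this it suffices to prove that every fiber $f_{s_0}^{-1}(s)$ has the same cardinality.

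The key step is to identify the fibers with cosets of the stabilizer. Let $H \coloneqq \{g \in G : g \star s_0 = s_0\}$; using the identity and compatibility axioms one checks that $H$ is a subgroup of $G$. I claim each nonempty fiber of $f_{s_0}$ is a left coset of $H$: if $g \star s_0 = g' \star s_0 = s$, then applying $g^{-1}$ and compatibility gives $(g^{-1} g') \star s_0 = s_0$, so $g^{-1} g' \in H$ and hence $g' \in gH$; conversely, every element of $gH$ maps to $s$. Therefore every nonempty fiber has exactly $|H|$ elements. By surjectivity every $s \in S$ has a nonempty fiber, so in particular $|G| = |S| \cdot |H|$.

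The conclusion is then a one-line computation: for $g \gets G$ and any fixed $s \in S$,
\begin{align*}
\Pr_{g \gets G}[g \star s_0 = s] = \frac{|f_{s_0}^{-1}(s)|}{|G|} = \frac{|H|}{|G|} = \frac{1}{|S|},
\end{align*}
which is precisely the probability assigned to $s$ by the uniform distribution $s_i \gets S$. Since this holds for every $s \in S$, the two distributions coincide, as claimed. I do not anticipate a genuine obstacle here, as this is the standard orbit--stabilizer argument; the only subtlety worth flagging is that the argument relies only on transitivity of the action (which yields surjectivity and the equal-size fibers), so the faithfulness hypothesis is not actually needed for this particular conclusion and is presumably retained only to match the ambient \emph{regular} setting in which the lemma is applied.
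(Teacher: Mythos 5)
Your proof is correct and follows exactly the route the paper intends: the paper's own (one-line) proof says to consider the stabilizer subgroup $H = \{g : g \star s_0 = s_0\}$ and the left cosets $\{gH\}$, which is precisely the orbit--stabilizer fiber-counting argument you carried out in full. Your closing observation is also accurate --- only transitivity and finiteness of $G$ are used, so the faithfulness hypothesis in the lemma statement is indeed superfluous for this conclusion.
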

\begin{proof}
The proof is easily obtained by considering a subgroup $H = \{g : g \star s_0 = s_0\}$ and left cosets $\{g H\}$ induced by $H$ and using the facts in above. 
\end{proof}

We then review \emph{effective group actions} in \cite{AC:ADMP20}.  
\begin{definition}[{Effective group actions (EGAs) \cite[Definition 3.4]{AC:ADMP20}}]
We say that a group action $(G,S,\star)$ is \emph{effective} if the following properties are satisfied: 
\begin{enumerate}
\item The group $G$ is finite and there exist efficient algorithms for:
    \begin{enumerate}
    \item Membership testing, that is, to decide if a given bit-string represents a valid group element in $G$ or not. 
    \item Equality testing, that is, to decide if two bit-strings represent the same group element in $G$ or not. 
    \item Sampling, that is, to sample an element $g$ from a distribution that is statistically close to the uniform over $G$. 
    \item Operation, that is, to compute $gh$ from $g,h \in G$. 
    \item Inversion, that is, to compute $g^{-1}$ from $g \in G$.
    \end{enumerate}
\item The set $X$ is finite and there exist efficient algorithms for:
    \begin{enumerate}
    \item Membership testing, that is, to decide if a given bit-string represents a valid set element in $S$ or not. 
    \item Unique representation, that is, given any $s \in S$, to compute a string $\hat{s}$ that canonically represents $s$.
    \end{enumerate}
\item Origin: There exists an element $s_0 \in S$, called the origin, such that its representation is known in public. 
\item Operation $\star$: There exists an efficient algorithm that takes $g \in G$ and $s \in S$ and outputs $g \star s$. 
\end{enumerate}
\end{definition}

We next define several computational assumptions of group actions. 
\begin{definition}[Assumptions]
\mbox{}
\begin{enumerate}
\item Pseudorandom: A GA is called \emph{pseudorandom (PR)} if 
\[
\{(s_0, g \star s_0): g \gets G\} \approx_c \{(s_0, s_1): s_1 \gets S\}. 
\]
\item Weakly pseudorandom: A GA is called \emph{weakly pseudorandom (wPR)} if for any polynomial $Q = Q(\secp)$, 
\[
\{(s_i, g \star s_i): g \gets G, s_i \gets S\} \approx_c \{(s_i, s'_i): s_i,s'_i \gets S\}. 
\]
\item Decisional Diffie-Hellman (DDH): A GA is called \emph{Decisional Diffie-Hellman (DDH)} if 
\[
\{(s_0, \tilde{g} \star s_0, g \star s_0, (\tilde{g}g) \star s_0): \tilde{g},g \gets G\}
\approx_c \{(s_0, \tilde{g} \star s_0, g \star s_0, h \star s_0): \tilde{g},g,h \gets G\}.
\]
\item Naor-Reingold (NR): A GA is called \emph{Naor-Reingold (NR)} if for any polynomial $Q = Q(\secp)$, 
\[
\{(g_i \star s_0, (\tilde{g} g_i) \star s_0): \tilde{g},g_i \gets G \} 
\approx_c \{(g_i \star s_0, h_i \star s_0): g_i,h_i \gets G\}. 
\]
\end{enumerate}

\end{definition}

\paragraph{NR-style PRF.}
Let $(G,S,\star)$ be an EGA. 
We define $f \colon G^{\ell+1} \times \bit^\ell \to S$ as 
\[
f_{g_0,\dots,g_\ell}(x_1,\dots,x_\ell) \coloneqq
 (g_\ell^{x_\ell} \cdot \dots \cdot g_1^{x_1} \cdot g_0) \star s_0. 
\]
We say that this function is PRF is this $f$ is computationally indistinguishable with a random function $f' \colon x \in \bit^\ell \mapsto s_x \in S$, where $s_x \gets S$ for each $x \in \bit^\ell$. 

By adopting the proof for the NR-style PRFSG~(\cref{thm:PRNR->PRFSG}), 
 we obtain the following theorem for the NR-style PRF $f$: 
\begin{theorem} \label{thm:PR+NR->PRF}
Let $(G,S,\star)$ be an EGA. 
If it is PR and NR, then the function $f$ is a PRF. 
\end{theorem}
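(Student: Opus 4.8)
The plan is to transplant the three-game architecture of the PRFSG proof (\cref{thm:PRNR->PRFSG}) into the classical, copy-free setting. Define $\mathsf{Real}$ to be the PRF game whose oracle is $f_{g_0,\dots,g_\ell}(\cdot)$ for a freshly sampled key; define $\mathsf{Hybrid}$ to be the game whose oracle, on a fresh input $x$, samples $h_x \gets G$ and returns $h_x \star s_0$ (answering consistently on repeats); and define $\mathsf{Ideal}$ to be the game whose oracle implements the truly random function $f'\colon x \mapsto s_x$ with $s_x \gets S$. Since $s_0$ is the public origin of the EGA, every sample the adversary sees lives in $S$ and is efficiently recognizable, so all three games are well-defined. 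It then suffices to show $\mathsf{Real} \approx_c \mathsf{Hybrid}$ from NR and $\mathsf{Hybrid} \approx_c \mathsf{Ideal}$ from PR.

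For $\mathsf{Real} \approx_c \mathsf{Hybrid}$ I would follow \cref{lem:PRNR->PRFSG:NR->Real-Hybrid} and \cref{claim:PRFS-hybrid} essentially verbatim, deleting the $\otimes t$ copies and replacing each quantum state $g\ket{s}$ by the set element $g \star s_0$. Concretely, introduce hybrids $\mathsf{Game}_0,\dots,\mathsf{Game}_\ell$ in which the first $j$ input bits are ``expanded'': the challenger assigns to each distinct length-$j$ prefix a fresh value $g_{\mathrm{prefix}} \star s_0$ (reused across queries sharing that prefix) and then applies the structured suffix $g_\ell^{x[\ell]}\cdots g_{j+1}^{x[j+1]}$, so that $\mathsf{Game}_0 = \mathsf{Real}$ and $\mathsf{Game}_\ell = \mathsf{Hybrid}$. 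The step $\mathsf{Game}_j \to \mathsf{Game}_{j+1}$ reduces to distinguishing $D_{\subNR,0}$ from $D_{\subNR,1}$: the reduction draws one NR pair $(g_q \star s_0,\, y_q)$ per distinct length-$j$ prefix (at most $Q$ pairs, which NR supplies for any polynomial $Q$), treats the hidden $\tilde g$ as $g_{j+1}$, and answers a query by applying $g_\ell^{x[\ell]}\cdots g_{j+2}^{x[j+2]}$ to the first component when $x[j+1]=0$ and to $y_q$ when $x[j+1]=1$. If $y_q = \tilde g g_q \star s_0$ this reproduces $\mathsf{Game}_j$, and if $y_q = h_q \star s_0$ it reproduces $\mathsf{Game}_{j+1}$.

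For $\mathsf{Hybrid} \approx_c \mathsf{Ideal}$ I would invoke the classical multi-sample version of PR, the direct analogue of \cref{lem:PR_Q}, proved by the same hybrid argument (the reduction can top up or discard $h \star s_0$ and $s \gets S$ samples on its own, since $s_0$ is public and $S$ is efficiently sampleable). This converts the per-input answers $h_x \star s_0$ of $\mathsf{Hybrid}$ into the uniform $s_x \gets S$ answers of $\mathsf{Ideal}$. I note that if the action were additionally regular, \cref{lem:GA:PerfectRandom} would already make $\mathsf{Hybrid}$ and $\mathsf{Ideal}$ statistically identical and PR would be unnecessary for this step; but since the theorem only postulates PR and NR, routing this step through PR keeps the argument free of any regularity hypothesis.

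The bulk of the work, and the only place requiring real care, is the multi-query bookkeeping in the Real-to-Hybrid reduction, exactly as in Naor--Reingold: the challenger and the reduction must answer repeated length-$j$ prefixes consistently so that the simulated oracle is a genuine function, which is why NR is stated with the full polynomial sample count $Q$. Unlike the quantum PRFSG case, there is no obstruction from state copies or tomography, so no restriction to classical queries is forced (indeed a classical-query PRF secure against quantum distinguishers is already quantum-query secure). Hence the classical statement is, if anything, cleaner than its quantum counterpart, and the proof goes through whenever $(G,S,\star)$ meets the EGA efficiency requirements.
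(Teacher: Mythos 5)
Your proof is correct and is essentially the paper's own: the paper establishes \cref{thm:PR+NR->PRF} precisely by transplanting the argument of \cref{thm:PRNR->PRFSG} --- the same $\mathsf{Real}$/$\mathsf{Hybrid}$/$\mathsf{Ideal}$ decomposition, the same prefix hybrids $\mathsf{Game}_0,\dots,\mathsf{Game}_\ell$ with the NR reduction of \cref{claim:PRFS-hybrid}, and the same multi-sample-PR step mirroring \cref{lem:PR_Q} --- so your write-up matches it step for step. Two minor cautions that do not change the verdict: the EGA definition only guarantees membership testing and unique representation for $S$, so the efficient sampling from $S$ invoked in your multi-sample-PR hybrid is an implicit extra assumption (one the paper also makes tacitly, this being exactly the role $t$-designs play in the quantum \cref{lem:PR_Q}); and your closing parenthetical that a classical-query PRF secure against quantum distinguishers is automatically quantum-query secure holds only for $O(\log(\secp))$-length inputs and is otherwise contradicted by Zhandry's separation, which the paper itself recalls, though this aside is not load-bearing for the proof.
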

We then review the proofs in \cite{AC:BonKogWoo20} and \cite{AC:ADMP20}
 and weaken the requirements of them. 

\subsection{BKW20 Proof}
Boneh~et~al.~\cite{AC:BonKogWoo20} assumed that a group action is transitive and faithful and $G$ is commutative. 
\begin{theorem}[{\cite[Section 8]{AC:BonKogWoo20}, adapted}]
Let $(G,S,\star)$ be an EGA. 
Suppose that the EGA is transitive and faithful and $G$ is commutative. 
If the EGA is DDH, then the function $f$ is a PRF. 
\end{theorem}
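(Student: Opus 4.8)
The plan is to reduce the statement to \cref{thm:PR+NR->PRF}, which already establishes that any EGA that is simultaneously pseudorandom (PR) and Naor--Reingold (NR) yields a secure NR-style PRF $f$. Thus it suffices to prove that, under the stated hypotheses (transitive, faithful, $G$ commutative, and DDH), the EGA is both PR and NR. This splits the work into two independent claims, of which only one is substantive.

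First I would dispatch PR, which in fact holds \emph{perfectly} and uses only transitivity and faithfulness. Since an EGA has finite $G$, \cref{lem:GA:PerfectRandom} tells us that for the origin $s_0$ the distribution of $g \star s_0$ with $g \gets G$ coincides exactly with the uniform distribution of $s_1 \gets S$. Consequently the two PR distributions $\{(s_0, g \star s_0) : g \gets G\}$ and $\{(s_0, s_1) : s_1 \gets S\}$ are identical, so PR holds statistically, hence computationally. This is precisely the step where the classical regularity-type structure rescues us: in the quantum analogue \cref{lem:com+DDH->NR} the corresponding PR property had to be assumed separately, because no analogue of \cref{lem:GA:PerfectRandom} is available when $S$ is a set of Haar-random states.

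Second, and this is the substantive part, I would derive NR from DDH and commutativity of $G$ by transplanting the hybrid argument of \cref{lem:com+DDH->NR} to the classical setting. Concretely, for $i = 0,\dots,Q$ define hybrid distributions $\bar{D}_i$ over $\{(\phi_q,\psi_q)\}_{q \in [Q]}$ in which, after sampling $\tilde{g} \gets G$ and $s_0 \gets S$, the first $i$ coordinates are independent random pairs $(\phi_q,\psi_q) = (g_q \star s_0, h_q \star s_0)$ while the remaining coordinates are correlated pairs $(g_q \star s_0, (\tilde{g} g_q) \star s_0)$; then $\bar{D}_0 = D_{\subNR,0}$ and $\bar{D}_Q = D_{\subNR,1}$. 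To show $\bar{D}_{i-1} \approx_c \bar{D}_i$ from a single DDH challenge $(s_0, \tilde{g} \star s_0, g \star s_0, h \star s_0)$, the reduction places $(g \star s_0, h \star s_0)$ at coordinate $i$, fills the coordinates below $i$ with fresh independent pairs, and fills the coordinates above $i$ with $(g_q \star s_0, (g_q \tilde{g}) \star s_0)$ for fresh $g_q \gets G$. The key point is that commutativity of $G$ gives $(g_q \tilde{g}) \star s_0 = (\tilde{g} g_q) \star s_0$, so these upper coordinates acquire the required correlated form; when $h = \tilde{g} g$ the middle coordinate is correlated and the reduction simulates $\bar{D}_{i-1}$, whereas when $h$ is uniform it simulates $\bar{D}_i$. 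Summing over the $Q$ hybrid steps yields $D_{\subNR,0} \approx_c D_{\subNR,1}$, i.e.\ NR.

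Combining PR and NR with \cref{thm:PR+NR->PRF} shows that $f$ is a PRF, completing the proof. The main obstacle is the NR step: one must verify that the re-randomization is correctly distributed (which relies on $G$ being a group, so that $g_q$ and $g_q \tilde{g}$ are both uniform) and that commutativity is exactly what aligns the correlation direction of the DDH challenge with that of the NR distributions. Transitivity and faithfulness play no further role here beyond their contribution to PR.
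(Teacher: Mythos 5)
Your proof is correct and takes essentially the same route as the paper: the hybrid/re-randomization argument you transplant for the NR step is precisely \cref{lem:EGA:com+DDH->NR}, the reduction to \cref{thm:PR+NR->PRF} is how the paper derives its neighboring corollary, and your observation that transitivity plus faithfulness yields \emph{perfect} PR is exactly \cref{lem:GA:PerfectRandom}, which the paper itself invokes for the same bridging purpose (cf.\ \cref{lem:EGA:trans+faith+wPR->NR}). The only cosmetic difference is that the paper attributes the theorem to BKW20 and states the PR-based version as a corollary, whereas you assemble the paper's own lemmas into a self-contained proof; no step is missing.
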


The following lemma (\cref{lem:EGA:com+DDH->NR}) shows that if $G$ is commutative and the EGA is DDH, then it is NR. 
Combining the lemma with \cref{thm:PR+NR->PRF}, we obtain the following corollary. 
\begin{corollary}
Let $(G,S,\star)$ be an EGA. 
If $G$ is commutative, and the EGA is PR and DDH, then the function $f$ is a PRF. 
\end{corollary}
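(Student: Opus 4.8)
The plan is to obtain the corollary as an immediate combination of the two preceding results, \cref{lem:EGA:com+DDH->NR} and \cref{thm:PR+NR->PRF}, so that the corollary itself carries essentially no new technical content; the substantive arguments all live in those two statements. The only thing to verify is that the hypotheses chain together correctly. This mirrors exactly the QGA-side corollary following \cref{thm:PRNR->PRFSG}, which combines \cref{lem:com+DDH->NR} with \cref{thm:PRNR->PRFSG} to turn ``commutative $+$ DDH'' into a PRFSG; here I would run the analogous combination for the classical NR-style PRF $f$.

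Concretely, I would first invoke \cref{lem:EGA:com+DDH->NR}. Under the corollary's hypotheses that $G$ is commutative and that the EGA satisfies the DDH assumption, this lemma yields that the EGA is NR. Thus the corollary's two algebraic/computational hypotheses (commutativity of $G$ together with DDH) immediately deliver the NR property. The corollary separately assumes the PR property directly, so at this point the EGA is known to be both PR and NR.

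Next, I would feed these two properties into \cref{thm:PR+NR->PRF}, which states that an EGA that is simultaneously PR and NR gives rise to a PRF via the map $f$. This closes the argument: $f$ is a PRF. I expect no obstacle specific to the corollary; the one bookkeeping point worth checking is that the NR notion \emph{produced} by \cref{lem:EGA:com+DDH->NR} is literally the same NR assumption \emph{consumed} by \cref{thm:PR+NR->PRF}, namely the indistinguishability of $\{(g_i \star s_0, (\tilde{g}g_i) \star s_0)\}$ from $\{(g_i \star s_0, h_i \star s_0)\}$ over the same sample count $Q$ and the same origin $s_0$, which holds by the shared definition. The genuinely hard work, which is \emph{not} in the corollary, is located in the two cited results: the hybrid argument inside \cref{lem:EGA:com+DDH->NR} that re-randomizes a single DDH challenge across all $Q$ coordinates using commutativity of $G$, and the adaptation of the Naor--Reingold bit-by-bit hybrid to the group-action setting underlying \cref{thm:PR+NR->PRF}.
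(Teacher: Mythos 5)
Your proposal is correct and matches the paper's own derivation exactly: the paper obtains this corollary precisely by applying \cref{lem:EGA:com+DDH->NR} (commutativity plus DDH yields NR) and then combining the resulting NR property with the assumed PR property via \cref{thm:PR+NR->PRF}. Your bookkeeping check that the NR notion produced and consumed is the same (same $Q$, same origin $s_0$) is sound and consistent with the shared definition in the paper.
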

\begin{lemma} \label{lem:EGA:com+DDH->NR}
Let $(G,S,\star)$ be an EGA. 
If $G$ is commutative and the EGA is DDH, then the EGA is NR. 
\end{lemma}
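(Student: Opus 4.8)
The plan is to mirror the hybrid argument used in the proof of \cref{lem:com+DDH->NR} for the quantum case, now working with set elements in place of quantum states. Recall that the NR assumption compares the distribution $D_{\mathrm{NR},0}$, in which each of the $Q$ samples has the form $(g_q \star s_0, (\tilde{g} g_q) \star s_0)$ for a single global secret $\tilde{g} \gets G$, against $D_{\mathrm{NR},1}$, in which each sample is $(g_q \star s_0, h_q \star s_0)$ with independent $g_q, h_q \gets G$. For $i = 0, 1, \dots, Q$ I would define a hybrid distribution $\bar{D}_i$ in which the first $i$ samples are drawn in the ``random'' form $(g_q \star s_0, h_q \star s_0)$ and the remaining $Q - i$ samples in the ``real'' form $(g_q \star s_0, (\tilde{g} g_q) \star s_0)$ using a common $\tilde{g} \gets G$. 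Then $\bar{D}_0 = D_{\mathrm{NR},0}$ and $\bar{D}_Q = D_{\mathrm{NR},1}$, so it suffices to establish $\bar{D}_{i-1} \approx_c \bar{D}_i$ for each $i \in [Q]$ and sum the $Q$ steps by the triangle inequality.

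The single hybrid step is a reduction to the DDH assumption. Given a DDH challenge $(s_0, \tilde{g} \star s_0, g \star s_0, h \star s_0)$, where either $h = \tilde{g} g$ or $h \gets G$ is independent, the reduction $\mathcal{B}$ assembles $Q$ samples as follows. For $q < i$ it samples fresh $g_q, h_q \gets G$ and outputs the random-form pair $(g_q \star s_0, h_q \star s_0)$. For $q = i$ it plants the challenge, outputting $(g \star s_0, h \star s_0)$. For $q > i$ it samples $g_q \gets G$ and outputs $(g_q \star s_0,\, g_q \star (\tilde{g} \star s_0))$, where the second coordinate is obtained by applying $g_q$ (via the action $\star$) to the given element $\tilde{g} \star s_0$; note that $\mathcal{B}$ never needs $\tilde{g}$ itself, only $\tilde{g} \star s_0$, which is supplied by the challenge.

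The crucial point, and the only place commutativity enters, is the computation of these tail samples. By compatibility of the action and commutativity of $G$ we have
\[
g_q \star (\tilde{g} \star s_0) = (g_q \tilde{g}) \star s_0 = (\tilde{g} g_q) \star s_0,
\]
so the tail samples are exactly of the real form with secret shift $\tilde{g}$. Consequently, when $h = \tilde{g} g$ the $i$-th sample is also real (with $g_i = g$), and $\mathcal{B}$ reproduces $\bar{D}_{i-1}$; when $h$ is independent the $i$-th sample is random, and $\mathcal{B}$ reproduces $\bar{D}_i$. Hence any distinguisher for $\bar{D}_{i-1}$ and $\bar{D}_i$ yields a DDH distinguisher of equal advantage, which completes the argument. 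I do not expect a genuine obstacle: unlike the quantum setting there are no complications from multiple copies of states or from $t$-design approximations, so the only thing to get right is the bookkeeping ensuring that the planted challenge together with the commutativity-rewritten tail jointly produce precisely the two adjacent hybrids. (One may also wish to observe, as in \cref{lem:GA:PerfectRandom}, that transitivity and faithfulness are not needed for this particular implication, since the NR and DDH distributions are both phrased directly in terms of $g \star s_0$.)
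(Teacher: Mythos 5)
Your proposal is correct and follows essentially the same route as the paper's own proof: the identical hybrid distributions $\bar{D}_0,\dots,\bar{D}_Q$ interpolating between the NR distributions, with the DDH challenge planted at position $i$ and the tail samples produced as $g_q \star (\tilde{g}\star s_0)$, rewritten via commutativity to $(\tilde{g}g_q)\star s_0$. Your observation that the reduction only ever needs $\tilde{g}\star s_0$ rather than $\tilde{g}$ itself is exactly the point the paper's reduction relies on as well.
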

\begin{proof}
Let us consider hybrid distributions $\bar{D}_i$: 
For $j = 1,\dots,i$, $(a_i,b_i) = (g_i \star s_0, h_i \star s_0)$
 and $j = i+1,\dots,Q$,  $(a_i,b_i) = (g_i \star s_0, \tilde{g}g_i \star s_0)$. 
By using the following claim, we have 
$\bar{D}_0 \approx_c \bar{D}_1 \approx_c \dots \approx_c \bar{D}_Q$ 
if the DDH assumption holds. 
\end{proof}
\begin{claim}
If $(G,S,\star)$ is an EGA, $G$ is commutative, and the DDH assumptions hold,
for $i = 1,\dots,Q$, 
$\bar{D}_{i-1} \approx_c \bar{D}_i$ holds. 
\end{claim}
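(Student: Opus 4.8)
The plan is to run the same reduction as in the quantum analogue (the claim accompanying \cref{lem:com+DDH->NR}), reinterpreting each application $g|s\rangle$ as the set action $g \star s$. Assume a QPT distinguisher $\cA$ that tells $\bar{D}_{i-1}$ from $\bar{D}_i$; I will build a QPT distinguisher $\cB$ for the two DDH distributions whose advantage equals that of $\cA$.

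On input a DDH challenge $(s_0, \tilde{g} \star s_0, g \star s_0, h \star s_0)$, with either $h = \tilde{g}g$ or $h \gets G$ independent, $\cB$ assembles a tuple $\{(a_q, b_q)\}_{q \in [Q]}$ as follows. For $q < i$ it samples fresh $g_q, h_q \gets G$ and sets $(a_q, b_q) \coloneqq (g_q \star s_0, h_q \star s_0)$, reproducing the ``random'' block shared by both hybrids. For $q = i$ it embeds the challenge, $(a_i, b_i) \coloneqq (g \star s_0, h \star s_0)$. For $q > i$ it samples fresh $g_q \gets G$ and sets $a_q \coloneqq g_q \star s_0$ and $b_q \coloneqq g_q \star (\tilde{g} \star s_0) = (g_q \tilde{g}) \star s_0$, the last equality being compatibility of the action; note that $\cB$ can form $b_q$ by acting with its own $g_q$ on the challenge element $\tilde{g} \star s_0$, since it never sees $\tilde{g}$ itself. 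Finally $\cB$ runs $\cA$ on $\{(a_q,b_q)\}_{q \in [Q]}$ and echoes $\cA$'s decision.

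The commutativity of $G$ is exactly what makes the last block correct: $(g_q \tilde{g}) \star s_0 = (\tilde{g} g_q) \star s_0$, so for $q > i$ the pair $(a_q, b_q)$ is distributed precisely as the NR-style pair $(g_q \star s_0, (\tilde{g} g_q) \star s_0)$, with the single $\tilde{g}$ shared across all these positions just as in the hybrids. When $h = \tilde{g} g$, the $i$-th pair is also NR-style, so positions $i,\dots,Q$ are NR-style and $1,\dots,i-1$ are random, which is exactly $\bar{D}_{i-1}$ (with $g$ playing the role of the uniform $g_i$). When $h \gets G$ is independent, the $i$-th pair is random, giving positions $1,\dots,i$ random and $i+1,\dots,Q$ NR-style, i.e.\ $\bar{D}_i$. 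Hence $\cB$'s DDH advantage equals $\cA$'s distinguishing advantage, and DDH security closes the gap.

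The step I expect to require the most care is verifying that this single embedded DDH tuple simulates both hybrids \emph{perfectly}: one must check that $g$ from the challenge is uniform and independent of the fresh $g_q$ (so it is a faithful copy of $g_i$), and that reusing the single challenge $\tilde{g}$ across every $q \geq i$ matches the hybrids, in which one $\tilde{g}$ is sampled once and shared by the whole NR-style block. Everything else is bookkeeping; beyond the public origin $s_0$ and the sampling and action algorithms of the EGA, no regularity or effectiveness property is needed.
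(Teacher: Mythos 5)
Your proposal is correct and follows essentially the same route as the paper's own proof: the identical single-embedding reduction that places the DDH challenge at position $i$, samples fresh random pairs $(g_q \star s_0, h_q \star s_0)$ for $q < i$, builds $(g_q \star s_0, g_q \star (\tilde{g} \star s_0))$ for $q > i$ from the challenge element $\tilde{g} \star s_0$, and invokes commutativity to identify $(g_q \tilde{g}) \star s_0$ with $(\tilde{g} g_q) \star s_0$. Your closing checks (uniformity and independence of the challenge $g$, and the single shared $\tilde{g}$ across the NR-style block) are exactly the points the paper's perfect-simulation argument relies on, so there is nothing to add.
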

\begin{proof}
Suppose that there exists $\cA$ distinguishing $\bar{D}_{i-1}$ from $\bar{D}_i$. 
We construct an adversary $\cB$ against the DDH assumption as follows:
\begin{itemize}
\item Given a sample $(s_0, \tilde{g} \star s_0, g \star s_0, h \star s_0)$, where $h = \tilde{g}g$ or random, $\cB$ prepares a sample $\{(a_j,b_j)\}_{j \in [Q]}$ as follows:  
\begin{itemize}
\item for $j=1,\dots,i-1$, take random $g_j,h_j \gets G$ and set $(a_j,b_j) \coloneqq (g_j \star s_0, h_j \star s_0)$; 
\item for $j=i$, set $(a_j,b_j) = (g \star s_0, h \star s_0)$;  
\item for $j=i+1,\dots,Q$, take random $g_j \gets G$ and set $(a_j,b_j) = (g_j \star s_0, g_j \tilde{g} \star s_0)$. 
\end{itemize}
\item It runs $\cA$ on input $\{(a_j,b_j)\}_{j \in [Q]}$ and outputs $\cA$'s decision. 
\end{itemize}
We note that, due to commutativity of $G$, the last $Q-i$ samples are equivalent to $(g_j \star s_0, \tilde{g} g_j \star s_0)$. 
If $h = \tilde{g}g$, then $(a_j,b_j) = (g \star s_0, (\tilde{g} g) \star s_0)$ and $\cB$ perfectly simulates the distribution $\bar{D}_{i-1}$ since $G$ is a group and the distribution of 
If $h$ is random, then $\cB$ perfectly simulates the distribution $\bar{D}_{i}$. 
Thus, $\cB$'s advantage is equivalent to $\cA$'s advantage distinguishing $\bar{D}_{i-1}$ and $\bar{D}_i$. 
\end{proof}

\subsection{ADMP20 Proof}
Alamati~et~al.~\cite{AC:ADMP20} assumed that a group action is weakly pseudorandom and $G$ is regular and commutative. 
\begin{theorem}[{\cite[Section 3.1 and Section 4.4]{AC:ADMP20}, adapted}]
Let $(G,S,\star)$ be an EGA. 
Suppose that the EGA is regular and $G$ is commutative.\footnote{\cite[Section 3.1 and Section 4.4]{AC:ADMP20}}
If the EGA is weakly pseudorandom, then the function $f$ is a PRF. 
\end{theorem}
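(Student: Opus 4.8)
The plan is to reduce the statement to \cref{thm:PR+NR->PRF}, which already shows that a PR and NR EGA yields a PRF $f$. Thus it suffices to derive both the PR and the NR assumptions from regularity together with weak pseudorandomness. The workhorse throughout is \cref{lem:GA:PerfectRandom}: since a regular action is transitive and free (hence faithful), the evaluation map $\phi_{s_0}\colon g \mapsto g \star s_0$ is a bijection from $G$ onto $S$ (transitivity gives surjectivity, freeness gives injectivity), so that sampling $s \gets S$ uniformly is \emph{identical} in distribution to sampling $g \gets G$ and outputting $g \star s_0$.

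First I would dispose of PR, which in fact holds unconditionally. Pushing the bijection $\phi_{s_0}$ through the second coordinate, the distribution $\{(s_0,s_1): s_1 \gets S\}$ is literally the distribution $\{(s_0, g\star s_0): g \gets G\}$, so the two PR distributions coincide. Hence the regularity hypothesis alone yields (statistical, indeed perfect) pseudorandomness, with no appeal to any computational assumption.

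The heart of the argument is to derive NR from weak pseudorandomness. Here I would again route every uniform set element through $\phi_{s_0}$: in the weakly-pseudorandom experiment I replace each independently sampled $s_i \gets S$ by $g_i \star s_0$ with a fresh $g_i \gets G$, and identify the single shared secret $g$ of the wPR game with the single shared $\tilde{g}$ of the NR game. Under this rewriting the left-hand wPR distribution $\{(s_i, g \star s_i)\}$ becomes $\{(g_i \star s_0, (g g_i)\star s_0)\} = \{(g_i \star s_0, (\tilde{g} g_i)\star s_0)\}$, which is exactly the left-hand NR distribution, while the right-hand wPR distribution $\{(s_i,s_i')\}$ becomes $\{(g_i \star s_0, h_i \star s_0)\}$, the right-hand NR distribution. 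Because $\phi_{s_0}$ is a distribution-preserving bijection applied coordinatewise, these are exact equalities of distributions, so the computational indistinguishability guaranteed by weak pseudorandomness transports verbatim to NR. Combining PR and NR and invoking \cref{thm:PR+NR->PRF} then finishes the proof.

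The main obstacle is the bookkeeping in the NR step: one must verify that the \emph{joint} distribution over all $Q$ samples, including the correlation induced by the single shared group element, is preserved, not merely the marginals. Freeness is precisely what guarantees that $\phi_{s_0}$ carries the uniform measure on $G$ to the uniform measure on $S$ (so the substitution is lossless), while transitivity guarantees that $\phi_{s_0}$ reaches all of $S$. I would also remark that, unlike in the BKW20 route (\cref{lem:EGA:com+DDH->NR}), commutativity of $G$ is never invoked: the identity $g \star (g_i \star s_0) = (g g_i)\star s_0$ already matches the NR form with $\tilde{g} := g$, so the commutativity hypothesis can in fact be dropped here, consistent with the earlier remark that the ADMP20 approach extends to non-commutative groups.
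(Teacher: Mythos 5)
Your proof is correct, and it shares the overall decomposition that the paper's own derivation uses around this theorem: reduce to \cref{thm:PR+NR->PRF}, obtain PR unconditionally from the structure of the action, and transport weak pseudorandomness to NR. Where you genuinely diverge is in the NR step. The paper factors this through \cref{lem:GA:PerfectRandom} plus the computational chain of \cref{lem:EGA:wPR+PR->NR}, which applies wPR twice and finishes with the group-translation identity $(g_i \star s_0, h_i g_i \star s_0) \equiv (g_i \star s_0, h_i \star s_0)$; that chain is built to survive when PR is merely computational. You instead exploit the full regularity hypothesis to make $\phi_{s_0} \colon g \mapsto g \star s_0$ a bijection, so both wPR distributions are \emph{identical} (not just indistinguishable) to the corresponding NR distributions, including the joint correlation induced by the single shared $\tilde{g}$, and the indistinguishability transfers with zero loss and no reduction at all --- a simpler and tighter argument. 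What you pay for this is generality: the paper's \cref{lem:GA:PerfectRandom} requires only that the action be transitive and faithful, in which case $\phi_{s_0}$ need not be injective and the uniform pushforward comes from the stabilizer-coset argument (all fibers of $\phi_{s_0}$ are cosets of $H = \{g : g \star s_0 = s_0\}$, hence of equal size); this weaker hypothesis is what powers the paper's \cref{lem:EGA:trans+faith+wPR->NR}. Relatedly, your remark that freeness is ``precisely'' what guarantees measure preservation overstates its role: transitivity with finiteness already yields the uniform pushforward, and freeness only upgrades it to a bijection --- harmless under the regularity hypothesis, but worth noting. Your observation that commutativity is never invoked matches the paper's own remark that the ADMP20 route extends to non-commutative $G$.
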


As in our discussion in the introduction, we do not require commutativity of $G$. 
It is easy to show that if the EGA is wPR and PR, then the EGA is NR (\cref{lem:EGA:wPR+PR->NR} below). 
In addition, due to \cref{lem:GA:PerfectRandom}, if the EGA is wPR and $G$ is transitive and faithful, then the EGA is NR (\cref{lem:EGA:trans+faith+wPR->NR}). 
Thus, we obtain the following corollary of \cref{thm:PR+NR->PRF}. 
\begin{corollary}
Let $(G,S,\star)$ be an EGA.
\begin{itemize}
\item If it is PR and wPR, then the function $f$ is a PRF. 
\item If it is wPR and $G$ is transitive and faithful, then the function $f$ is a PRF. 
\end{itemize}
\end{corollary}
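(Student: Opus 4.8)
The plan is to derive both bullets from \cref{thm:PR+NR->PRF}, which already shows that an EGA that is simultaneously PR and NR yields a PRF. Thus it suffices, in each of the two cases, to verify that the stated hypotheses force the EGA to be both PR and NR; the PRF conclusion is then immediate. For the first bullet this amounts to establishing that wPR together with PR implies NR (PR itself is assumed), and for the second bullet it amounts to showing that transitivity and faithfulness already supply PR ``for free'' while wPR together with transitivity and faithfulness supply NR.

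The heart of the argument is the implication $\mathrm{wPR}+\mathrm{PR}\Rightarrow\mathrm{NR}$, which I would prove by interpolating between $D_{\subNR,0}$ and $D_{\subNR,1}$ through the intermediate distribution $\{(s_i,\tilde g\star s_i)\}_{i\in[Q]}$ with $s_i\gets S$ and a single shared $\tilde g\gets G$, paralleling the QGA argument of \cref{lem:PRetc->NR} but with wPR (which already fixes one shared group element) playing the combined role of Haar-DDH and Haar-PR. Concretely, I would first upgrade the single-sample PR assumption to its $Q$-sample form by a routine hybrid over $i\in[Q]$; the reductions are efficient because the effectiveness of the EGA lets one sample from $G$ and from $S$ and evaluate $\star$. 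Given this, an oblivious converter that samples $\tilde g\gets G$ and maps the second component $y_i$ of each received PR pair $(s_0,y_i)$ to $(y_i,\tilde g\star y_i)$ sends the two $Q$-sample PR distributions to $D_{\subNR,0}$ (when $y_i=g_i\star s_0$) and to $\{(s_i,\tilde g\star s_i)\}_i$ (when $y_i=s_i$), giving $D_{\subNR,0}\approx_c\{(s_i,\tilde g\star s_i)\}_i$, in the spirit of \cref{claim:D_NR0=D'_NR0}. Applying wPR then replaces $\tilde g\star s_i$ by an independent $s_i'\gets S$, and a final application of multi-sample PR (now to the $2Q$ fresh set elements) rewrites $\{(s_i,s_i')\}_i$ as $\{(g_i\star s_0,h_i\star s_0)\}_i=D_{\subNR,1}$, as in \cref{claim:D_NR1=D'_NR1}. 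Chaining the three steps yields $D_{\subNR,0}\approx_c D_{\subNR,1}$, i.e.\ NR, and the first bullet follows by feeding PR and this NR into \cref{thm:PR+NR->PRF}.

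For the second bullet I would invoke \cref{lem:GA:PerfectRandom}: since $G$ is finite and the action is transitive and faithful, the distribution of $g\star s_0$ with $g\gets G$ is \emph{identical} to that of $s\gets S$, so the PR indistinguishability $(s_0,g\star s_0)\approx(s_0,s)$ holds unconditionally, indeed statistically. With PR now available as a perfect equivalence, the same three-step interpolation as above goes through verbatim with the two PR steps being exact rather than computational, so wPR alone (on top of transitivity and faithfulness) suffices to reach NR; this is precisely the content of the referenced \cref{lem:EGA:trans+faith+wPR->NR}. Feeding this PR and NR into \cref{thm:PR+NR->PRF} then shows that $f$ is a PRF. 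The main obstacle I anticipate is bookkeeping rather than conceptual: one must ensure that every hybrid reduction is \emph{oblivious}, performing the same efficient operations ($\tilde g\star\cdot$, sampling from $G$ and $S$, and rearranging registers) regardless of which distribution it is fed, and that the single-to-multi-sample upgrade of PR is accounted for so that the advantage accumulated over the $O(Q)$ hybrids stays negligible.
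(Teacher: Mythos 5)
Your proposal is correct, and at the top level it is exactly the paper's route: both bullets are obtained by feeding PR and NR into \cref{thm:PR+NR->PRF}, where NR comes from a $\mathrm{wPR}+\mathrm{PR}\Rightarrow\mathrm{NR}$ lemma (\cref{lem:EGA:wPR+PR->NR}), and for the second bullet PR is supplied perfectly by \cref{lem:GA:PerfectRandom}, yielding \cref{lem:EGA:trans+faith+wPR->NR}. Where you genuinely diverge is inside that lemma. The paper's chain has five steps: multi-sample PR to pass from $\{(g_i\star s_0,\tilde{g}g_i\star s_0)\}_i$ to $\{(s_i,\tilde{g}\star s_i)\}_i$ (\cref{cind:wPR+PR->NR:1}); wPR to reach $\{(s_i,s_i')\}_i$; a \emph{second}, per-sample application of wPR (a $Q$-step hybrid with independent $h_i$) to reach $\{(s_i,h_i\star s_i)\}_i$; PR again to reach $\{(g_i\star s_0,h_ig_i\star s_0)\}_i$ (\cref{cind:wPR+PR->NR:2}); and finally the identity that $(g_i,h_ig_i)$ is distributed as $(g_i,h_i)$, which uses that $G$ is a group. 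Your chain is shorter: after the same first two steps you rewrite $\{(s_i,s_i')\}_i$ directly as $D_{\subNR,1}$ by one application of $2Q$-sample PR with the pairing/rearranging converter --- precisely the trick of \cref{claim:D_NR1=D'_NR1} in the QGA proof of \cref{lem:PRetc->NR}, with wPR subsuming the roles of Haar-PR and Haar-DDH, as you observe. This buys two things: you skip the per-sample wPR hybrid, and you never invoke the algebraic fact that $h_ig_i$ is uniform and independent of $g_i$, so your argument is the QGA argument transplanted verbatim to EGAs and would survive even if $G$ lacked group structure, whereas the paper's appendix proof leans on it in its last step. Both are sound, and your bookkeeping caveats (oblivious converters, the single-to-multi-sample PR upgrade with $O(Q)$ loss) match the hybrids the paper leaves implicit; the one caveat you share with the paper is that the hybrid reductions must sample from $S$, which is immediate in the second bullet via $g\star s_0$ and is assumed at the same level of rigor as the paper's own \cref{cind:wPR+PR->NR:1,cind:wPR+PR->NR:2}.
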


\begin{lemma} \label{lem:EGA:wPR+PR->NR}
Let $(G,S,\star)$ be an EGA.
If it is wPR and PR, then it is NR. 
\end{lemma}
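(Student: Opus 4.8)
The plan is to mirror the structure of the proof of \cref{lem:PRetc->NR} for QGAs, but the classical setting is cleaner, because the single assumption of weak pseudorandomness already plays the combined role that Haar-PR and Haar-DDH play in the quantum case. Writing the NR endpoints as
\[
D_0 \coloneqq \{(g_q \star s_0, (\tilde{g}g_q) \star s_0)\}_{q \in [Q]}, \qquad
D_1 \coloneqq \{(g_q \star s_0, h_q \star s_0)\}_{q \in [Q]}
\]
with $\tilde{g}, g_q, h_q \gets G$, I would introduce the two wPR distributions
\[
E_0 \coloneqq \{(s_q, \tilde{g} \star s_q)\}_{q \in [Q]}, \qquad
E_1 \coloneqq \{(s_q, s'_q)\}_{q \in [Q]}
\]
with a single $\tilde{g} \gets G$ and $s_q, s'_q \gets S$, and establish the chain $D_0 \approx_c E_0 \approx_c E_1 \approx_c D_1$.

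First I would record the multi-sample form of PR, namely $\{g_q \star s_0\}_{q \in [Q]} \approx_c \{s_q\}_{q \in [Q]}$ with $g_q \gets G$ and $s_q \gets S$. This follows from the single-sample PR assumption by a routine hybrid argument, exactly analogous to \cref{lem:PR_Q}, using that $s_0$ is a public origin so it may be handed to the distinguisher for free.

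For the hop $D_0 \approx_c E_0$, the key observation is that by compatibility $(\tilde{g}g_q)\star s_0 = \tilde{g}\star(g_q \star s_0)$, so both $D_0$ and $E_0$ are obtained from a list $\{y_q\}_q$ by sampling one fresh $\tilde{g} \gets G$ and outputting $\{(y_q, \tilde{g}\star y_q)\}_q$; the only difference is whether $y_q = g_q \star s_0$ or $y_q \gets S$. Since this transformation is efficient and oblivious (it uses only the efficient EGA operation $\star$ and uniform sampling from $G$), multi-sample PR transports through it, giving $D_0 \approx_c E_0$. The middle hop $E_0 \approx_c E_1$ is \emph{literally} the definition of weak pseudorandomness, with $\tilde{g}$ in the role of the shared secret $g$. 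For the last hop $E_1 \approx_c D_1$, I would apply multi-sample PR to $2Q$ samples, $\{z_r\}_{r\in[2Q]} \approx_c \{g_r \star s_0\}_{r \in [2Q]}$, and then rearrange the $2Q$ elements into $Q$ consecutive pairs $\{(z_{2q-1}, z_{2q})\}_{q \in [Q]}$, which is again oblivious; this turns the uniform list into $E_1$ and the $G$-orbit list into $D_1$.

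I expect no real obstacle here: unlike ADMP20 I do not invoke commutativity or regularity of the action, and unlike the quantum proof I do not need to split wPR into two assumptions, since classically a single shared $g$ applied to uniformly random set elements is already assumed indistinguishable from independent uniform pairs. The only points requiring care are \textbf{(i)} checking that each reduction is genuinely oblivious so that the PR or wPR advantage is preserved intact, and \textbf{(ii)} the standard hybrid promoting single-sample PR to its multi-sample form.
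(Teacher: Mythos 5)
Your proof is correct, and its skeleton matches the paper's: both arguments run the chain $D_0 \approx_c E_0 \approx_c E_1 \approx_c D_1$, with the first hop being multi-sample PR pushed through the oblivious map $y \mapsto (y, \tilde{g} \star y)$ with a fresh shared $\tilde{g} \gets G$ (the paper's \cref{cind:wPR+PR->NR:1}), and the middle hop being wPR verbatim. Where you genuinely diverge is the last leg. The paper does not pass directly from $E_1$ to $D_1$: it applies wPR a \emph{second} time, $Q$-fold in a hybrid, to go from $\{(s_i,s'_i)\}_{i}$ to $\{(s_i, h_i \star s_i)\}_{i}$, then uses PR (\cref{cind:wPR+PR->NR:2}) to reach $\{(g_i \star s_0, h_i g_i \star s_0)\}_{i}$, and finally invokes the group law --- $h_i g_i$ is uniform over $G$ and independent of $g_i$ --- to identify this with $D_1$. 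You instead apply multi-sample PR to a list of $2Q$ samples and pair consecutive elements, which is exactly the paper's own \emph{quantum} argument (\cref{claim:D_NR1=D'_NR1} inside the proof of \cref{lem:PRetc->NR}) transplanted to EGAs. Your route is leaner: one invocation of wPR instead of two, and no appeal to the group structure in that step --- though the gain is cosmetic for this lemma, since an EGA's $G$ is a group by definition, and the paper's route simply exploits the algebra it has available. One caveat applies equally to both proofs: promoting single-sample PR to its multi-sample form (your point (ii), the paper's ``it is easy to see'' step, the classical analogue of \cref{lem:PR_Q}) requires the hybrid reduction to manufacture the remaining samples itself, and in particular to sample uniformly from $S$, which the EGA definition in the paper does not list among the guaranteed efficient operations; since the paper's own proof (including its $Q$-fold use of wPR) makes the same implicit assumption, this is a shared gloss rather than a gap in your argument relative to theirs.
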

\begin{proof}
It is easy to see that if the EGA is PR, then we can replace ``$s_i \gets S$'' with ``$g_i \star s_0$ with $g_i \gets G$''. Thus, we have that 
\begin{align}
 &\{(s_i,g \star s_i): g \gets G, s_i \gets S\}_{i \in [Q]}
 \approx_c 
 \{(g_i \star s_0 ,g g_i \star s_0): g, g_i \gets G\}_{i \in [Q]}  \label{cind:wPR+PR->NR:1} \\ 
 &\{(s_i,h_i \star s_i): h_i \gets G, s_i \gets S\}_{i \in [Q]}
 \approx_c
 \{(g_i \star s_0, h_i g_i \star s_0): h_i, g_i \gets G\}_{i \in [Q]}. \label{cind:wPR+PR->NR:2} 
\end{align}
We obtain that 
\begin{align*}
&\{(g_i \star s_0 ,g g_i \star s_0): g, g_i \gets G\}_{i \in [Q]} \\
&\approx_c \{(s_i,g \star s_i): g \gets G, s_i \gets S\}_{i \in [Q]} &\text{(from \cref{cind:wPR+PR->NR:1})} \\ 
&\approx_c \{(s_i,s'_i): s_i,s'_i \gets S\}_{i \in [Q]}              &\text{(from wPR)} \\ 
&\approx_c \{(s_i, h_i \star s_i): h_i \gets G, s_i \gets S\}_{i \in [Q]} &\text{(from wPR)} \\
&\approx_c \{(g_i \star s_0, h_i g_i \star s_0): h_i, g_i \gets G\}_{i \in [Q]} &\text{(from \cref{cind:wPR+PR->NR:2})} \\
&\equiv \{(g_i \star s_0, h_i \star s_0): h_i, g_i \gets G\}_{i \in [Q]} &\text{($G$ is a group)}, 
\end{align*}
where we apply wPR $Q$-times to obtain third computational indistinguishability. 
\end{proof}
Recall that if $G$ is transitive and faithful, then an EGA $(G,S,\star)$ is perfectly PR (\cref{lem:GA:PerfectRandom}). 
Thus, we obtain the following corollary. 
\begin{corollary} \label{lem:EGA:trans+faith+wPR->NR}
Let $(G,S,\star)$ be an EGA.
If $G$ is transitive and faithful, and the EGA is wPR, then the EGA is NR.    
\end{corollary}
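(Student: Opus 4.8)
The plan is to derive this corollary as an immediate composition of the two preceding results, \cref{lem:GA:PerfectRandom} and \cref{lem:EGA:wPR+PR->NR}. The crucial observation is that the transitive-and-faithful hypothesis already supplies the PR property unconditionally, so that the only remaining work is to feed PR together with the assumed wPR into the lemma that converts them to NR.

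First I would invoke \cref{lem:GA:PerfectRandom}. Since finiteness of $G$ is built into the definition of an EGA, and the group action $(G,S,\star)$ is transitive and faithful by hypothesis, that lemma guarantees that for the origin $s_0 \in S$ the distribution of $s_1 \gets S$ coincides exactly with the distribution of $g \star s_0$ for $g \gets G$. Consequently the two distributions
\[
\{(s_0, g \star s_0): g \gets G\} \quad\text{and}\quad \{(s_0, s_1): s_1 \gets S\}
\]
appearing in the definition of the PR assumption are literally identically distributed. Hence the EGA is PR in the strongest possible sense (statistically, and therefore a fortiori computationally).

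With PR established, I would apply \cref{lem:EGA:wPR+PR->NR}, which states that any EGA that is simultaneously wPR and PR is NR. The EGA is wPR by hypothesis and PR by the previous step, so it is NR, which is precisely the claim.

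There is no substantive obstacle here; the argument is a direct chaining of two already-proved statements. The only point worth verifying carefully is that the hypotheses align: \cref{lem:GA:PerfectRandom} needs $G$ finite, which the EGA definition provides, and the ``perfect'' randomness it yields is merely a stronger form of the computational indistinguishability that PR requires, so substituting it into \cref{lem:EGA:wPR+PR->NR} is legitimate.
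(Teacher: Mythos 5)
Your proof is correct and matches the paper's own argument exactly: the paper likewise notes that transitivity and faithfulness make the EGA \emph{perfectly} PR via \cref{lem:GA:PerfectRandom} (finiteness of $G$ being part of the EGA definition) and then obtains NR by applying \cref{lem:EGA:wPR+PR->NR} to wPR together with this PR property. Nothing is missing; the hypothesis-matching you verify is the only point of care, and it checks out.
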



\end{document}